\definecolor{darkblue}{rgb}{0,0,0.5}
\definecolor{red}{rgb}{1,0,0}
\definecolor{darkgreen}{rgb}{0,0.5,0}
\definecolor{darkred}{rgb}{0.75,0,0}
\newcommand{\Z}{\mathbb{Z}}
\newcommand{\SAT}{3\mathrm{SAT}}
\newcommand{\SLC}{\mathrm{SLC}}
\numberwithin{equation}{section}
\newtheorem{theorem}{Theorem}
\newtheorem{lemma}[theorem]{Lemma}
\newtheorem{corollary}[theorem]{Corollary}
\newtheorem{proposition}[theorem]{Proposition}
\newtheorem*{claim*}{Claim}
\newtheorem*{conjecture*}{Conjecture}
\newtheorem*{definition*}{Definition}
\newtheorem*{theorem*}{Theorem}
\newtheorem*{remark*}{Remark}
\newtheorem*{lemma*}{Lemma}
\newtheorem*{question*}{Question}
\newtheorem{definition}[theorem]{Definition}
\newtheorem*{example*}{Example}
\newtheorem{remark}[theorem]{Remark}
\newcommand{\R}{\mathbb{R}}
\newcommand{\C}{\mathbb{C}}
\newcommand{\N}{\mathbb{N}}
\newcommand{\tr}{\operatorname{tr}}
\newcommand{\expect}{\mathbb{E}}
\newcommand{\NP}{\mathrm{NP}} %
\newcommand{\MIP}{\mathrm{MIP}} %
\newcommand{\RE}{\mathrm{RE}} %
\newcommand{\BCSMIP}{\mathrm{BCS\text{-}MIP}}
\newcommand{\poly}{\mathrm{poly}}
\newcommand{\polylog}{\mathrm{polylog}}
\let\epsilon=\varepsilon 
\newcommand{\Paren}[1]{\left(#1\right)}
\newcommand{\mc}[1]{\mathcal{#1}}
\newcommand{\mbb}[1]{\mathbb{#1}}
\newcommand{\mbbm}[1]{\mathbbm{#1}}
\renewcommand{\C}{\mathbb{C}}
\newcommand{\Id}{\mathbb{I}}
\newcommand{\SuccinctCSP}{\mathrm{SuccinctCSP}}
\DeclareMathOperator{\defect}{def}
\DeclareMathOperator{\supp}{supp}
\DeclarePairedDelimiter\parens{\lparen}{\rparen}
\DeclarePairedDelimiter\squ{[}{]}
\DeclarePairedDelimiter\abs{\lvert}{\rvert}
\DeclarePairedDelimiter\norm{\|}{\|}
\DeclarePairedDelimiter\floor{\lfloor}{\rfloor}
\DeclarePairedDelimiter\ceil{\lceil}{\rceil}
\DeclarePairedDelimiter\bra{\langle}{\rvert}
\DeclarePairedDelimiter\ket{\lvert}{\rangle}
\DeclareMathOperator*{\bigast}{\scalerel*{\ast}{\sum}}
\DeclareDocumentCommand\hsq{s o m}{
	\IfBooleanTF{#1}{
        \left|{#3}\right|^2%h\left({#3}\right)
	}{
        \IfNoValueTF{#2}{
            |{#3}|^2%h({#3})
        }{
            {#2|}{#3}{#2|}^2%h{#2(}{#3}{#2)}
        }
    }
}
\DeclareDocumentCommand\ketbra{s m g}{
	\IfBooleanTF{#1}{
		\IfNoValueTF{#3}{
			\vphantom{#2}\left\lvert{#2}\middle\rangle\!\middle\langle{#2}\right\rvert
		}{
			\vphantom{#2#3}\left\lvert{#2}\middle\rangle\!\middle\langle{#3}\right\rvert
		}
	}{
		\IfNoValueTF{#3}{
			\vphantom{#2}\left\lvert\smash{#2}\middle\rangle\!\middle\langle\smash{#2}\right\rvert
		}{
			\vphantom{#2#3}\left\lvert\smash{#2}\middle\rangle\!\middle\langle\smash{#3}\right\rvert
		}
	}
}
\DeclareDocumentCommand\braket{s m g g}{
	\IfBooleanTF{#1}{
		\IfNoValueTF{#3}{
			\vphantom{#2}\left\langle{#2}\middle\vert{#2}\right\rangle
		}{
			\IfNoValueTF{#4}{
				\vphantom{#2#3}\left\langle{#2}\middle\vert{#3}\right\rangle
			}{
				\vphantom{#2#3#4}\left\langle{#2}\middle\rvert{#3}\middle\rvert{#4}\right\rangle
			}
		}
	}{
		\IfNoValueTF{#3}{
			\vphantom{#2}\left\langle\smash{#2}\middle\vert\smash{#2}\right\rangle
		}{
			\IfNoValueTF{#4}{
				\vphantom{#2#3}\left\langle\smash{#2}\middle\vert\smash{#3}\right\rangle
			}{
				\vphantom{#2#3#4}\left\langle\smash{#2}\middle\rvert\smash{#3}\middle\rvert\smash{#4}\right\rangle
			}
		}
	}
}
\DeclareDocumentCommand\gen{s m g}{
	\IfBooleanTF{#1}{
		\IfNoValueTF{#3}{
			\vphantom{#2}\left\langle{#2}:{#2}\right\rangle
		}{
            \vphantom{#2#3}\left\langle{#2}:{#3}\right\rangle
		}
	}{
		\IfNoValueTF{#3}{
			\vphantom{#2}\left\langle\smash{#2}:\smash{#2}\right\rangle
		}{
			\vphantom{#2#3}\left\langle\smash{#2}:\smash{#3}\right\rangle
		}
	}
}
\DeclareDocumentCommand\set{s m g}{
	\IfBooleanTF{#1}{
		\IfNoValueTF{#3}{
			\vphantom{#2}\left\{{#2}\right\}
		}{
			\vphantom{#2#3}\left\{{#2}\mid{#3}\right\}
		}
	}{
		\IfNoValueTF{#3}{
			\vphantom{#2}\left\{\smash{#2}\right\}
		}{
			\vphantom{#2#3}\left\{\smash{#2}\mid\smash{#3}\right\}
		}
	}
}
\newcommand{\cnote}[1]{\textcolor{purple}{\small (Connor: #1)}}
\title[Quantum smooth label cover is undecidable]{The quantum smooth label cover problem \\ is undecidable}
\author[E.~Culf]{Eric Culf$^{\mathsection *}$}
\author[K.~Mastel]{Kieran Mastel$^{\dagger *}$}
\author[C.~Paddock]{Connor Paddock$^{\ddagger}$}
\author[T.~Spirig]{Taro Spirig$^{\#}$}
\address{$^\mathsection$Department of Applied Mathematics, University of Waterloo}
\address{$^\dagger$Department of Pure Mathematics, University of Waterloo}
\address{$^\ddagger$Department of Mathematics and Statistics, University of Ottawa}
\address{$^*$Institute for Quantum Computing, University of Waterloo}
\address{$^{\#}$QMATH, Department of Mathematical Sciences, University of Copenhagen}
\email{eculf@uwaterloo.ca}
\email{kmastel@uottawa.ca}
\email{cpaulpad@uottawa.ca}
\email{tasp@math.ku.dk}
\begin{document}

\begin{abstract}
We show that the quantum smooth label cover problem is undecidable and $\RE$-hard. This sharply contrasts the quantum unique label cover problem, which can be decided efficiently by a result of Kempe, Regev, and Toner (FOCS'08). On the other hand, our result aligns with the $\RE$-hardness of the quantum label cover problem, which follows from the celebrated $\MIP^*=\RE$ result of Ji, Natarajan, Vidick, Wright, and Yuen (ACM'21). Additionally, we show that the quantum oracularized smooth label cover problem is $\RE$-hard. Our second result fits with the alternative quantum unique games conjecture recently proposed by Mousavi and Spirig (ITCS'25) on the $\RE$-hardness of the quantum oracularized unique label cover problem. Our proof techniques include a quantum version of Feige's reduction from 3SAT to 3SAT5 (STOC'96) for $\BCSMIP^*$-protocols, which may be of independent interest.
\end{abstract}

\maketitle

\section{Introduction}

The unique games conjecture is one of the most important open problems in theoretical computer science. Provided that P $\neq\NP$, it imposes strict limitations on the existence of efficient algorithms to approximate the value of many NP-hard problems. As an important example, under the unique games conjecture the Goemans and Williamson polynomial-time approximation algorithm for MAXCUT is optimal~\cite{goemansmaxcut,khot_original}. Despite enormous attention and effort \cite{prasad,khot-regev-vertex-cover,KhotGrassmann1,DinurTwotoOne}, the conjecture remains unresolved. The conjecture, attributed to Khot \cite{khot_original}, informally states that for any $\epsilon,\delta>0$, given a large enough alphabet, the unique label cover problem (\textit{i.e.}~a unique game), it is $\NP$-hard to decide if the classical value is at least $1-\epsilon$ or no more than $\delta$.

Surprisingly, Kempe, Regev, and Toner showed that the quantum analogue of the unique games conjecture is false. In particular, they proved that solutions of a semi-definite program (SDP) relaxation of any unique game can be rounded to entangled strategies that well-approximate the quantum value of the game \cite{kempe}. At the time, this furthered the idea that the quantum value might be easier to compute than the classical value of many two-player games. However, in light of the recent $\MIP^*=\RE$ result \cite{ji_mip_re}, which implies that even approximating the quantum value of two-player (nonlocal) games is $\RE$-complete, the fact that an efficient approximation algorithm exists for the quantum value of unique games is even more peculiar.

In the classical setting, a source of evidence for the unique games conjecture is the $\NP$-hardness of the label cover and smooth label cover problems \cite{feige,khot2,guruswami-bypassing}. Hardness of the label cover problem --- a less structured variant of unique label cover --- follows from the PCP theorem \cite{arora1998proof,arora1998probabilistic}. While smooth label cover is more akin to unique label cover, the smooth variant is not an interpolation between the other two. Nonetheless, the $\NP$-hardness of smooth label cover has been a powerful tool in its own right, used to establish hardness of approximation results for many problems in $\NP$ \cite{guruswami-bypassing}, even in cases where the unique games conjecture was thought to be necessary.

In the quantum setting of interactive proofs with entangled provers, the story is quite different. As mentioned, the quantum value of a unique game can be approximated in polynomial-time by the algorithm described in \cite{kempe}. On the other hand, Mousavi and Spirig recently pointed out that, by the $\MIP^*=\RE$ result, the analogous quantum label cover problem is \emph{undecidable} and RE-hard \cite{mousavispirig24}. This leaves us with a striking dichotomy, and begs the natural question of whether the quantum smooth label cover problem is easy, undecidable, or somewhere in between.

\subsection{Results Overview}

In more detail, a two-player game is akin to a two-prover interactive proof system. Here, non-communicating provers (or players) interact with an efficient verifier (or referee) through a single round of question and answer. The players are cooperating and their goal is to win the game by satisfying the rule predicate, a condition checked by the referee. In the quantum setting the players have access to the additional resource of entangled quantum states, and the ability to measure them locally between receiving their questions and returning their answers.

A smooth label cover instance consists of a bipartite graph $G$ with disjoint vertex sets $U\cup V$, edges $E$, and an alphabet of size $n$. Given an instance, the goal is to assign labels to each of the vertices, maximizing the number of valid assignments; the validity of a label assignment to a vertex is determined by constraints imposed by the edges of the graph. Smoothness is a certain ``global-to-local'' property on the constraints, which ensures the assignments satisfying any near-optimal number of constraints have specific local features. It follows that each smooth label cover problem is an instance of a 2-ary constraint satisfaction problem ($2$-$\mathrm{CSP}$), with variables (of degree $n$) corresponding to the vertices, and constraints coming from the edges.

An instance of smooth label cover game can be played as a constraint system nonlocal game $\mc{G}_{\mathrm{SLC}(n)}$. In the game, the referee samples a constraint $e_{uv}$ according to some distribution, and sends one variable $u$ to the first player, and the other variable $v$ to the second player. The first player responds with an assignment to variable $u$, and the second player with an assignment to variable $v$. Finally, the referee checks that the assignments to $u$ and $v$ satisfy the constraint $e_{uv}$. The completeness of the game is clear: if they have an assignment which satisfies a large number of constraints, then they are likely to win the game. The notion of soundness is similar: players with a good strategy will be able to construct a good assignment.

The optimal classical value of the smooth label cover game $\mc{G}_{\mathrm{SLC}(n)}$, denoted by $\omega_c(\mc{G}_{\mathrm{SLC}(n)})$, will be directly related to the optimal fraction of satisfied vertices in the smooth label cover instance. That is, we can recall the classical hardness of the smooth label cover problem as follows. For all $0<s<1$, there is an $n$ large enough, such that the following decision problem is $\NP$-hard: given $\mc{G}_{\mathrm{SLC}(n)}$ decide if $\omega_c(\mc{G}_{\mathrm{SLC}(n)})=1$ or $\omega_c(\mc{G}_{\mathrm{SLC}(n)})<s$, promised that one holds\footnote{Unlike for unique label cover, smooth label cover with perfect completeness is not necessarily easy.}. By allowing quantum strategies for the corresponding 2-CSP game, the quantum smooth label cover problem is the following: given $\mc{G}_{\mathrm{SLC}(n)}$ and $0<s<1$, decide if $\omega_q(\mc{G}_{\mathrm{SLC}(n)})=1$ or $\omega_q(\mc{G}_{\mathrm{SLC}(n)})<s$, promised that one holds. Where the quantum value $\omega_q(\mc{G}_{\mathrm{SLC}(n)})$ is the supremum over all quantum strategies for~$\mc{G}_{\mathrm{SLC}(n)}$. In this work, we establish the following result.

\begin{theorem*}[Informal]
    For any soundness parameter, there exists a sufficiently large alphabet such that the quantum smooth label cover problem is $\RE$-hard.
\end{theorem*}

With this result, it seems as though quantum unique label cover problem is the most strange, effectively switching from hard to easy. See \cref{thm:QSmoothLC} for the formal statement.

One can take the perspective that the $\MIP^*=\RE$ result is a quantum version of the classical PCP theorem\footnote{Although, there are some good reasons to not compare these statements directly, see \cite{anand-games-pcp}.} for succinctly presented constraint satisfaction problems (CSPs). Unlike with the classical $\MIP=\mathrm{NEXP}$ result \cite{babai}, computable reductions between problems in $\RE$ are not sensitive to whether we have a compact presentation of the CSP. In other words, if a succinctly presented CSP is undecidable, then so is the non-succinct version of the problem. In this sense, the $\MIP^*=\RE$ result of \cite{ji_mip_re} puts significant limitations on the existence of \emph{any} approximation algorithms for the quantum value of nonlocal games. Still, understanding which games admit approximation algorithms for the quantum value is a very interesting problem. Here, little is known compared to the classical case, beyond the setting of XOR games \cite{cleve2008perfect,wehner2006tsirelson,cleve}, and some other partial results \cite{CMS23}.

Mousavi and Spirig point out that the $\MIP^*=\RE$ result naturally implies that the \emph{quantum oracularized} label cover problem is also $\RE$-hard. Oracularizable quantum strategies are a restricted class of quantum synchronous strategies which have convenient properties in the context of reductions between $\MIP^*$ protocols for CSPs. In particular, they are necessary for the answer reduction portion of the $\MIP^*=\RE$ proof \cite{ji_mip_re,dong2023computational}. This led Mousavi and Spirig to provide an alternative quantum unique games conjecture, based on the quantum oracularized value, to which the approximation algorithm of \cite{kempe} does not readily apply. Assuming their conjecture, they established a hardness of approximation result for the quantum oracularized value for families of CSPs from $\text{2-}\mathrm{LIN}$ and $\mathrm{MAXCUT}$ which resemble those established in the classical case under the unique games conjecture by Khot and others \cite{khot_original,kkmo}.

Our results for quantum smooth label cover aligns with their quantum unique games conjecture. Moreover, we were able to extend our result to the oracularized setting. Accordingly, our second main result extends the hardness of quantum smooth label cover to the quantum oracularized value. In other words, we show that the problem of deciding if $\omega_{qo}(\mc{G}_{\mathrm{SLC}(n)})=1$ or $\omega_{qo}(\mc{G}_{\mathrm{SLC}(n)})<s$, is RE-hard, where $\omega_{qo}(\mc{G}_{\mathrm{SLC}(n)})$ is the supremum over all quantum oracularizable strategies. For concreteness, one can consider the smooth label cover game, where one of the players must reply with an assignment to both vertices $u$ and $v$, and the other player obtains only one of $u$ or $v$ chosen at random by the verifier, and replies with an assignment to that vertex.

\begin{theorem*}[Informal]
    For any soundness parameter, there is a sufficiently large alphabet such that the quantum oracularized smooth label cover problem is $\RE$-hard.
\end{theorem*}

The formal result can be found in \cref{thm:QOSmoothLC}.

Oracularizable strategies for 2-CSP protocols like smooth label cover are more like strategies for two-player CSP protocols where the constraints contain more than two variables. In two-player protocols for CSPs with constraints on more than two variables, at least one of the players needs to respond with assignments to more than one variable at a time. Therefore, that player's measurements for the variables in a constraint must commute with one another. On the other hand, in a non-oracularizable strategy for a $2$-CSP protocol, each player is sent one variable, so the player's measurements for these variables need not commute, since a single player only measures one variable. These sound noncommutative protocols only exist for 2-CSPs, making this \emph{constraint-noncommutativity} a behaviour unique to entangled 2-CSP protocols. In an oracularizable strategy for a $2$-CSP game, measurements for the variables in each constraint commute. This is exactly like quantum strategies for CSP protocols with constraints on more than 2 variables, such as the constraint-constraint and constraint-variable games, see for example \cite{BCSTensor,ji2013binary,atserias2019generalized,paddock2025satisfiability,culfmastel24}.

The existence of non-oracularizable strategies for 2-CSPs is one of the reasons why the authors of \cite{mousavispirig24} suggest a unique games conjecture based on the hardness of the quantum oracularized value, as it fits with the general theory of entangled CSP protocols of higher arity. In particular, the entangled strategies produced by efficient procedure of \cite{kempe} are not oracularizable and therefore the RE-hardness of the quantum oracularized label cover problem is entirely plausible.

\subsection{Technical Outline}

Several of the proof ideas in this work are inspired by classical reductions between decision problems. Establishing completeness of a classical reduction in the quantum case is often more straightforward, and it is establishing the quantum soundness which often presents a new challenge. To overcome this obstacle in many cases, we make use of the powerful algebraic formalism for handling soundness arguments for $\MIP^*$ protocols for CSPs that was recently introduced in \cite{MS24} and expanded on in \cite{culfmastel24}. The weighted algebra formalism examines how reductions change the quantum \emph{synchronous} value. One of the main observations used throughout this work (as well as, for instance \cite{MS24,culfmastel24,fu2025succinct}) is that we can reduce questions about the hardness of the quantum value to the hardness of the quantum synchronous value.

To prove our first main result, we give a reduction from the halting problem to the quantum smooth label cover problem. Our starting point is the $\RE$-hardness for $\SAT^*$ established by Culf and Mastel \cite{culfmastel24}, where $\SAT^*$ denotes the $\MIP^*$ protocol for $\SAT$. Next, we exhibit a reduction from this RE-hard problem to $\SAT5^*$, the corresponding $\MIP^*$ protocol where every variable appears in (exactly) $5$ constraints. From here, we employ the classical construction of Guruswami, Raghavendra, Saket, and Wu \cite{guruswami-bypassing}, building on the work of Khot \cite{khot2}, to obtain a smooth label cover instance. One of our technical contributions is ensuring the completeness and soundness of the Guruswami, Raghavendra, Saket, and Wu construction in the quantum case. This involves considering a variant of the $(J,R)$-dummy variable game (see \cref{def:ord_JR_game}) where the questions and answers are ordered, since completeness of parallel repetition in the quantum setting depends on the players knowing which round of the repeated game they are playing. However, the primary technical contribution of this work is establishing the reduction from $\SAT^*$ to $\SAT5^*$.

Our approach is inspired by the classical reductions from $\SAT$ to $\SAT B$ (where every variable appears in $B$ constraints) by Papadimitriou and Yannakakis \cite{PY88} and Arora et al.~ \cite{ALMSS98}, and then from $\SAT B$ to $\SAT5$ by Feige \cite{Feige98threshold}. The idea involves modifying the $\SAT$ system by first labelling each variable by its constraint and then adding equality constraints between these labelled variables. One can think of the labelled variables and equality constraints between them as the vertices and edges of a graph. Unfortunately, this reduction does not decrease the degree of the $\SAT$ instance. To overcome this issue, as in the classical reductions, we modify the construction by adding equality constraints only along the edges of an expander graph. Unlike in the classical setting, analysing the soundness of the expander construction is less straightforward, however we are able to prove that it remains sound in the quantum setting. The key observation is that we can employ a nice result from \cite{Ji2021quantum} (\cref{lem:key_expand}) to obtain a constant soundness drop-off in the quantum case as well.

The soundness proof of the expander graph construction hinges on a certain uniformity assumption for the question distribution of the initial $\SAT^*$ protocol. Hence, to complete the proof of $\RE$-hardness for $\SAT5^*$, we need to establish RE-hardness for such uniform $\SAT^*$ protocols. Unfortunately, this is not clear a priori via polynomial-time reductions. Nevertheless, we modify the halting problem so that the required uniformity condition is baked in. This modified halting problem is $\RE$-hard by an exponential-time reduction from the halting problem. Then, the $\RE$-hardness of uniform $\SAT^*$ follows by a polynomial-time reduction from the modified halting problem.

Our second main result, establishing undecidability in the oracularized case, follows by a modification of the $\SAT5^*$ protocol described above. This modification results in a $\SAT10^*$ protocol, in which the quantum assignments are what we call \emph{2-oracularizable}. That is, pairs of observables corresponding to variables, say $x$ and $y$, will commute not only if they are in the same constraint, but also whenever another variable $z$ appears in separate constraints with each of $x$ and $y$, respectively. The 2-oracularizable property of the quantum assignments to the $\SAT10^*$ protocol ensures that in the construction of the smooth label cover game $\mc{G}_{\mathrm{SLC}(n)}$, the quantum strategies will be fully oracularizable. The construction is such that quantum assignments to the $\SAT5^*$ protocol will give assignments to the $\SAT10^*$ protocols. Lastly, the soundness of the construction follows by observing that the entire modification is an instance of a ``constraint subdivision'' transformation, and therefore quantum soundness follows by a result of \cite{culfmastel24}, using the weighted algebras formalism.

Although runtime is not typically considered in the theory of computability, we recall that the reduction from the halting problem to $\BCSMIP^*$ with polylog-sized questions (and constant-sized answers) in \cite{fu2025succinct} runs in polynomial time. With this in mind, we show (in \cref{sec:uniformmarginals}) that there is a polynomial time reduction from the halting problem to a $\BCSMIP^*$ protocol with the aforementioned uniformity condition. However, since this protocol is succinctly presented, the expander graph construction takes quasi-polynomial time. Notably, if we had a version of the $\MIP^* = \RE$ theorem with log-sized questions (and constant-sized answers), it would give us a polynomial-time reduction from the halting problem to the quantum smooth label cover problem\footnote{Further reducing the question size in the $\MIP^*$ protocol for the halting problem is not believed to be the difficult part of the quantum games PCP conjecture \cite{anand-games-pcp}.}.

\section{Decision Problems for Constraint Satisfaction Games}

\subsection{Nonlocal Games}\label{sec:nonlocalgames}

A two-player \textbf{nonlocal game} $\mc{G} = (I,\{O_i\}_{i\in I},\pi,V)$ consists of a finite set of questions $I$, a collection of finite answer sets $\{O_i\}_{i\in I}$, a probability distribution $\pi$ on $I\times I$, and a family of functions $V(\cdot,\cdot|i,j):O_i\times O_j\rightarrow \{0,1\}$ for $(i,j)\in I\times I$. In the game, the players, often called Alice and Bob, receive questions $i$ and $j$, respectively, from $I$ with probability $\pi(i,j)$, and, without communicating with one another, reply with answers $a\in O_i$ and $b\in O_j$, respectively. They win if $V(a,b|i,j) = 1$ and lose otherwise. We have assumed without loss of generality that the players' question and answer sets are the same. 

Although they cannot communicate during the game, the players can coordinate a strategy beforehand. The players' strategy in a nonlocal game is described by the probability distributions $p(a,b|i,j)$ of their answers conditioned on the question pair they receive. Such distributions are known as \textbf{correlations} for the game $\mc{G}$. Depending on the resources available to the players, they can only employ certain correlations. A correlation $p$ is \textbf{quantum} if there are
\begin{enumerate}
    \item finite-dimensional Hilbert spaces $\mc{H}_A$ and $\mc{H}_B$,
    \item a projective measurement $\{M_a^i\}_{a\in O_i}$ on $\mc{H}_A$ for every $i \in I$,
    \item a projective measurement $\{N_a^i\}_{a \in O_i}$ on $\mc{H}_B$ for every $i \in I$, and 
    \item a state $\ket{v}\in \mc{H}_A\otimes \mc{H}_B$
\end{enumerate}
such that $p(a,b|i,j) =\bra{v}M^i_a\otimes N_b^j\ket{v}$ for all $i,j \in I$, $a \in O_i$, and $b \in O_j$. The collection $(\mc{H}_A, \mc{H}_B, \{M_a^i\},\{N_a^j\}, \ket{v})$ is called a \textbf{quantum strategy}. Quantum strategies capture the scenario where the players share some bipartite quantum state and sample their answers via measurements of that state. If the players instead only have access to classical resources such as shared randomness, their correlation and strategy are called \textbf{classical}. A correlation is \textbf{commuting operator} if there exists
\begin{enumerate}
    \item a Hilbert space $\mc{H}$,
    \item projective measurements $\{M_a^i\}_{a\in O_i}$ and $\{N_a^i\}_{a \in O_i}$ on $\mc{H}$ for every $i \in I$, and
    \item a state $\ket{v}\in \mc{H}$
\end{enumerate}
such that $M_a^iN_b^j = N_b^jM_a^i$ and $p(a,b|i,j) =\bra{v}M^i_a N_b^j\ket{v}$ for all $i,j \in I$, $a \in O_i$, and $b \in O_j$. The collection $(\mc{H},\{M_a^i\},\{N_a^j\},\ket{v})$ is called a \textbf{commuting operator strategy}. The quantum correlations are contained in the commuting operator correlations. If a commuting operator correlation has a commuting operator strategy on a finite dimensional Hilbert space, then it is also a quantum correlation. 

The \textbf{winning probability} of a correlation $p$ in a nonlocal game $\mc{G} = (I,\{O_i\},\pi,V)$ is 
\begin{equation*}
    \omega(\mc{G};p):= \sum_{i,j\in I}\sum_{a\in O_i, b\in O_j}\pi(i,j)V(a,b|i,j)p(a,b|i,j).
\end{equation*}
Sometimes we denote a strategy by $\mc{S}$, and write $\omega(\mc{G};S)$ for the winning probability. Let $C_q$ be the set of quantum correlations, then the \textbf{quantum value} of $\mc{G}$ is $\omega_q(\mc{G}) := \sup_{p\in C_q}(\mc{G};p)$. Similarly, let $C_{qc}$ be the set of commuting operator correlations, then the \textbf{commuting operator value} of $\mc{G}$ is $\omega_{qc}(\mc{G}) := \sup_{p\in C_{qc}}(\mc{G};p)$. Lastly, if $C_c$ is the set of quantum correlations, then the \textbf{classical value} of $G$ is $\omega_c(\mc{G}) := \sup_{p\in C_c}(\mc{G};p)$. A correlation $p$ for $\mc{G}$ and its corresponding strategy are called \textbf{perfect} if $\omega(\mc{G};p) = 1$, an $\epsilon$-perfect if $\omega(\mc{G};p)>1-\epsilon$. Since the set of quantum correlations is not closed by \cite{slofstra_set_of_quantum_correlations}, we say that a game $\mc{G}$ with $\omega_q(\mc{G}) = 1$ has a perfect \textbf{quantum approximate} strategy.

A nonlocal game $\mc{G} = (I,\{O_i\},\pi,V)$ is \textbf{synchronous} if $V(a,b|i,i) = 0$ for all $i \in I$ and $a \neq b \in O_i$. A correlation $p$ is \textbf{synchronous} if $p(a,b|i,i) = 0$ for all $i \in I$ and $a \neq b \in O_i$. The set of synchronous classical, quantum and commuting operator correlations are denoted $C_c^s$, $C_q^s$, and $C_{qc}^s$, respectively. We define the synchronous quantum value $\omega_q^s$ and synchronous commuting operator value $\omega_{qc}^s$ analogously to $\omega_{q}$ and $\omega_{qc}$ by replacing $C_q$ with $C_q^s$ and $C_{qc}$ with $C_{qc}^s$ respectively. If $p$ is a synchronous commuting operator correlation, then there is a single projective measurement $\{M^i_a\}_{a\in O_i}$ on $\mc{H}$ for each $i\in I$, and the state $\ket{v}$ is tracial, in the sense that $\bra{v}\alpha\beta\ket{v} = \bra{v}\beta\alpha\ket{v}$ for all $\alpha$ and $\beta$ in the $\ast$-algebra generated by the operators $M_a^i$, $i\in I$, $a \in O_i$. Moreover, the correlation can be written as $p(a,b|i,j) = \bra{v}M_a^iM_b^j\ket{v}$ for all $i,j\in I$, $a\in O_i$, and $b \in O_j$. The collection $(\mc{H},\{M_a^i\},\ket{v})$ is called a \textbf{synchronous commuting operator strategy}. If the Hilbert space $\mc{H}$ is finite dimensional, then the strategy is also a \textbf{synchronous quantum strategy}. A synchronous strategy is called \textbf{oracularizable} if $M_a^iM_b^j = M_b^jM_a^i$ for all $i,j \in I$, $a \in O_i$, and $b \in O_j$ with $\pi(i,j)>0$. Similarly, we call a correlation \textbf{oracularizable} if it admits an oracularizable strategy. Let $C_{qo}$ be the set of oracularizable quantum correlations, then the \textbf{quantum oracularizable value} of $\mc{G}$ is $\omega_{qo}(\mc{G}) := \sup_{p\in C_{qo}}(\mc{G};p)$. In some previous work on 2-CSP games, namely \cite{mousavispirig24}, the quantum synchronous value of a synchronous game has been called the non-commutative value, and the oracularized synchronous quantum value has been called the quantum value. We use terminology that is consistent with the broader nonlocal games literature.

Every quantum correlation that is approximately synchronous, in the sense that $p(a,b|i,i)$ $\approx 0$ for all $i \in I$ and $a\neq b \in O_i$, is close to an exactly synchronous quantum correlation \cite{Vidick_2022}. This is also true for commuting operator correlations \cite{lin2024tracialembeddable, marrakchi2023synchronous}. As a result, the synchronous quantum and synchronous commuting operator values of a synchronous game are related by a polynomial to the quantum and commuting operator values, respectively.

For a synchronous nonlocal game $\mc{G}$,
\begin{equation}\label{eq:game-values}
    \omega_c(\mc{G}) \leq \omega_{qo}(\mc{G})\leq \omega_q^s(\mc{G}) \leq \omega_{qc}^s(\mc{G}).
\end{equation}

Although we won't work with the commuting operator value in this work, we mention it here as it's an important value in the context of quantum operator assignments in the setting where nonlocality is represented by commuting subalgebras rather than the tensor product; see \cite{paddock2025satisfiability,culfmastel24} for more.

Several of the problems in this work can be formulated using the following template decision problem.

\begin{definition}
    Let $\mc{P}$ be a set and $\alpha,\beta:\mc{P}\to [0,1]$ be functions. For every $0<s\leq c\leq 1$, $(\alpha,\beta,\mc{P})_{c,s}$-Gapped-Decide is the decision problem with \emph{yes} instances $Y_{\alpha}=\{\phi\in \mc{P}:\alpha(\phi)\geq c\}$ and \emph{no} instances $N_{\alpha\beta}=\{\phi\in \mc{P}:\beta(\phi)<s \text{ and } \alpha(\phi)<c\}$. When $\alpha\leq \beta$ the promise $\alpha(\phi)<c$ in the``no'' case is redundant and we write $N_{\alpha\beta}=N_\beta$ to no longer depend on $\alpha$.
    Moreover, we say that $\mc{P}$ is $(\alpha,c)$-complete, and $(\beta,s)$-sound.
\end{definition}

We make several remarks. First off, the definitions appearing in \cite{mousavispirig24} regarding these gapped decision problems are incomplete, as the set of \emph{no} instances shrinks when $\alpha>\beta$. However, when $\alpha\leq \beta$, we have a number of so-called trivial reductions between the above problem instances.

\begin{lemma}\label{lem:triv_red}
Fix a set $\mc{P}$ and consider the pairs of functions $(\alpha_0,\beta_0)$ and $(\alpha_1,\beta_1)$ as defined above. For any $0<s<c\leq 1$, if $\alpha_0\leq \alpha_1\leq \beta_1\leq \beta_0$, then we have the following trivial reductions:
\begin{equation}
    \begin{tikzcd}
    \mc{P}_{\alpha_0,\beta_0} \arrow{r}{} \arrow{d}{} & \mc{P}_{\alpha_0,\beta_1} \arrow{d}{} \\
    \mc{P}_{\alpha_1,\beta_0}\arrow{r}{} & \mc{P}_{\alpha_1,\beta_1}
    \end{tikzcd}
\end{equation}
Moreover, if $\mc{P}$ is $(\alpha_0,c)$-complete then $\mc{P}$ is $(\alpha_1,c)$-complete. Similarly, if $\mc{P}$ is $(\beta_0,s)$-sound then $\mc{P}$ is $(\beta_1,s)$-sound.
\end{lemma}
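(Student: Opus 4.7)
The plan is to show that each of the four arrows in the square, together with the completeness and soundness preservation statements, can all be witnessed by the identity map on $\mc{P}$ plus two elementary monotonicity containments. There is no actual content beyond carefully unfolding definitions.

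First, I would record the two pointwise consequences of the hypothesis. From $\alpha_0 \leq \alpha_1$, any $\phi$ with $\alpha_0(\phi) \geq c$ also satisfies $\alpha_1(\phi) \geq c$, so $Y_{\alpha_0} \subseteq Y_{\alpha_1}$. From $\beta_1 \leq \beta_0$, any $\phi$ with $\beta_0(\phi) < s$ satisfies $\beta_1(\phi) < s$, so $N_{\beta_0} \subseteq N_{\beta_1}$. The full chain $\alpha_0 \leq \alpha_1 \leq \beta_1 \leq \beta_0$ further guarantees $\alpha_i \leq \beta_j$ for each of the four index pairs appearing in the square, so every ``no'' instance set collapses to the simplified form $N_{\beta_j} = \{\phi : \beta_j(\phi) < s\}$ remarked upon immediately after the definition of Gapped-Decide. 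This uniform collapse is the reason the identity map will suffice for all four arrows at once.

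Next, I would check each arrow. Fix any $\phi$ in the domain of the source problem. If $\phi$ is a yes-instance of the source, then it lies in $Y_{\alpha_i}$ for the source's first index, hence by the first containment it lies in $Y_{\alpha_{i'}}$ for the target's first index. If $\phi$ is a no-instance of the source, it lies in $N_{\beta_j}$, and by the second containment it lies in $N_{\beta_{j'}}$ of the target. Thus the identity on $\mc{P}$ is a computable function that sends yes-instances to yes-instances and no-instances to no-instances, which is exactly the mapping reduction criterion. The preservation of completeness and soundness is then the same two containments repackaged: $(\alpha_0, c)$-completeness of $\phi$ forces $(\alpha_1, c)$-completeness by pointwise monotonicity of $\alpha$, and $(\beta_0, s)$-soundness forces $(\beta_1, s)$-soundness by pointwise monotonicity of $\beta$.

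I anticipate no real obstacle; this is a bookkeeping lemma whose purpose is to justify freely tightening gap parameters in either direction. The single point I would be careful to highlight is that the full chain $\alpha_0 \leq \alpha_1 \leq \beta_1 \leq \beta_0$, rather than merely the two constraints $\alpha_i \leq \beta_i$ in isolation, is what guarantees the redundancy of the $\alpha$-side of the ``no'' sets uniformly across the square, so that the identical identity reduction handles all four arrows without any case-by-case tweaking.
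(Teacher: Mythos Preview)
Your proposal is correct and follows essentially the same approach as the paper: the paper proves the lemma by splitting it into two propositions, one for the vertical ($\alpha$-) reductions and one for the horizontal ($\beta$-) reductions, each proved by exactly the containments $Y_{\alpha_0}\subseteq Y_{\alpha_1}$ and $N_{\beta_0}\subseteq N_{\beta_1}$ that you record, together with the observation that $\alpha_i\leq\beta_j$ collapses all the no-sets to the simplified form. Your write-up handles the four arrows in one stroke rather than factoring through the two auxiliary propositions, but the underlying argument is identical.
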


The lemma follows from the following two propositions.

\begin{proposition}[Trivial $\alpha$-reductions]\label{prop:alpha_red}
If $\alpha_0\leq \alpha_1\leq \beta$ then there is a trivial reduction $\mc{P}_{\alpha_0,\beta}\to \mc{P}_{\alpha_1,\beta}$.
\end{proposition}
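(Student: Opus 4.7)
The plan is to take the mapping reduction to be the identity on instances and simply verify that it carries yes-instances to yes-instances and no-instances to no-instances. Since both the source problem $\mc{P}_{\alpha_0,\beta}$ and the target problem $\mc{P}_{\alpha_1,\beta}$ share the same underlying set $\mc{P}$, sending $\phi \mapsto \phi$ is trivially computable, so the only content lies in checking that this assignment respects the yes/no partitions of the two decision problems.

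For completeness, suppose $\phi$ is a yes-instance of $\mc{P}_{\alpha_0,\beta}$, so $\alpha_0(\phi) \geq c$. Using the hypothesis $\alpha_0 \leq \alpha_1$, I would immediately conclude that $\alpha_1(\phi) \geq \alpha_0(\phi) \geq c$, so $\phi$ lies in $Y_{\alpha_1}$, i.e., it is a yes-instance of $\mc{P}_{\alpha_1,\beta}$. For soundness, suppose $\phi$ is a no-instance of $\mc{P}_{\alpha_0,\beta}$. Because $\alpha_0 \leq \beta$ (by transitivity of the chain $\alpha_0 \leq \alpha_1 \leq \beta$), the remark following the definition of gapped-decide applies, so the no-instance set is just $N_\beta = \{\phi : \beta(\phi) < s\}$; hence $\beta(\phi) < s$. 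Since $\alpha_1 \leq \beta$ also holds, the same remark gives $N_\beta$ as the no-instance set of $\mc{P}_{\alpha_1,\beta}$, and $\phi \in N_\beta$ immediately places $\phi$ in the no-instance set of the target problem.

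The only remaining subtlety is to confirm the implicit consistency condition $\alpha_1(\phi) < c$ for the target no-instances in case one is reading the general definition of $N_{\alpha_1\beta}$ rather than $N_\beta$: this follows from $\alpha_1(\phi) \leq \beta(\phi) < s \leq c$. Both directions of the biconditional $\phi \in Y_{\alpha_0} \iff \phi \in Y_{\alpha_1}$ are not actually needed for a mapping reduction respecting the promise; what is needed is that yes maps to yes and no maps to no, which is exactly what the two paragraphs above establish. I expect no real obstacle here: the statement is essentially a bookkeeping observation about how monotonicity of $\alpha$ shrinks the yes-set and leaves the no-set (determined by $\beta$) unchanged, so the identity reduction suffices.
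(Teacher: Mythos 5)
Your proof is correct and matches the paper's argument essentially verbatim: the identity map is the reduction, monotonicity $\alpha_0\leq\alpha_1$ gives $Y_{\alpha_0}\subseteq Y_{\alpha_1}$, and the hypothesis $\alpha_i\leq\beta$ forces $N_{\alpha_0\beta}=N_{\alpha_1\beta}=N_\beta$ so the no-instances coincide. Nothing further is needed.
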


\begin{proof}
For every $0< s<c\leq 1$, if $\phi \in \mc{P}$ and $\beta(\phi)<s$, then $\alpha_i(\phi)<c$ for $i=1,2$, hence $N_{\alpha_0\beta}=N_{\alpha_1\beta}=N_\beta$. So the trivial reduction $\phi\in N_{\alpha_1\beta}$ implies $\phi \in N_{\alpha_0\beta}$, as desired. On the other hand, since $\alpha_0(\phi)\leq \alpha_1(\phi)$ for all $\phi \in \mc{P}$, we have that $Y_{\alpha_0} \subseteq Y_{\alpha_1}$, and therefore $\phi\in Y_{\alpha_0}$ implies $\phi\in Y_{\alpha_1}$, completing the proof.
\end{proof}

\begin{proposition}[Trivial $\beta$-reductions]\label{prop:beta_red}
    If $\beta_0\geq \beta_1$ there is a reduction $\mc{P}_{\alpha,\beta_0}\to \mc{P}_{\alpha,\beta_1}$.
\end{proposition}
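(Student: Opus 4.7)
The plan is to exhibit the identity function as the desired mapping reduction $\mc{P}_{\alpha,\beta_0} \to \mc{P}_{\alpha,\beta_1}$, in direct parallel to the proof of \cref{prop:alpha_red}.

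First, I would observe that the set of yes-instances $Y_\alpha = \{\phi \in \mc{P} : \alpha(\phi) \geq c\}$ is defined purely in terms of $\alpha$ and $c$, so it is literally the same set for both $\mc{P}_{\alpha,\beta_0}$ and $\mc{P}_{\alpha,\beta_1}$. Hence the equivalence $\phi \in Y_\alpha \iff \id(\phi) \in Y_\alpha$ required of a mapping reduction is immediate, with no work to do.

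The remaining step is to check that the identity sends no-instances to no-instances, i.e.\ that $N_{\alpha\beta_0} \subseteq N_{\alpha\beta_1}$, which is what guarantees the promise of the target problem is respected. For this, I would unpack the definitions: if $\phi \in N_{\alpha\beta_0}$, then $\beta_0(\phi) < s$, and since $\beta_1 \leq \beta_0$ pointwise I get $\beta_1(\phi) \leq \beta_0(\phi) < s$, while the side condition $\alpha(\phi) < c$ transfers unchanged. So $\phi \in N_{\alpha\beta_1}$, as needed.

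There is not really an obstacle; the whole content is the observation that shrinking $\beta$ can only enlarge the no-set while leaving the yes-set fixed, so the identity reduction works. In fact, combining this with \cref{prop:alpha_red} immediately gives \cref{lem:triv_red} by composing the two reductions along either path of the commutative square.
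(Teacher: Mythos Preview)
Your proof is correct and follows the same approach as the paper: take the identity as the reduction, note that $Y_\alpha$ is unchanged, and verify the containment of no-sets. In fact you state the containment in the correct direction, $N_{\alpha\beta_0}\subseteq N_{\alpha\beta_1}$ (which is what is needed for the identity to send no-instances of the source to no-instances of the target), whereas the paper's printed proof has this inclusion reversed, evidently a typo.
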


\begin{proof}
     For every $0<s<c\leq 1$, the set of \emph{yes} instances $Y_\alpha$ is unchanged, so we only need to check the \emph{no} instances. Similar to the proof of \cref{prop:alpha_red}, the assumption $\beta_0\geq \beta_1$ implies that $N_{\alpha\beta_0}\supseteq N_{\alpha\beta_1}$ and therefore $\phi\in N_{\alpha\beta_1}$ implies $\phi \in N_{\alpha\beta_0}$.
\end{proof}

\begin{remark}\label{rem:triv_reductions}
    Note that there are always trivial $\beta$-reductions, even in the case that $\alpha_1>\beta$. However, we don't have trivial $\alpha$-reductions without the assumption that $\alpha_1\leq \beta$, as it could be possible that $\alpha_0\leq \beta$ and $\alpha_1\geq c$, in particular the intersection of $N_{\alpha_0\beta}$ and $Y_{\alpha_1}$ could be non-empty.
\end{remark}

\subsection{Constraint System Games and Weighted Algebras}

A \textbf{constraint system (CS)} over an alphabet $\Sigma$ consists of a set of variables $\mc{X}$ and a collection of constraints $\{\mc{C}_i\}_{i=1}^n$, where $\mc{C}_i=(\mc{U}_i,\mc{R}_i)$ for $1\leq i \leq n$, each $\mc{U}_i$ called the \textbf{context} is an ordered subset of $\mc{X}$, and the \textbf{relations} $\mc{R}_i$ are nonempty subsets of $\Sigma^{\mc{U}_i}$. When $\mc{U}_i=k$ we say that $\mc{C}$ is a $k$-ary constraint, and we write $k$-CS to denote a CS where each constraint has arity $k$. An \textbf{assignment} to $\mc{U}_i$ is an element of $\Sigma^{\mc{U}_i}$. Given a constraint $\mc{C}_i$ the \textbf{satisfying assignments} are the elements of $\mc{R}_i$, and assignments not in $\mc{R}_i$ are \textbf{unsatisfying assignments}. In the special case where the alphabet $\Sigma$ has size two, we say $S$ is a boolean constraint system (BCS). A \textbf{constraint language} is the family of constraint systems where each constraint comes from a restricted set of relations. An important example of a constraint language is $\SAT$, where each context contains three variables and each constraint is a conjunction of three literals. Such constraints are said to be in 3 conjunctive normal form (3CNF). Another example we mention here is the language of fixed degree $\SAT$, which we denote by $\SAT k$ for $k>0$, where each variable appears in exactly $k$ contexts.

A \textbf{constraint satisfaction problem} with completeness $c$ and soundness $s$, for $0<s<c\leq 1$, is a promise problem:
\begin{align*}
\tag{$k$-CSP$(m)_{c,s}$}
    \begin{minipage}{0.7\textwidth}
        Given a CS $S=(\mc{X},\{\mc{C}_i\}_{i=1}^n)$ where $|\mc{U}_i|=k$ for all $1\leq i \leq n$ and $|\Sigma|=m$:
        \begin{itemize}
            \item[(\emph{yes})] there exists a $\phi\in \Sigma^\mc{X}$ which satisfies at least $cn$ constraints,
            \item[(\emph{no})] each assignment $\phi\in \Sigma^\mc{X}$ satisfies less than $sn$ constraints.
        \end{itemize}
    \end{minipage}
\end{align*}

Similarly, a \textbf{succinct constraint satisfaction problem} with completeness $c$ and soundness $s$, for $0<s<c\leq 1$, is a promise problem:
\begin{align*}
    \tag{$\SuccinctCSP(m)_{c,s}$}
    \begin{minipage}{0.7\textwidth}
    Given a probabilistic Turing machine $M$ that samples the constraints of a CS $S=(X,\{(\mc{U}_i,\mc{R}_i)\}_{i=1}^n)$ according to some probability distribution $\pi:[n]\rightarrow[0,1]$:
    \begin{itemize}
        \item[(\emph{yes})] there is an assignment $f:X\rightarrow\Sigma$ such that $\Pr_{i\leftarrow\pi}[f|_{\mc{U}_i}\in \mc{R}_i]\geq c$,
        \item[(\emph{no})] for every assignment $\Pr_{i\leftarrow\pi}[f|_{\mc{U}_i}\in \mc{R}_i]<s$
    \end{itemize}
    \end{minipage}
\end{align*}

For convenience later, we write $\mathrm{SuccinctCSP}(\SAT)$ and $\mathrm{CSP}(\SAT)_{c,s}$ for the succinct (and non-succinct) constraint satisfaction problems where each CS is a $\SAT$ instance.  Additionally, it will be convenient to make use of the following graph associated with a CS, which exhibits which variables appear in which constraints in a constraint system.

\begin{definition}\label{def:connectivity-graph}
    Let $S=(\mc{X},\{\mc{C}_i\}_{i=1}^n)$ be a CS. The \textbf{connectivity graph} of $S$ is the bipartite graph $\Gamma(S)$ with left vertices $V_L(\Gamma(S))=\mc{X}$, right vertices $V_R(\Gamma(S))=[n]$, and edges $(x,i)\in E(\Gamma(S))$ if and only if $x\in\mc{U}_i$.
\end{definition}

Given a distribution $\mu:[n]\to \R_{\geq 0}$ and a CS $S$, the \textbf{constraint-variable CS game} is the nonlocal game $\mc{G}(S,\mu) =  ([n]\cup\mc{X},\{\mc{O}_i\}_i\in [n], \pi, V)$, where $O_i = \Sigma^{\mc{U}_i}$ when $i \in [n]$ and $O_i = \Sigma$ otherwise, $\pi(i,x) = \mu(i)/|\mc{U}_i|$ for $x\in \mc{U}_i$ and is 0 on other inputs, and $V(\phi,a|i,x) = 1$ if $\phi \in \mc{C}_i$ and $\phi(x) = a$ and is $0$ otherwise. Note that although the players have different question sets, the game can be symmetrized without changing the synchronous winning probability. In this game, one player receives $i\in[m]$ sampled from $\mu$ and the other receives a uniformly random variable $x\in \mc{U}_i$. To win, the first player must answer with a satisfying assignment $\phi\in \mc{C}_i$ and the second must answer $a\in\Sigma$ such that $a=\phi(x)$. When $S$ is boolean, we call that $\mc{G}(S,\pi)$ a BCS game.

The constraint-variable game is not naturally synchronous. To make it synchronous, the verifier must randomly choose which player to ask the constraint question and which player to ask the variable question, and also ask consistency check questions with some constant probability. This transformation preserves constant completeness-soundness gaps and guarantees that the players can win near-optimally with synchronous strategies due to Theorem 0.1 of \cite{marrakchi2023synchronous}, so we do not need to worry about it in practice.

Let $S$ be a $2$-CS over an alphabet $\Sigma$ and $\pi'$ be a probability distribution on $[m]$, define the \textbf{{$2$\nobreakdash-CS} game} $G_a(S,\pi')$ as the nonlocal game $(\mc{X},\Sigma,\nu_a,V_a)$, where $\nu_a(x,y)=\frac{\pi'(i)}{2}$ if $\mc{U}_i=\{x,y\}$ and $0$ otherwise, and $V_a(a,b|x,y)=1$ iff there exists $\phi\in \mc{R}_i$ such that $\phi(x)=a$ and $\phi(y)=b$. In this game, the referee samples $i$ from $\pi'$, and then asks each of the players one variable from the constraint. Each player responds with an assignment to the variable she received. They win if they have answered a satisfying assignment.

If the players are classical, then their strategy for either type of CS game can be taken to be deterministic. The responses of the player who received the variable question constitute an assignment to the constraint system. One can reason about reductions between classical CS games by working with these global assignments.

When the players have access to quantum resources, we can get a similar simplification when thinking about synchronous strategies using the weighted algebra formalism developed in \cite{MS24} and \cite{culfmastel24}. 

For a set of variables $\mc{U}$, let $\C\Z_k^{\ast\mc{U}}$ be the free algebra generated by order-$k$ unitaries labelled by the elements $x\in \mc{U}$, and let $\C\Z_k^{\mc{U}}$ be its abelianization. For any order-$k$ unitary $x$ and $a\in\Z_k$, write $\Pi_a^{(k)}(x)$ for the projector onto the $\omega_k^a$-eigenspace of $x$; where $k$ is clear, we suppress the superscript $(k)$. Given $\phi\in\Z_k^{\mc{U}}$, we define the element $\Phi_{\mc{U},\phi}^{(k)}\in\C\Z_k^{\ast \mc{U}}$ as
$$\Phi_{\mc{U},\phi}^{(k)}=\prod_{x\in V}\Pi_{\phi(x)}^{(k)}(x),$$
where as before the superscript~$(k)$ is suppressed where clear. We use the same notation for the image of $\Phi_{\mc{U},\phi}^{(k)}$ under any homomorphism when the homomorphism is clear. Given a constraint $(\mc{U},\mc{R})$ over the alphabet $\Z_k$, we write
$$\mc{A}(\mc{U},\mc{R})=\C\Z_k^{\mc{U}}/\gen{\Phi_{\mc{U},\phi}}{\phi\notin \mc{R}}.$$
The algebra $\mc{A}(\mc{U},\mc{R})$ is isomorphic to the $C^\ast$-algebra of functions on the finite set $\mc{R}$. Consequently, if $\rho: \mc{A}(\mc{U},\mc{R}) \to \mc{B}(\mc{H})$ is a $\ast$-representation, then $\{\rho(\Phi_{\mc{U},\phi})\}_{\phi\in \mc{R}}$ is a projective measurement on $\mc{H}$, and conversely if $\{M_\phi\}_{\phi\in \mc{R}}$ is a projective measurement on $\mc{H}$ then there is a $\ast$-representation $\rho: \mc{A}(\mc{U},\mc{R}) \to \mc{B}(\mc{H})$ with $\rho(\Phi_{\mc{U},\phi}) =M_{\phi}$.

Let $S = (\mc{X},(\mc{U}_i,\mc{R}_i)_{i=1}^n)$ be a CS. The \textbf{constraint-variable algebra} is the free product $\mc{A}_{c-v}(S)=\bigast_{i=1}^m\mc{A}(\mc{U}_i,\mc{R}_i)\ast\C\Z_k^{\ast \mc{X}}$. Let $\sigma_i:\mc{A}(\mc{U}_i,\mc{R}_i)\rightarrow\mc{A}_{c-v}(S)$ be the inclusion on the $i^{th}$ factor, and let $\sigma':\C\Z_k^{\ast \mc{X}}\rightarrow\mc{A}_{c-v}(S)$ be the inclusion on the final factor. To keep our formulas tidy, we'll often omit the $\sigma_i$ and $\sigma'$ when it's clear what subalgebra $\mc{A}(\mc{U}_i,\mc{R}_i)$ the element belongs to. By the GNS representation theorem, there is a correspondence between finite dimensional tracial states on the constraint variable algebra and synchronous quantum strategies for the constraint-variable game. A correlation $p\in C_q^s$ if and only if there is a finite dimensional tracial state $\tau$ on $\mc{A}_{c-v}(S)$ such that $p(\phi,a|i,x) = \tau(\Phi_{\mc{U}_i,\phi} \Pi_a(\sigma'(x)))$. A tracial state $\tau$ on $\mc{A}_{c-v}(S)$ is \textbf{perfect} if it corresponds to a perfect correlation. This happens if and only if $\tau(\Phi_{\mc{U}_i,\phi} \Pi_a(\sigma'(x))) = 0$ whenever $\phi(x) \neq a$. We are now ready to define the weight and defect, which gives us a way of tracking how close players' strategies are to being perfect at the level of tracial states on algebras.

\begin{definition}
    A \textbf{(finitely-supported) weight function} on a set $X$ is a function
    $\mu : X \to [0,+\infty)$ such that $\supp(\mu) := \mu^{-1}((0,+\infty))$
    is finite. A \textbf{weighted} $*$\textbf{-algebra} is a pair $(\mc{A},\mu)$
    where $\mc{A}$ is a $*$-algebra and $\mu$ is a weight function on $\mc{A}$. 

    If $\tau$ is a tracial state on $\mc{A}$, then the \textbf{defect of $\tau$} is
    \begin{equation*}
        \defect(\tau; \mu) := \sum_{a \in \mc{A}} \mu(a) \|a\|^2_{\tau},
    \end{equation*}
    where $\|a\|_{\tau} := \sqrt{\tau(a^* a)}$ is the $\tau$-norm.
    When the weight function is clear, we just write $\defect(\tau)$. 
\end{definition}
Since $\mu$ is finitely supported, the sum in the definition of the
defect is finite, and hence is well-defined. The support of the weight is chosen so that if $\defect(\tau;\mu) = 0$, then $\tau$ is perfect. 

For a probability distribution $\pi'$ on $[m]$, define the \textbf{weighted constraint-variable algebra} to be $\mc{A}_{c-v}(S,\pi')=(\mc{A}_{c-v}(S),\mu_{c-v,\pi'})$ where the weight function $\mu_{c-v}(\Phi_{V_i,\phi}(1-\Pi_{\phi(x)}(\sigma'(x))))=\frac{\pi'(i)}{|\mc{U}_i|}$ for all $x\in \mc{U}_i$, $\phi\in \mc{R}_i$, and $0$ on all other elements. Similarly, Given a probability distribution $\pi'$ on $[m]$, we define the \textbf{weighted assignment algebra} to be $\mc{A}_a(S)=\C\Z_k^{\ast \mc{X}}$, along with the weight function $\mu_{a,\pi'}(\Phi_{\mc{U}_i,\phi})=\pi'(i)$ for all $\phi\notin \mc{R}_i$, and $0$ on all other elements.

The point of these algebras is the following lemma.

\begin{lemma}\cite{culfmastel24}
    Let $S = (\mc{X},\{\mc{C}_i\}_{i=1}^m)$ be a CS, and let $\mu$ be a
    probability distribution on $[m]$. A tracial state $\tau$
    on $\mc{A}_{c-v}(S)$ is an $\epsilon$-perfect strategy for $\mc{G}(S,\pi)$ if and only
    if $\defect(\tau) \leq \epsilon$. If $S$ is a $2$-CS, then a tracial state $\tau_a$ on $\mc{A}_a(S)$ is an $\epsilon$-perfect strategy for $G_a(S,\pi)$ if and only
    if $\defect(\tau_a) \leq \epsilon$
\end{lemma}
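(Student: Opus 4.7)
The plan is to show that in both parts of the lemma $\defect(\tau) = 1 - \omega$ for the corresponding winning probability $\omega$, whence $\defect(\tau)\leq\epsilon$ is equivalent to $\tau$ being $\epsilon$-perfect. The only ingredients needed are traciality of $\tau$, the fact that each $\Phi_{\mc{U}_i,\phi}$ is a self-adjoint idempotent in $\mc{A}(\mc{U}_i,\mc{R}_i)$, and the ``telescoping'' identity $\sum_{\phi\in\Sigma^{\mc{U}_i}}\Phi_{\mc{U}_i,\phi}=1$; in the assignment-algebra case the latter follows from writing this as a product $\bigl(\sum_a\Pi_a(x)\bigr)\bigl(\sum_b\Pi_b(y)\bigr)$ of two complete projective measurements that happen not to commute.

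For the constraint-variable algebra, the GNS correspondence gives
\begin{equation*}
\omega(\mc{G}(S,\mu);\tau) = \sum_i \frac{\mu(i)}{|\mc{U}_i|}\sum_{x\in\mc{U}_i}\sum_{\phi\in\mc{R}_i}\tau\bigl(\Phi_{\mc{U}_i,\phi}\Pi_{\phi(x)}(\sigma'(x))\bigr).
\end{equation*}
For the matching defect summand, I would use that $\Phi_{\mc{U}_i,\phi}$ and $1-\Pi_{\phi(x)}(\sigma'(x))$ are self-adjoint idempotents, so expanding $\|\cdot\|_\tau^2$ and cycling under $\tau$ yields
\begin{equation*}
\|\Phi_{\mc{U}_i,\phi}(1-\Pi_{\phi(x)}(\sigma'(x)))\|_\tau^2 = \tau(\Phi_{\mc{U}_i,\phi}) - \tau\bigl(\Phi_{\mc{U}_i,\phi}\Pi_{\phi(x)}(\sigma'(x))\bigr).
\end{equation*}
Summing over $\phi\in\mc{R}_i$ and using $\sum_{\phi\in\mc{R}_i}\Phi_{\mc{U}_i,\phi}=1$ in $\mc{A}(\mc{U}_i,\mc{R}_i)$, then summing over $(i,x)$ with the given weights (which total $1$), gives $\defect(\tau) = 1 - \omega(\mc{G}(S,\mu);\tau)$, as required.

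For the 2-CS assignment algebra case, a question pair $(x,y)$ with $\mc{U}_i=\{x,y\}$ appears with weight $\pi'(i)/2$ in each of its two orderings, and by traciality the two orderings contribute equally, so
\begin{equation*}
\omega(G_a(S,\pi');\tau_a) = \sum_i \pi'(i)\sum_{\phi\in\mc{R}_i}\tau_a(\Phi_{\mc{U}_i,\phi}).
\end{equation*}
Here $\Phi_{\mc{U}_i,\phi}=\Pi_{\phi(x)}(x)\Pi_{\phi(y)}(y)$ need \emph{not} be a projection in the free product $\C\Z_k^{\ast\mc{X}}$, but traciality together with $\Pi_a(y)^2=\Pi_a(y)$ still yields $\tau_a(\Phi_{\mc{U}_i,\phi}^*\Phi_{\mc{U}_i,\phi}) = \tau_a(\Phi_{\mc{U}_i,\phi})$, so $\defect(\tau_a) = \sum_i \pi'(i)\sum_{\phi\notin\mc{R}_i}\tau_a(\Phi_{\mc{U}_i,\phi})$. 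Applying the telescoping identity above to split the sum over $\Sigma^{\mc{U}_i}$ into $\mc{R}_i$ and its complement yields $\defect(\tau_a) = 1 - \omega(G_a(S,\pi');\tau_a)$. The main subtlety I anticipate is the trace-cycling calculation for the non-commuting product $\Phi_{\mc{U}_i,\phi}$ in $\mc{A}_a(S)$ — crucially, the definitions in the preliminaries are set up so that traciality, rather than commutation of the generators, suffices.
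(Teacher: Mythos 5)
Your proof is correct: the identity $\defect(\tau)=1-\omega$ in both cases is exactly the content of this lemma, and your trace-cycling computations (using that $\Phi_{\mc{U}_i,\phi}$ is a projection inside the abelian factor $\mc{A}(\mc{U}_i,\mc{R}_i)$ with $\sum_{\phi\in\mc{R}_i}\Phi_{\mc{U}_i,\phi}=1$ in the constraint-variable case, and that $\tau_a(\Phi^*\Phi)=\tau_a(\Phi)$ by traciality plus idempotency of the two non-commuting factors in the assignment case) are precisely the argument the weight functions were designed to support. The paper itself gives no proof of this statement — it is imported from the cited reference — so there is nothing in-text to compare against; the only quibble is that the paper's convention defines $\epsilon$-perfect via the strict inequality $\omega>1-\epsilon$, so the stated equivalence with $\defect(\tau)\leq\epsilon$ is off by the boundary case, an imprecision in the lemma statement rather than in your argument.
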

See \cite{culfmastel24} for a more complete treatment of the weighted algebra formalism.

\section{The quantum smooth label cover game}
%Remark: the failure of the non-commutative value to be meaningful has also been noted in the case of perfect strategies for synchronous nonlocal games. For instance, the algebraic coloring number refers to the existence of a particular abstract set of operators which correspond to a notion of non-commutative vertex $k$-coloring of a graph. In this model, the authors of \cite{} remarked that every graph is algebraically 4-colorable, a seemingly mathematical oddity, rather than something resembling a coloring achievable in any interactive $\MIP^*$ model.
%In fact, this motivated the authors to define the locally-commuting operator value of the graph colouring nonlocal game. Here, there is an additional constraint on the algebraic relations among the measurement operators associated to vertices in the graph. In particular, operators on adjacent vertices must commute. We remark that this is precisely the condition required in an oracuralizable strategy. As it enables a single player to execute the entire strategy.
%In the classical theory there is usually only a single interpretation of $\alpha$. For instance as fraction of satisfying assignments or the acceptance probability in an MIP protocol. However, in the context of quantum theory there is more than one valid interpretation of $\alpha$.

$\MIP$ protocols are a central example of a gapped decision problem. A \textbf{two-prover one-round $\MIP$ protocol} consists of a probabilistic Turing machine $Q$ and another Turing machine $V$, along with a family of nonlocal games
$G_x = (I_x,\{O_{xi}\}_{i \in I_x}, \pi_x, V_x)$ for $x \in \{0,1\}^*$,
such that 
\begin{itemize}
    \item on input $x$, the Turing machine $Q$ outputs $(i,j) \in I \times I$
        with probability $\pi_x(i,j)$, and  

    \item on input $(x,a,b,i,j)$, the Turing machine $V$ outputs $V_x(a,b|i,j)$. 
\end{itemize}

If the bit lengths of the questions and answers are bounded by functions $q(n)$ and $a(n)$ of $n = |x|$, respectively, and $\mc{P} = \{G_x\}_x$, we denote the decision problem $(\omega_q,\omega_q,\mc{P})_{c,s}$-Gapped-Decide by $\MIP^*(q(n),a(n))_{c,s}$. In general for $\MIP$ protocols, the completeness parameter $c$, and the soundness parameter $s$, may also depend on $n$ in an efficiently computable way. When we restrict the nonlocal games to be BCS games, we denote the decision problem by problem by BCS-$\MIP^*(q(n),a(n))_{c,s}$. When we further restrict the BCS games to be constraint-variable $\SAT$ games we denote the decision problem by $\mathrm{CSP}_{c-v}(\SAT)^*_{c,s}$ and we write $\SuccinctCSP_{c-v}(\SAT)^*_{c,s}$ for the succinctly presented version.

As mentioned in the previous section, the synchronous quantum value and the quantum value of synchronous nonlocal games are closely related. In our proofs of hardness, we deal with the synchronous quantum value as it is easier to reason about mathematically. This simplification is okay, since by Theorem 0.1 of \cite{marrakchi2023synchronous}, for constant $s$, the hardness of $(\omega^s_q,\omega^s_q,\mc{P})_{1,s}$-Gapped-Decide will be the same as that of $(\omega_q,\omega_q,\mc{P})_{1,s'}$-Gapped-Decide where $s'<0$ is also a constant.

Let $\mc{P}$ be an $\MIP$ protocol, and let $s>0$. The chain of inequalities in \Cref{eq:game-values} implies the trivial reductions in \Cref{fig:triv_reds}.

\begin{figure}[h!]
    \centering
    \begin{equation*}
    \begin{tikzcd}
    (\omega_c,\omega_{q}^s,\mc{P})_{1,s} \arrow{r}{}\arrow{d}{} & (\omega_c,\omega_{qo},\mc{P})_{1,s} \arrow{d}{}\arrow{r}{}&(\omega_c,\omega_c,\mc{P})_{1,s}\\
    (\omega_{qo},\omega_{q}^s,\mc{P})_{1,s} \arrow{r}{}\arrow{d}{} & (\omega_{qo},\omega_{qo},\mc{P})_{1,s}\arrow{r}{}& (\omega_{qo},\omega_{c},\mc{P})_{1,s}\\
    (\omega^s_{q},\omega_q^s,\mc{P})_{1,s} \arrow{r}{} & (\omega^s_{q},\omega_{qo},\mc{P})_{1,s}\arrow{r}{}& (\omega_{q}^s,\omega_c,\mc{P})_{1,s}
    \end{tikzcd}
\end{equation*}
    \caption{Trivial $\alpha$ and $\beta$ reductions between decision problems for the $\MIP$ protocol $\mc{P}$. The lack of certain $\alpha$-reductions is due to the technical issue noted in \cref{rem:triv_reductions}.}
    \label{fig:triv_reds}
\end{figure}

The main decision problems that we study center around the $2$-CS smooth label cover. Although we introduced the smooth label cover problem in the context of bipartite graphs, the problem can be generalized to any regular connected graph. That is, an instance of \textbf{Smooth Label Cover} is given by a tuple $(G, [n], [k], \Sigma)$ consisting of a (undirected) regular connected graph $G = (V, E)$, a label set~$[n]$ (for positive integer~$n$), and a set
\[
  \Sigma = \big((\pi_{ev}, \pi_{ew}) : e=(v,w)\in E\big)
\]
consisting of pairs of maps both from~$[n]$ to~$[k]$ associated with the endpoints of the edges in~$E$.
Given an \textbf{assignment} $A:V\to[n]$, we say that an edge~$e=(v,w)\in E$ is \textbf{satisfied} if $\pi_{ev}\big(A(v)\big) = \pi_{ew}\big(A(w)\big)$. The maps $\pi_{ev}$ are \textbf{smooth} in the sense that, for every vertex $v$ and every pair $a,a'\in[n]$ with $a\neq a'$, we have
\begin{equation*}
    \Pr_{w\sim v}[\pi_{ev}(a)\neq\pi_{ew}(a')]\approx 1.
\end{equation*}

Let $\SLC_{\alpha,\beta}(c,s)$ be the decision problem $(\omega_\alpha,\omega_\beta,\mc{P})_{c,s}$-Gapped-Decide, where $\mc{P}$ is an $\MIP^*$ protocol in which every instance is a $2$-CS game for a smooth label cover instance. When $\alpha = \beta$, we write $\SLC_{\alpha}(c,s)$.

The classical value of a Smooth Label Cover instance $\mathfrak{S}=(G, [n], [k], \Sigma)$ is the maximum fraction of satisfied edges over all possible assignments. A quantum assignment $S$ is given by a projection-valued measurement (PVM) $\{P_v^a\}_{a\in[n]}$ for each $v\in V$. We recall that a PVM consists of a collection of orthogonal projections $\{P^a\}_{a\in [n]}$ on a finite-dimensional Hilbert space that forms a resolution of identity $\sum_a P^a = I$. The value of such a quantum synchronous assignment $S$ is given by 
\begin{align*}
    \omega_q^s(\mathfrak{S},S)= \frac{1}{\abs{E}}\sum_{e=(v,w)\in E} \sum_{\substack{i,j\in [n]: \\ \pi_{ev}(i) = \pi_{ew}(j)}}\tr(P_v^i P_w^j).
\end{align*}
The quantum synchronous value of the Smooth Label Cover instance $\mathfrak{S}$, denoted $\omega_q^s(\mathfrak{S})$, is given by the supremum of $\omega_q^s(\mathfrak{S},S)$ over all possible synchronous strategies $S$ and finite-dimensional Hilbert spaces. An oracularizable quantum assignment $S$ consists of a set of PVMs, one for each vertex $v\in V$ of the graph $\{P_v^a\}_{a\in [n]}$which additionally satisfies the commutation relations $[P_v^a,P_w^b]=0$ for every edge $(v,w)\in E$ and labels $a,b\in[n]$. The value of such an oracularizable quantum assignment $S$ is given by
\begin{align*}
    \omega_{qo}(\mathfrak{S},S)= \frac{1}{\abs{E}}\sum_{e=(v,w)\in E} \sum_{\substack{i,j\in [n]: \\ \pi_{ev}(i) = \pi_{ew}(j)}}\tr(P_v^i P_w^j).
\end{align*}
The oracularizable quantum value of the Smooth Label Cover instance $\mathfrak{S}$, denoted $\omega_{qo}(\mathfrak{S})$, is given by the supremum of $\omega_{qo}(\mathfrak{S},S)$ over all possible oracularizable quantum assignments $S$ and finite-dimensional Hilbert spaces.
For each of these values, the value of a smooth label cover instance is equal to the value of its $2$-CS game, in this way $\mathrm{SLC}_{\alpha,\beta}(c,s)$ can be viewed as a problem of deciding the value of a family of nonlocal games or as the problem of deciding the value of instances of smooth label cover. To differentiate between instances and games, we write $\mc{G}_{\mathrm{SLC}(n)}$ to denote the $2$-CS nonlocal game associated with the smooth label cover instance $\mathfrak{S}$, despite that by our definition of values $\omega_t(\mc{G}_{\mathrm{SLC}(n)})=\omega_t(\mathfrak{S})$, for $t\in\{c,q^s,qo\}$.

\subsection{The Hardness of Quantum Smooth Label Cover}\label{sec:orac_smoothLC}

In this section, we establish the $\RE$-hardness of smooth label cover. Our reduction is based on the following hardness result involving the quantum value of a $\SAT5^*$ instance.

\begin{theorem}\label{thm:3sat-5}
    There exists a constant $0<s<1$ such that it is $\RE$-hard to distinguish whether the clause-variable game associated to a $\SAT5$ instance has quantum value 1 or at most~$s$.
\end{theorem}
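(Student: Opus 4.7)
The plan is to reduce from $\SAT^*$, the $\MIP^*$ protocol for $\SAT$, which is $\RE$-hard by \cite{culfmastel24}, to $\SAT5^*$. The strategy, adapted from Feige's classical reduction from $\SAT$ to $\SAT5$, is to replace each variable in the $\SAT$ instance by one copy per clause in which it appears, and then glue the copies together via equality constraints laid out on a constant-degree expander. Because the quantum-soundness step requires a balanced weighting of the copies, I would first establish (as separately discussed in the technical overview) $\RE$-hardness of a version of $\SAT^*$ whose question distribution has uniform marginals over variables, and take that as the starting point.

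\textbf{The Feige-style construction.} Given a $\SAT^*$ instance presented by a constraint system $S=(\mc{X},\{\mc{C}_i\})$ together with a distribution $\pi$ on clauses with uniform variable marginals, construct $S'$ by introducing fresh variables $x_1,\dots,x_{d_x}$, one for every occurrence of $x\in\mc{X}$ in a clause. Rewrite the $i$th clause using the corresponding copies, and for each $x$ add an equality constraint $x_j=x_k$ for every edge $(j,k)$ of a constant-degree expander $H_x$ on $[d_x]$. By choosing the expander degree appropriately, each copy ends up in exactly $5$ contexts, so $S'$ is a $\SAT5$ instance. The question distribution of the new $\SAT5^*$ protocol samples, with a fixed convex combination, either a clause of $S'$ (with weights pushed forward from $\pi$) or a uniformly chosen equality edge from one of the $H_x$'s.

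\textbf{Completeness and quantum soundness.} Completeness is immediate: any satisfying assignment for $S$ induces a satisfying assignment for $S'$ by copying values, and any synchronous quantum strategy for the $\SAT^*$ protocol lifts to one for $\SAT5^*$ by using the same PVM $\{\Pi_a(x)\}_a$ for each copy $x_j$ of $x$. For soundness, I would work in the weighted constraint-variable algebra $\mc{A}_{c-v}(S',\pi')$ of \cite{MS24,culfmastel24}: an $\epsilon$-perfect synchronous quantum strategy corresponds to a tracial state $\tau$ with $\defect(\tau)\le\epsilon$. Low defect on the equality weights gives pairwise bounds $\|\Pi_a(x_j)-\Pi_a(x_k)\|_\tau^2=O(\epsilon/d_x)$ along the edges of $H_x$; applying \cref{lem:key_expand} of \cite{Ji2021quantum} upgrades these local consistency bounds into the existence of, for each $x$, a single PVM $\{\Pi_a(\tilde x)\}_a$ that is $O(\epsilon)$-close to every $\{\Pi_a(x_j)\}_a$ in $\tau$-norm. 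Substituting $\tilde x$ for each $x$ in the clause constraints produces a tracial state on $\mc{A}_{c-v}(S,\pi)$ with defect $O(\epsilon)$, that is, an $O(\epsilon)$-perfect quantum strategy for $\SAT^*$. Setting a constant $s<1$ corresponding to the worst-case constant in this soundness loss completes the reduction.

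\textbf{The main obstacle.} The classical Feige reduction proceeds by reading off a global Boolean assignment from a near-satisfying assignment for $S'$, whereas in the quantum case we must produce honest PVMs on the original variables from merely approximately consistent PVMs on their copies, with controlled loss in the defect. The expander-based lemma makes this possible, but its hypotheses require the weight placed on each family of copies $\{x_j\}_j$ to be balanced across $x$; this is precisely the role of the uniform-marginals preprocessing of $\SAT^*$, without which the synthesis step would blow up the defect by an instance-dependent factor and the constant soundness gap would be lost. Handling this bookkeeping carefully in the weighted-algebra formalism is the heart of the argument; once it is in place, composing the three reductions (uniformization of $\SAT^*$, copy-and-glue, and defect synthesis) yields the stated $\RE$-hardness of $\SAT5^*$.
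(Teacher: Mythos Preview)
Your plan is essentially the paper's proof: uniformize the $\SAT^*$ constraint distribution, then do Feige's copy-and-glue along expanders, and run quantum soundness through the weighted-algebra defect using \cref{lem:key_expand} to synthesize a single PVM per original variable. The one step you should not hand-wave is ``choosing the expander degree appropriately'': for $S'$ to be an actual $\SAT$ instance the equality constraints must be rewritten in 3CNF (two clauses each, \cref{lem:eq-to-3sat}), so a copy $x_j$ sits in $2d+1$ clauses and hitting exactly $5$ forces $d=2$ --- but the only connected $2$-regular graphs are cycles, whose spectral gap is $\Theta(1/n_x^{2})$, so your soundness constant blows up when the original degrees $n_x$ are unbounded. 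The paper fixes this by running the replacement \emph{twice}: first with a genuine $(d,\lambda)$-expander family (\cref{thm:replacement-soundness}) to cap every variable's degree at $d+1$, and then with cycles, which now have bounded size and hence a uniform constant gap (\cref{lem:cycle-expansion}); only after the second round is the 3CNF rewrite applied, landing in $\SAT5$.
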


We reserve the proof until \cref{sec:3SAT5}. Using \cref{thm:3sat-5}, we will prove the following theorem.

\begin{theorem}\label{thm:QSmoothLC}
    For any $0<s<1$, there exists large enough alphabets over which $\mathrm{SLC}_{q}(1,s)$ is $\RE$-hard.
\end{theorem}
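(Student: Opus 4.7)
The plan is to reduce the problem of Theorem \ref{thm:3sat-5} to $\mathrm{SLC}_q(1,s)$ via the classical smooth label cover construction of Guruswami--Raghavendra--Saket--Wu \cite{guruswami-bypassing}, which builds on Khot's earlier construction \cite{khot2}, and to verify that each step remains sound under quantum synchronous strategies. Since the $\SAT5^*$ quantum value problem is $\RE$-hard by Theorem \ref{thm:3sat-5}, this will yield $\RE$-hardness of quantum smooth label cover.

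Given a $\SAT5$ instance $\Phi$, the GRSW construction, parametrised by a smoothness parameter $k$, produces a bipartite smooth label cover instance $\mathfrak{S}$ on a graph $G=(U\cup V,E)$: vertices in $U$ correspond to $k$-tuples of clauses of $\Phi$ (together with auxiliary coordinates that enforce smoothness), vertices in $V$ correspond to $k$-tuples of variables, labels in $[n]$ encode local assignments, and each edge $e=(u,v)$ carries projection maps $\pi_{eu},\pi_{ev}$ restricting to overlap coordinates. For $k$ large enough the instance is smooth in the sense that, for any vertex and any pair of distinct labels $a\neq a'$, the probability over a random incident edge that $\pi_{eu}(a)=\pi_{ev}(a')$ is small. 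Completeness is then immediate: a satisfying assignment to $\Phi$ induces labels satisfying every edge, giving a perfect classical (hence perfect quantum) strategy, so $\omega_q(\mc{G}_{\mathrm{SLC}(n)})=1$ whenever $\omega_q^s(\Phi)=1$.

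The bulk of the work lies in quantum soundness: showing that if $\Phi$ is a no instance, then $\omega_q(\mc{G}_{\mathrm{SLC}(n)})<s$. I would work with the synchronous value $\omega_q^s$ throughout, which loses only a polynomial factor by \cite{marrakchi2023synchronous}, and use the weighted algebra formalism of \cite{MS24,culfmastel24}. Given a synchronous quantum strategy for $\mc{G}_{\mathrm{SLC}(n)}$ of value at least $s$, described by PVMs $\{P_v^a\}_{a}$ on each vertex and a tracial state $\tau$, I would construct a tracial state $\tilde\tau$ on the $\SAT5^*$ constraint--variable algebra $\mc{A}_{c-v}(\Phi)$ by pushing the PVMs through the projection maps $\pi_{eu},\pi_{ev}$ and then marginalising from $k$-tuples down to individual clauses and variables. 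The edge winning probability decomposes as a sum over pairs $(a,a')$ of labels with $\pi_{eu}(a)=\pi_{ev}(a')$; separating the diagonal part $a=a'$ from off-diagonal contributions and bounding the latter using smoothness shows that the induced local assignments are essentially unique on each side.

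The main obstacle is quantitative control of the defect of the resulting tracial state $\tilde\tau$. Classically, one finishes by list-decoding and a union bound over smoothness failures, but noncommutatively one must argue that small defect of $\tau$ on the smooth label cover algebra, after composition with the projection maps, translates to small defect of $\tilde\tau$ on $\mc{A}_{c-v}(\Phi)$. By choosing $k=k(s)$ (and hence the alphabet size $n=n(s)$) large enough, the off-diagonal smoothness contribution becomes negligible, so that any $\epsilon$-perfect strategy for $\mc{G}_{\mathrm{SLC}(n)}$ produces an $O(\epsilon)$-perfect synchronous quantum strategy for the $\SAT5^*$ constraint--variable game, contradicting the no instance guaranteed by Theorem \ref{thm:3sat-5}. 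This yields the desired reduction and establishes $\RE$-hardness of $\mathrm{SLC}_q(1,s)$ over a sufficiently large alphabet.
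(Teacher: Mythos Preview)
Your proposal has two genuine gaps. First, completeness: you write that ``a satisfying assignment to $\Phi$ induces labels satisfying every edge, giving a perfect classical (hence perfect quantum) strategy.'' But the yes case of Theorem~\ref{thm:3sat-5} only guarantees $\omega_q^s(\Phi)=1$, not the existence of a classical satisfying assignment; the whole point of the quantum setting is that these may differ. The paper handles this by taking a perfect \emph{quantum} synchronous strategy for the parallel-repeated $(J,R)$-dummy game $\mc{G}(B,J,R)$, tensoring the single-round PVMs to obtain strategies for the tuple questions, and then verifying that the induced strategy for $\mathfrak{S}$ has value $1$ via an operator inequality (Lemma~34 of \cite{mousavispirig24}). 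Your proposed route through a classical assignment is not available.

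Second, and more seriously, your soundness plan cannot yield arbitrary $s$. You propose to marginalise a good SLC strategy back down to a single-instance $\SAT5$ strategy and invoke smoothness to control off-diagonal terms. Even if this produced an $O(\epsilon)$-perfect strategy for the $\SAT5$ constraint--variable game, the no case of Theorem~\ref{thm:3sat-5} only gives a fixed gap $1-s_0$, so you would conclude at best $\omega_q^s(\mathfrak{S})\leq 1-\Omega(1-s_0)$, a constant independent of your smoothness parameter $k$. Smoothness is a structural property of the instance (used in downstream applications); it does not amplify the gap. The paper instead keeps the reduction at the level of tuples: it introduces the $(J,R)$-dummy game $\mc{G}(B,J,R)$, which dominates the $R$-fold parallel repetition of the $\SAT5$ game, and uses Yuen's parallel repetition theorem to get value $\leq 1/\poly(R)$ in the no case. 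Soundness of $\mathfrak{S}$ is then a direct equality: given $\{Q_v^i\}$ for $\mathfrak{S}$ with value $>s$, define $P_u^l=\expect_{v\in N_u}\sum_{i\in\pi_{e,v}^{-1}(l)}Q_v^i$ and check that $(P,Q)$ achieves exactly the same value in $\mc{G}(B,J,R)$, forcing $s\leq 1/\poly(R)$. No smoothness, no weighted-algebra defect bounds, and no decoding enter the soundness argument.
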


%\tnote{Clean up here such that Theorem in this section only the following, i.e. put the definition of Smooth label cover with all the structural properties in a definition. Then all the specifics such as $n=76{(J+1)R}$ etc. can appear in the proof.}

%\begin{theorem}
%    For any $0<\delta<1$, there exists large enough alphabets over which $\mathrm{SLC}_{\nc}(1,\delta)$ is $\RE$-hard.
%\end{theorem}

%\cnote{Fix proof structure i.e. pull out lemmas and other intermediate results to before/after the proof.}
The proof can be seen as a quantum analog of Theorem 3.5 in~\cite{guruswami-bypassing}. In particular, the proof of \cref{thm:QSmoothLC} is based on the construction in Section 2.2 of \cite{khot2}, and in Appendix A of \cite{guruswami-bypassing}. Before we give the proof, we require several intermediary results and definitions.

%\begin{definition}\label{def:parallel_game}
%Consider a $\SAT5$ clause-variable game $G$. If $R$ is a large constant, then the \emph{$R$-times parallel repeated clause-variable game} $G^R$ is played as follows: the verifier samples $R$ clauses at random $W=\{C_i\}_{i\in [R]}$ and $R$ variables sampled at random among each clause. Let $U=\{x_i \lvert x_i \in C_i\}_{i\in [R]}$ denote the set of these variables. The verifier then asks $W$ to one player and $U$ to the other and checks if they are consistent. This is formalized by a projection map $\pi_{U,W}: \{0,1\}^{3R} \rightarrow \{0,1\}^R$, which sends an assignment to $W$ to its sub-assignment to $U$. The verifier's procedure consists of checking if the assignment to $U$ of the second player corresponds to $\pi_{U,W}(\sigma)$ with $\sigma$ denoting the assignment to $W$ of the first player.
%\end{definition}

\begin{definition}\label{def:ord_JR_game}
    Let $B=(X,\{(V_i,C_i)\}_{i=1}^m)$ be a BCS. The \emph{$(J,R)$-dummy clause-variable game} is the nonlocal game $\mc{G}(B,J,R)$  with question set \begin{align*}
        I=[m]^{(J+1)R}\cup\set*{\vec{i}\in(X\cup[m])^{(J+1)R}}{\abs*{\set{q}{i_q\in X}}=R},
            \end{align*}
            answer set \begin{align*}
             O_{\vec{i}}=\bigtimes_{q=1}^{(J+1)R}\begin{cases}\{0,1\}&i_q\in X\\C_{i_q}&i_q\in[m]\end{cases},
            \end{align*}
    probability distribution $\pi(\vec{i},\vec{j})$ sampled by sampling $\vec{i}\in[m]^{(J+1)R}$ uniformly and then sampling $\vec{j}$ by sampling a subset $L\subseteq[(J+1)R]$ of size $R$ and $x_q\in V_{i_q}$ for all $q\in L$ uniformly and then letting $j_q=x_q$ if $q\in L$ and $j_q=i_q$ if $q\notin R$, and predicate $V(\vec{\sigma},\vec{\tau}|\vec{i},\vec{j})=1$ iff $\sigma_q=\tau_q$ if $i_q=j_q$ and $\sigma_q(j_q)=\tau_q$ else.

    Verifying the predicate function can be simplified by introducing the \emph{projection} $\pi^{\vec{j},\vec{i}}:O_{\vec{i}}\rightarrow O_{\vec{j}}$ defined as
    \begin{align*}
        \pi^{\vec{j},\vec{i}}(\vec{\sigma})_q=\begin{cases}\sigma_q&j_q=i_q\\\sigma_q(j_q)&j_q\in V_{i_q}\end{cases}.
    \end{align*}
    Then $V(\vec{\sigma},\vec{\tau}|\vec{i},\vec{j})=\delta_{\vec{\tau},\pi^{\vec{j},\vec{i}}(\vec{\sigma})}$.
\end{definition}

We have the following result about $\mc{G}(B,J,R)$.

\begin{lemma}\label{lem:dummy-game-value}
    If the synchronous quantum value of the constraint-variable game of a BCS $B$ is $<1$, there exists $\eta(R)\in\frac{1}{\exp(R)}$ such that the synchronous quantum value of $\mc{G}(B,J,R)$ is upper-bounded by $\eta(R)$. If the synchronous quantum value of the constraint-variable game of a BCS $B$ is $1$, then the synchronous quantum value of $\mc{G}(B,J,R)$ is $1$.
\end{lemma}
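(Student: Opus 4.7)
The proof splits into a direct completeness argument and a soundness argument using the weighted-algebra formalism of \cite{culfmastel24} together with an expander concentration over the dummy coordinates. For completeness, suppose $\omega_q^s(\mc{G}(B,\mu))=1$, witnessed by a zero-defect tracial state $\tau_0$ on $\mc{A}_{c-v}(B)$. I would lift $\tau_0$ to the $(J+1)R$-fold free product algebra by placing independent copies of $\tau_0$ at each coordinate, with product measurements: on $\vec{i}\in[m]^{(J+1)R}$ answer with the product projection $\prod_{q=1}^{(J+1)R}\Phi_{V_{i_q},\sigma_q}$, with the $q$-th factor living in the $q$-th copy of $\mc{A}_{c-v}(B)$, and on a mixed question $\vec{j}$ use the variable projector $\Pi_{\tau_q}(\sigma'(j_q))$ at each variable position $q\in L$ and the constraint projector elsewhere. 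Because the predicate $\delta_{\vec{\tau},\pi^{\vec{j},\vec{i}}(\vec{\sigma})}$ factors coordinate-wise and each coordinate is certified by the single-copy perfection of $\tau_0$, the product tracial state $\tau_0^{\otimes(J+1)R}$ is perfect, giving $\omega_q^s(\mc{G}(B,J,R))=1$.

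For soundness I would argue by contrapositive: fix a threshold $\eta(R)\in 1/\poly(R)$ to be chosen and suppose $\omega_q^s(\mc{G}(B,J,R))>\eta(R)$; the goal is to manufacture a perfect synchronous strategy for $\mc{G}(B,\mu)$, contradicting the hypothesis $\omega_q^s(\mc{G}(B,\mu))<1$. Given a near-optimal tracial state $\tau$ on the game algebra of $\mc{G}(B,J,R)$, for each coordinate $q$ I would form a marginal tracial state $\tilde\tau_q$ on $\mc{A}_{c-v}(B)$ by summing Alice's operators $M^{\vec{i}}_{\vec{\sigma}}$ over the other $(J+1)R-1$ coordinates and averaging over the sampling of $\vec{i}$, with the analogous marginalization on the variable side. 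The key structural observation is that the $JR$ dummy-consistency coordinates force Alice's $q$-th response to depend only on $i_q$: any dependence on the other coordinates would produce a consistency failure detectable by Bob (who sees only $\vec{j}$ and not the other coordinates of $\vec{i}$), contributing to $\defect(\tau)$.

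To make this quantitative, I would invoke the expander averaging lemma from \cite{Ji2021quantum}, applied to the consistency graph whose vertices are the $(J+1)R$ coordinates and whose edges are the pairs implicitly tested by the dummy sampling distribution. Expansion concentrates the $JR$ individual dummy-defect contributions into an averaged bound of the form $\expec_q\defect(\tilde\tau_q)\leq C\,(1-\omega_q^s(\mc{G}(B,J,R)))/R^a$ with constants $C,a$ depending only on $B$ and $J$. Choosing $\eta(R)=1/R^{c}$ with $c>a$ drives the averaged defect to zero as $R\to\infty$, so for all sufficiently large $R$ at least one marginal $\tilde\tau_q$ has defect below any fixed positive threshold.

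The main obstacle is converting this asymptotic near-perfection into an exact perfect strategy for $\mc{G}(B,\mu)$, which is nontrivial because the set of quantum correlations is not closed \cite{slofstra_set_of_quantum_correlations}. I would handle this through a compactness step: passing to an ultraproduct or GNS limit over a sequence of near-optimal finite-dimensional states yields a perfect synchronous commuting-operator strategy for $\mc{G}(B,\mu)$, and a finite-dimensional approximation returns a perfect synchronous quantum strategy, yielding the contradiction. The technical heart of the argument is verifying that the dummy sampling distribution induces a consistency graph with a spectral gap strong enough to justify the claimed polynomial rate.
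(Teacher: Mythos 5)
Your completeness argument matches the paper's: tensoring a perfect single-copy strategy coordinate-by-coordinate gives a perfect strategy for $\mc{G}(B,J,R)$. The soundness direction, however, has a genuine gap, and the approach you sketch would fail. The paper's proof is short: relax the predicate so that it only checks the $R$ coordinates where $j_q\in X$; the relaxed game upper-bounds $\mc{G}(B,J,R)$ and is exactly the $R$-fold parallel repetition of the constraint-variable game of $B$, so Yuen's parallel repetition theorem gives the $1/\poly(R)$ bound whenever the single-copy value is $<1$. Your route tries to bypass parallel repetition by marginalizing a near-optimal state onto single coordinates and invoking an expander bound, and the central quantitative claim you need, $\expec_q\defect(\tilde\tau_q)\leq C\,(1-\omega_q^s(\mc{G}(B,J,R)))/R^{a}$, is false. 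Concretely, take $B$ with $\omega_q^s(\mc{G}(B,\mu))=1-\delta$ for a fixed constant $\delta>0$ and play the product of optimal single-copy strategies in $\mc{G}(B,J,R)$: every coordinate marginal has defect $\Theta(\delta)$, a constant independent of $R$, while $1-\omega_q^s(\mc{G}(B,J,R))\to 1$, so your right-hand side tends to $C/R^{a}\to 0$ and the inequality is violated for large $R$. More fundamentally, a repeated-game value above $\eta(R)$ simply does not force any coordinate marginal to be near-perfect; turning ``each coordinate has value at most $1-\delta$'' into ``the repeated value decays'' \emph{is} the content of a parallel repetition theorem and cannot be recovered by coordinate averaging.

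Two further problems with the mechanism you propose. First, there is no cross-coordinate consistency test in $\mc{G}(B,J,R)$: the verifier samples the subset $L$ and replaces those coordinates independently, and the predicate factors coordinate-wise, so there is no ``consistency graph whose vertices are the $(J+1)R$ coordinates'' with edges tested by the distribution; \cref{lem:key_expand} has no graph to act on here (the paper uses that lemma in the $\SAT5$ reduction via explicit equality constraints along expander edges, not in this lemma). The dummy coordinates exist to provide the smoothness property of \cref{lem:khot-smoothness-2}, not to amplify soundness. Second, even if you obtained marginals with defect tending to $0$ along a sequence, the ultraproduct/GNS limit yields a perfect tracial (commuting-operator) strategy, and your final step ``a finite-dimensional approximation returns a perfect synchronous quantum strategy'' is not valid in general, precisely because the set of synchronous quantum correlations is not closed. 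The correct contradiction target would be $\omega_q^s(\mc{G}(B,\mu))=1$ as a supremum, which a sequence of finite-dimensional states with vanishing defect does give directly --- so the compactness detour is unnecessary even where it is not wrong --- but this does not rescue the argument, since the marginals never become near-perfect in the first place.
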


\begin{proof}
    Consider the modified version of $\mc{G}(B,J,R)$ where the predicate only checks that $\sigma_q(j_q)=\tau_q$ for those $q$ such that $j_q\in X$, and does not check for consistency on the remaining $q$. Then, it is easy to see that the synchronous quantum value of this game is an upper bound on the synchronous quantum value of $\mc{G}(B,J,R)$; and that this game is equivalent to the $R$-fold parallel repetition of the constraint-variable game of $B$. Hence, if the synchronous quantum value of a constraint-variable game of a BCS $B$ is less than $1$, then by the parallel repetition theorem for entangled projection games ~\cite{dinur2015parallel}, the synchronous quantum value of the parallel repetition is upper-bounded by $\frac{1}{\exp(R)}$. Hence, the same bound applies to $\mc{G}(B,J,R)$.

    Now, consider the case that the synchronous quantum value of the constraint-variable game of $B$ is $1$. Then there exists a perfect (quantum-approximate) strategy for this game consisting of Alice's PVMs $\{A_i^\sigma\}_{\sigma\in C_i}$ for $i\in[m]$ and Bob's PVMs $\{B_x^b\}_{b\in \{0,1\}}$, and a tracial state $\tr$. Then, construct the following strategy for $\mc{G}(B,J,R)$: let the PVMs $P_{\vec{i}}^{\vec{\sigma}}=P_{i_1}^{\sigma_1}\otimes\cdots\otimes P_{i_{(J+1)R}}^{\sigma_{(J+1)R}}$ where $P_{i_q}^{\sigma_q}=A_{i_q}^{\sigma_q}$ if $i_q\in[m]$ and $P_{i_q}^{\sigma_q}=B_{i_q}^{\sigma_q}$ if $j_q\in X$. With respect to the tracial state $\tr^{\otimes (J+1)R}$, it is clear that this is a perfect synchronous quantum approximate strategy, since it passes each of the consistency checks perfectly term-by-term.
\end{proof}

\begin{corollary}
    There exists an inverse exponential $\eta(R)$ such that it is $\RE$-hard to decide whether $\mc{G}(B,J,R)$ for a $\SAT5$ instance $B$ has synchronous quantum value $1$ or $\leq\eta(R)$.
\end{corollary}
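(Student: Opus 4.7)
The plan is to obtain the corollary as a direct combination of \cref{thm:3sat-5} with \cref{lem:dummy-game-value}, together with the observation that the mapping $B\mapsto \mc{G}(B,J,R)$ is a computable reduction. For any $\SAT5$ instance $B$, the question set, answer sets, distribution, and predicate of $\mc{G}(B,J,R)$ all admit explicit finite descriptions produced from $B$ (in polynomial time for fixed $J$ and $R$), so any $\RE$-hardness at the level of clause-variable games for $\SAT5$ transfers verbatim to the dummy game.

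First, I would invoke \cref{thm:3sat-5} to obtain a constant $0<s<1$ and an $\RE$-hard family of $\SAT5$ instances on which the clause-variable game has quantum value $1$ in the yes case and at most $s$ in the no case. Since the paper has already observed that the synchronous and non-synchronous quantum values of a synchronous game differ only polynomially (Theorem~0.1 of \cite{marrakchi2023synchronous}), this $\RE$-hardness can be rephrased with respect to $\omega_q^s$, which is the quantity \cref{lem:dummy-game-value} is stated in terms of. I would then apply \cref{lem:dummy-game-value} instance by instance: on a yes instance the lemma yields $\omega_q^s(\mc{G}(B,J,R)) = 1$, and on a no instance, since the clause-variable game of $B$ has $\omega_q^s < 1$, the lemma supplies $\eta(R)\in 1/\poly(R)$ with $\omega_q^s(\mc{G}(B,J,R)) \leq \eta(R)$. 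Combining yields a $\RE$-hard gap problem for $\mc{G}(B,J,R)$.

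The one point that deserves verification — and the only mild obstacle — is that the upper bound $\eta(R)$ is a single inverse polynomial independent of $B$. This is a feature of the use of Yuen's parallel repetition theorem in the proof of \cref{lem:dummy-game-value}: the resulting bound depends only on the gap $1-\omega_q^s$ of the repeated game and on the sizes of its question and answer sets. Since \cref{thm:3sat-5} provides a uniform gap $1-s$, and since $\SAT5$ clause-variable games have bounded arity and alphabet $\{0,1\}$, a single $\eta(R) \in 1/\poly(R)$ can be chosen that works across the whole hard family. With this uniformity in hand, distinguishing $\omega_q^s(\mc{G}(B,J,R))=1$ from $\omega_q^s(\mc{G}(B,J,R))\leq \eta(R)$ is $\RE$-hard, as desired.
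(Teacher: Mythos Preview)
Your proposal is correct and follows the same approach as the paper: invoke \cref{thm:3sat-5} to get an $\RE$-hard $\SAT5$ family with constant gap, then apply \cref{lem:dummy-game-value} instance-by-instance. Your third paragraph, on the uniformity of $\eta(R)$ in $B$, is a worthwhile clarification that the paper leaves implicit.
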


\begin{proof}
    By \cref{thm:3sat-5}, there exists $s<1$ such that it is $\RE$-hard to decide if the synchronous quantum value of the constraint-variable game of a $\SAT5$ instance $B$ is $1$ or $<s$. Using \cref{lem:dummy-game-value} gives the wanted reduction.
\end{proof}

\begin{definition}\label{def:connectivity-graph-game}
    Let $\mc{G}=(I,\{O_i\},\mu,V)$ be a nonlocal game. The \emph{connectivity graph of $\mc{G}$} is the bipartite graph $\Gamma(\mc{G})$ with left vertices $V_L(\Gamma(\mc{G}))=\set*{i\in I}{\sum_j\mu(i,j)>0}$, right vertices $V_R(\Gamma(\mc{G}))=\set*{j\in I}{\sum_i\mu(i,j)>0}$, and edges $(i,j)\in E(\Gamma(\mc{G}))$ if and only if $\pi(i,j)>0$.
\end{definition}

In particular, the connectivity graph of the constraint-variable game $\mc{G}(S,\mbbm{u}_n)$ is equal to the connectivity graph of the CS $S$ (see \cref{def:connectivity-graph}). Hence, the connectivity graph $\Gamma(\mc{G}(B,J,R))$ for $B$ a $\SAT5$ instance has $n=7^{(J+1)R}$ left vertices and $k=2^R7^{JR}$ right vertices. In the following, we often equate the left vertices of $\Gamma(\mc{G}(B,J,R))$ with $[n]$ and the right vertices with $[k]$.

%The clauses $W'$ are ``dummy'' since it suffices for the player to answer with the same assignment to these clauses, while their assignment for $W$ and $U$ respectively is the same as for the original parallel repeated game. In other words, the games $G^R$ and $G_D$ have the same completeness and soundness. Hence, we have the following Corollary to \cref{lem:3sat-5parallelrep}.

%\begin{corollary}\label{cor:dummysat}
%    There exists a constant $0<c_0<1$ such that it is $\RE$-hard to distinguish whether the $(J,R)$-dummy clause-variable game associated to a $\SAT5$ instance has synchronous quantum value 1 or at most~$21/poly(c_0,R)$. \enote{Where does the 21 come from?}
%\end{corollary}

%The proof follows directly from Lemma \ref{lem:3sat-5parallelrep}. In particular, soundness follows directly from Lemma \ref{lem:3sat-5parallelrep} as answering correctly to the additional ``dummy'' clauses cannot increase the winning probability.

%Consider a bipartite graph $G''$ with left vertices corresponding to the sets $U''$ and right vertices corresponding to the set of clauses $W''$. We have an alphabet of size $k=2^R 7^{JR}$ and $n=7^{(J+1)R}$ associated to the left and right vertices respectively.\footnote{Note that we can force the players to always answer with satisfying assignments to the clauses. Therefore for every 3CNF clause there is exactly $2^3-1=7$ allowed assignments.}

The reason we introduced the ``dummy'' clauses is the following lemma from \cite{khot2}.

%\begin{lemma}{(Lemma 2.4, \cite{khot2})}\label{lem:khot_smoothness}
%    For fixed $W''$ and any two distinct assignments $\sigma_1$, $\sigma_2$ to $W''$,
%    \begin{align*}
%        \Pr_{U''}[\pi^{U'',W''}(\sigma_1) \neq \pi^{U'',W''}(\sigma_2)] \geq 1-\frac{1}{J},  
%    \end{align*}
%    where the probability is over the sampling procedure of $W$ and the set of variables $U$.
%\end{lemma}

%\begin{proof}
%    By assumption, we have $\sigma_1 \neq \sigma_2$ which implies that there is at least one clause in $W''$ where the assignments disagree. Let $C_0$ be such a clause. Now note that if $C_0 \in W'$ then $\pi^{U'',W''}(\sigma_1) \neq \pi^{U'',W''}(\sigma_2)$, since the projection $\pi^{U'',W''}$ leaves the assignments to the clauses in $W'$ invariant. It thus suffices to compute the probability that in the sampling procedure we get $C_0\in W'$. Note that $\abs{W}=R$ and $\abs{W''}=(J+1)R$. The probability that a clause $C_0 \in W''$ does not get sampled in the $R$-clauses that make up $W$ (thus forcing it to be in $W'$) is at least $1-\frac{1}{J+1}$.
%\end{proof}

\begin{lemma}[\cite{khot2}, Lemma 2.4]\label{lem:khot-smoothness-2}
    For fixed $\vec{i}\in[m]^{(J+1)R}$ and any two distinct assignments $\vec{\sigma}_1$ and $\vec{\sigma}_2$ to $\vec{i}$,
    \begin{align*}
        \Pr_{\vec{j}}\squ[\big]{\pi^{\vec{j},\vec{i}}(\vec{\sigma}_1)\neq\pi^{\vec{j},\vec{i}}(\vec{\sigma}_2)}\geq 1-\frac{1}{J},
    \end{align*}
    where $\vec{j}$ is distributed according to the conditional distribution of $\mu$ for fixed $\vec{i}$, \textit{i.e.} $\Pr[\vec{j}=\vec{j}_0]=\mu(\vec{i},\vec{j}_0)/\sum_{\vec{k}}\pi(\vec{i},\vec{k})$.
\end{lemma}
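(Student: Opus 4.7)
The plan is to reduce the claim to a simple tail estimate on the random subset $L \subseteq [(J+1)R]$ of size $R$ used to generate $\vec{j}$ from $\vec{i}$. First, I would unpack the definition of $\pi^{\vec{j},\vec{i}}$ coordinatewise: on each coordinate $q$, we have $\pi^{\vec{j},\vec{i}}(\vec{\sigma})_q = \sigma_q$ whenever $q \notin L$ (so that $j_q = i_q$), and $\pi^{\vec{j},\vec{i}}(\vec{\sigma})_q = \sigma_q(j_q)$ whenever $q \in L$ (so that $j_q \in V_{i_q}$). In particular, if the two projections agree on some coordinate $q \notin L$, then the full clause assignments $\sigma_{1,q}$ and $\sigma_{2,q}$ must already coincide.

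Second, let $D = \{q \in [(J+1)R] : \sigma_{1,q} \neq \sigma_{2,q}\}$, which is nonempty by the assumption $\vec{\sigma}_1 \neq \vec{\sigma}_2$. The previous observation shows that $\pi^{\vec{j},\vec{i}}(\vec{\sigma}_1) = \pi^{\vec{j},\vec{i}}(\vec{\sigma}_2)$ forces $D \subseteq L$, and therefore $q^* \in L$ for any fixed $q^* \in D$. Since $L$ is a uniformly random size-$R$ subset of $[(J+1)R]$, we have $\Pr[q^* \in L] = R/((J+1)R) = 1/(J+1)$. Combining,
\[
\Pr_{\vec{j}}\bigl[\pi^{\vec{j},\vec{i}}(\vec{\sigma}_1) \neq \pi^{\vec{j},\vec{i}}(\vec{\sigma}_2)\bigr] \;\geq\; 1 - \Pr[q^* \in L] \;=\; 1 - \frac{1}{J+1} \;\geq\; 1 - \frac{1}{J}.
\]

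I do not anticipate any genuine obstacle: the argument is purely combinatorial, and neither the randomness over the evaluations $x_q \in V_{i_q}$ for $q \in L$ nor the internal structure of the relations $C_{i_q}$ plays any role. Only the uniform distribution of $L$ matters, together with the structural fact that coordinates outside $L$ are left alone by the projection and thus directly reveal any coordinatewise disagreement between $\vec{\sigma}_1$ and $\vec{\sigma}_2$. One could slightly tighten the bound by also accounting for disagreements on the evaluated coordinates, but this is unnecessary since $1 - 1/(J+1) \geq 1 - 1/J$ already establishes the claim.
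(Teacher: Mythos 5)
Your proposal is correct and follows essentially the same route as the paper: both arguments fix a coordinate where $\vec{\sigma}_1$ and $\vec{\sigma}_2$ disagree, observe that the projection preserves that coordinate whenever it lies outside the replaced set $L$, and bound the probability that a fixed index lands in a uniformly random size-$R$ subset of $[(J+1)R]$ by $1/(J+1)\leq 1/J$. The only cosmetic difference is that you argue via the contrapositive (equality of projections forces the disagreement set into $L$), whereas the paper argues directly that $q_0\notin L$ implies the projections differ; the underlying counting is identical.
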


\begin{proof}
    By assumption, there exists some $q_0\in[(J+1)R]$ such that $(\sigma_1)_{i_{q_0}}\neq(\sigma_2)_{i_{q_0}}$. As such, if $j_{q_0}=i_{q_0}$, then $\pi^{\vec{j},\vec{i}}(\vec{\sigma}_1)\neq\pi^{\vec{j},\vec{i}}(\vec{\sigma}_2)$, since the projection preserves the $q_0$-th term. Let $L=\set*{q}{j_q\neq i_q}\subseteq[(J+1)R]$ be a set-valued random variable. Then, we have the lower bound $\Pr_{\vec{j}}\squ[\big]{\pi^{\vec{j},\vec{i}}(\vec{\sigma}_1)\neq\pi^{\vec{j},\vec{i}}(\vec{\sigma}_2)}\geq\Pr[q_0\notin L]$. The number of subsets of size $R$ in $[(J+1)R]$ is $\binom{(J+1)R}{R}$ and the number that do no include $q_0$ is $\binom{(J+1)R-1}{R}$, so the probability
    \begin{align*}
        \Pr[q_0\notin L]&=\frac{\binom{(J+1)R-1}{R}}{\binom{(J+1)R}{R}}=\frac{((J+1)R-1)!(JR)!R!}{(JR-1)!R!((J+1)R)!}\\
        &=\frac{JR}{(J+1)R}=1-\frac{1}{J+1}\geq 1-\frac{1}{J}.\qedhere
    \end{align*}
\end{proof}

We state the following lemma for later use. 

%\begin{proposition}\label{prop:preimage-projectionmaps}
%    For any $W''$ and any assignment $\xi$ to $U''$, we have
%    \begin{align*}
%        \abs{(\pi^{U'',W''})^{-1}(\xi)} \leq 4^R.
%    \end{align*}
%\end{proposition}

%\begin{proof}
%    First, consider a 3CNF clause of the form $C=x_1 \vee x_2 \vee x_3$ and fix an assignment $i\in \{0,1\}$ to the variable $x_1$. Consider the projection map $\pi^{x_1,C}$ which maps an assignment to $C$ to its sub-assignment to the variable $x_1$. We have
%    \begin{align*}
%        \abs{(\pi^{C,x_1})^{-1}(i)} \leq 4,
%    \end{align*}
%    which corresponds to the freedom we have for choosing an assignment to the variables $x_2$ and $x_3$ once we fix the assignment $i$ for $x_1$. It is clear that under parallel repetition the same argument yields the desired result. 
%\end{proof}

\begin{proposition}\label{prop:preimage-projectionmaps-2}
    For any $\vec{i}\in[m]^{(J+1)R}$ and $\vec{j}$ such that $\mu(\vec{i},\vec{j})>0$, we have for any assignment $\vec{\tau}$ to $\vec{j}$ that $\abs{(\pi^{\vec{j},\vec{i}})^{-1}(\tau)}\leq 4^R$.
\end{proposition}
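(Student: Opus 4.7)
The plan is a direct coordinatewise count, exploiting that the projection $\pi^{\vec{j},\vec{i}}$ factors as a product over the $(J+1)R$ positions, so the preimage of $\vec{\tau}$ decomposes into an independent choice at each $q\in[(J+1)R]$. First I would unpack the condition $\mu(\vec{i},\vec{j})>0$: by the definition of the distribution in $\mc{G}(B,J,R)$, this forces the existence of a subset $L\subseteq[(J+1)R]$ of size exactly $R$ such that $j_q\in V_{i_q}$ for all $q\in L$ and $j_q=i_q\in[m]$ for all $q\notin L$. Thus $\vec{j}$ has precisely $R$ ``variable'' coordinates and $JR$ ``clause'' coordinates.

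Next I would handle the two kinds of coordinates separately. For a clause coordinate $q\notin L$, the definition of $\pi^{\vec{j},\vec{i}}$ gives $\pi^{\vec{j},\vec{i}}(\vec{\sigma})_q=\sigma_q$, so matching $\tau_q$ forces the unique choice $\sigma_q=\tau_q$; note $\tau_q\in C_{j_q}=C_{i_q}$ automatically lies in the correct answer set. For a variable coordinate $q\in L$, the definition gives $\pi^{\vec{j},\vec{i}}(\vec{\sigma})_q=\sigma_q(j_q)\in\{0,1\}$, so admissible $\sigma_q$ are exactly elements of $C_{i_q}\subseteq\{0,1\}^{V_{i_q}}$ with $\sigma_q(j_q)=\tau_q$. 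Since $B$ is a $\SAT5$ instance, $|V_{i_q}|=3$, and fixing one of the three Boolean coordinates to $\tau_q$ leaves at most $2^2=4$ assignments, hence at most $4$ admissible choices of $\sigma_q$.

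Multiplying the per-coordinate counts over all of $[(J+1)R]$ then yields
$$\bigabs{(\pi^{\vec{j},\vec{i}})^{-1}(\vec{\tau})}\leq 1^{JR}\cdot 4^R = 4^R,$$
which is the stated bound. There is essentially no obstacle here: the argument is purely a coordinatewise counting exercise, and the only structural input is the elementary observation that fixing one variable of a $3$-clause leaves at most four residual Boolean assignments, which does not even rely on the satisfying-assignment structure of $C_{i_q}$.
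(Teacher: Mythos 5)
Your proof is correct and follows essentially the same route as the paper's: the bound comes from the $R$ variable coordinates, where fixing one Boolean variable of a $\SAT$ clause leaves at most $4$ admissible assignments, while the remaining coordinates are forced. Your version is slightly more explicit in separating the clause coordinates (factor $1$ each) from the variable coordinates, but the counting is identical.
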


\begin{proof}
    For any $x\in\{0,1\}$, there are at most $4$ ways to extend it to a satisfying $\SAT$ clause. Hence, since there are $R$ indices of $\tau$ to be extended to a satisfying assignment to $\vec{i}$, there are at most $4^R$ ways to do it.
\end{proof}

We are now ready to give the proof of Theorem \ref{thm:QSmoothLC}.

\begin{proof}[Proof of \cref{thm:QSmoothLC}]

%For any constant integer parameters $J,R \geq 1$, there exist a constant $c_0>0$ and a family of Smooth Label Cover instances $(G, [n], [k], \Sigma)$ as above with $n=7^{(J+1)R}$ and $k=2^R 7^{JR}$
%such that %

We now want to construct a Smooth Label Cover instance $(G,[n],[k],\Sigma)$ following the construction in Appendix A of \cite{guruswami-bypassing}. As above, consider a bipartite graph $\Gamma(\mc{G}(B,J,R))$. Associated to each edge $(\vec{i},\vec{j})$ there is a map $\pi^{\vec{j},\vec{i}}$ which satisfies the smoothness property (Lemma \ref{lem:khot-smoothness-2}). Moreover, the graph $\Gamma(\mc{G}(B,J,R))$ is bi-regular as the $\SAT5$ instances we started with have the property that every variable appears exactly in 5 clauses, giving that the left degree is $3^R$ and the right degree is $5^R$.

Now, let $V$ be the set of left vertices of $\Gamma(\mc{G}(B,J,R))$, and connect $\vec{i},\vec{i}'\in V$ by an edge if there exists a right vertex $\vec{j}$ that is adjacent to both of them in $\Gamma(\mc{G}(B,J,R))$. For an edge $e=\{\vec{i},\vec{i'}\}$ corresponding to a right vertex $\vec{j}$, define the projection maps $\pi_{e,\vec{i}}=\pi_{\vec{j},\vec{i}}$ and $\pi_{e,\vec{i}'}=\pi_{\vec{j},\vec{i}'}$.

We claim that the constructed smooth label cover instances $(G=(V,E),[n],[k],\Sigma)$ satisfy the following four structural properties:
	\begin{enumerate} 

	\item (Smoothness)\ \ %
For every vertex $v \in V$ and distinct~$i,j\in [n]$, we have 
\begin{align}\label{eq:smoothness-2}
    \Pr_{w\sim v} \left[\pi_{ev}(i) = \pi_{ew}(j)\right] \leq \frac{1}{J}\,.
\end{align}

	\item For every vertex $v\in V$, edge $e\in E$ incident on $v$, and $i \in [k]$, we have $|\pi_{ev}^{-1}(i)|\leq t:=4^R $;
that is, at most $t$ elements in $[n]$ are mapped to
the same element in $[k]$.

    \item The degree of the regular graph $G$ is a constant depending only on $R$ and $J$.

	\item (Weak Expansion)\ \ %
For any $\delta > 0$ and vertex subset $V'\subseteq V$ such that
$|V'| = \delta\cdot |V|$, the number of edges between the vertices in $V'$
 is at least $\delta^2 |E|$.  
	\end{enumerate} 

The smoothness property follows immediately from \cref{lem:khot-smoothness-2}. The second property follows immediately from \cref{prop:preimage-projectionmaps-2}. The third property is also immediate. Indeed, the regularity of the graph $G$ follows directly from the bi-regularity of $\Gamma(\mc{G}(B,J,R))$. The weak expansion property is a standard notion (see \cite{guruswami-bypassing}), and we give a proof of it in~\cref{sec:weak-expansion}.

We move on to show that it is $\RE$-hard to distinguish between the following two cases: $\omega_{q}(\mathfrak{S})=1$, or $\omega_q^s(\mathfrak{S})\leq 1/\exp(R)$. For completeness, let $\mc{S}$ denote a quantum strategy winning perfectly at the synchronous version of the clause variable $(J,R)$-dummy game associated to $B$ as in \cref{sec:nonlocalgames}. Since the game is synchronous, there must exist a perfect quantum synchronous assignment $\mc{S}_s$. Let us denote it $\mc{S}_s=(P_u, Q_v)$, $P_u=\{P_u^l\}_{l\in [k]}$ is a PVM for every $u\in U$ and $Q_v=\{Q_v^i\}_{i\in [n]}$ is a PVM for every $v\in V$. Then we have
    \begin{align}\label{eq:assumption-completeness}
        1=\omega_{q}(\mc{G}(B,J,R),\mc{S}_s)=\expect_{(u,v)\in E} \sum_{i\in [n]} \tr\Paren{P_u^{\pi_{v,u}(i)} Q_v^i}
    \end{align}

    Now consider the quantum assignment for $\mathfrak{S}$ given by the operators $\mc{S}'=\{Q_v: v\in V\}$. We compute its value:
    \begin{align*}
        \omega_{q}(\mathfrak{S},\mc{S}')
        &= \expect_{u\in U, v_1,v_2\in N_u} \sum_{\substack{i,j\in [n]: \\ \pi_{v_1,u}(i) = \pi_{v_2,u}(j)}}\tr\Paren{Q_{v_1}^i Q_{v_2}^j}\\
        &= \expect_{u\in U, v_1,v_2\in N_u} \sum_{l\in [k]} \tr\Paren{\sum_{i \in \pi_{v_1,u}^{-1}(l)}Q_{v_1}^i \sum_{j \in \pi_{v_2,u}^{-1}(l)}Q_{v_2}^j}\\
        &=\expect_{u\in U, v_1,v_2\in N_u} \sum_{l\in [k]} \tr\Paren{\Tilde{Q}_{v_1}^l \Tilde{Q}_{v_2}^l},
    \end{align*}
    where we defined $\Tilde{Q}_{v_1}^l=\sum_{i \in \pi_{v_1,u}^{-1}(l)}Q_{v_1}^i$ and similarly for $\Tilde{Q}_{v_2}^l$.

    Now we note that $\{\Tilde{Q}_{v_1}^l\}_{l\in [k]}$ defines a PVM. Indeed we have $\sum_{l\in [k]} \Tilde{Q}_{v_1}^l = \sum_l \sum_{i \in \pi_{v_1,u}^{-1}(l)}Q_{v_1}^i =1$. Moreover, since $\{Q_v^i\}_{i\in [n]}$ forms an orthonormal set of projections, it is clear that $\{\Tilde{Q}_{v_1}^l\}_{l\in [k]}$ will as well form an orthonormal set of projections. Therefore, we can use Lemma 34 of~\cite{mousavispirig24}:
    \begin{align*}
        \omega_{q}(\mathfrak{S},S)&=\expect_{u\in U, v_1,v_2\in N_u} \sum_{l\in [k]} \tr\Paren{\Tilde{Q}_{v_1}^l \Tilde{Q}_{v_2}^l}\\
        &\geq 2 \expect_{u\in U, v_1,v_2\in N_u} \sum_{l\in [k]} \tr\Paren{\Tilde{Q}_{v_1}^l P_u^l + \Tilde{Q}_{v_2}^l P_u^l} - 3\\
        &= 1,
    \end{align*}
    where $N_u$ is the set of vertices adjacent to $u$, and we used that 
    \begin{align*}
        \expect_{u\in U, v_1,v_2\in N_u} \sum_{l\in [k]} \tr\Paren{\Tilde{Q}_{v_1}^l P_u^l} &= \expect_{(u,v_1)\in E} \sum_{l\in [k]} \sum_{i \in \pi_{v_1,u}^{-1}(l)} \tr\Paren{Q_{v_1}^i P_u^l}\\
        &= \expect_{(u,v_1)\in E} \sum_{i\in [n]} \tr\Paren{Q_{v_1}^i P_u^{\pi_{v_1,u}(i)}},
    \end{align*}
    along with the assumption from equation \eqref{eq:assumption-completeness}. This completes the proof of completeness.

For soundness, let $\mc{L}=(\{Q_v^i\}_{i\in [n]})$ be a quantum synchronous assignment for $\mathfrak{S}$ which satisfies:
    \begin{align}\label{eq:assumption-soundness}
        \omega_{q}(\mathfrak{S},\mc{L})= \expect_{u\in U, v_1,v_2\in N_u} \sum_{\substack{i,j\in [n]: \\ \pi_{v_1,u}(i) = \pi_{v_2,u}(j)}}\tr\Paren{Q_{v_1}^i Q_{v_2}^j} >s,
    \end{align}
    for some $s<1$.
    Define the operators $P_u^l = \expect_{v \in N_u} \sum_{i \in \pi_{e,v}^{-1}(l)} Q_v^i$. These operators form a POVM as they are clearly positive and satisfy
    \begin{align*}
        \sum_{l\in [k]} P_u^l =  \expect_{v \in N_u} \sum_l \sum_{i \in \pi_{e,v}^{-1}(l)} Q_v^i = 1,
    \end{align*}
    as $\{Q_v^i\}_i$ is a PVM. %\tnote{Need to use extremality of PVM to find PVM that has same value. Ultimately we need a synchronous staregy.} 
    Using Naimark's dilation theorem, we can replace $P_u$ with a PVM that performs at least as well.
Now, consider the assignment $\mc{S}=(P_u, Q_v)$ for $\mc{G}(B,J,R)$. This assignment has value:
    \begin{align*}
        \omega_q(\mc{G}(B,J,R),\mc{S})&=\expect_{(u,v_1)\in E} \sum_{i\in [n]} \tr\Paren{P_u^{\pi_{v_1,u}(i)} Q_{v_1}^i}\\
        &= \expect_{(u,v_1)\in E} \expect_{v_2 \in N_u} \sum_{i\in [n]} \sum_{j \in \pi_{e,v_2}^{-1}(\pi_{v_1,u}(i))} \tr\Paren{  Q_{v_2}^j Q_{v_1}^i}\\
        &= \expect_{u\in U, v_1,v_2\in N_u} \sum_{\substack{i,j\in [n]: \\ \pi_{v_1,u}(i) = \pi_{v_2,u}(j)}}\tr\Paren{Q_{v_1}^i Q_{v_2}^j}>s,
    \end{align*}
    using the assumption from equation \eqref{eq:assumption-soundness}. By \cref{lem:dummy-game-value} we conclude that the $s\leq 1/\exp(R)$ and the claim follows.
\end{proof}

\begin{remark}
    In the argument above, we note that properties (2)-(4) are not necessary to establish the proof of the theorem statement. However, in establishing these results, it does mean that in addition to $\RE$-hardness, the instances of smooth label cover coming from this reduction also have these additional properties, which made them amenable to the specific applications in \cite{guruswami-bypassing}. 
\end{remark}

\subsection{The Hardness of Quantum Oracularized Smooth Label Cover}

In this section, we show an analog of Theorem \ref{thm:QSmoothLC} where we consider the quantum oracularized value instead of the quantum synchronous value.

\begin{corollary}\label{thm:QOSmoothLC}
    For any $0<s<1$, there exists large enough alphabets over which $\mathrm{SLC}_{qo}(1,s)$ is $\RE$-hard.
\end{corollary}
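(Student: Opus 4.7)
The plan is to run the construction of \cref{thm:QSmoothLC} while starting from a strengthened variant of \cref{thm:3sat-5} for a modified $\SAT10^*$ protocol whose perfect synchronous quantum strategies are what I shall call \emph{2-oracularizable}: the PVMs associated to two variables $x$ and $y$ of the $\SAT10$ instance commute whenever either $x,y$ lie in a common clause, or there exists a third variable $z$ sharing a clause with each of $x$ and $y$. I would obtain this protocol from the $\SAT5^*$ instances of \cref{thm:3sat-5} by applying the constraint-subdivision transformation of \cite{culfmastel24}: for every pair of variables appearing together in a clause, insert an auxiliary link variable together with a pair of equality constraints that, in the weighted constraint-variable algebra $\mc{A}_{c-v}$, force the observables of the two variables to commute. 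The soundness result of \cite{culfmastel24} on subdivisions, together with the fact that each original $\SAT5$ variable now appears in at most $10$ contexts, should give an $\RE$-hard $\SAT10^*$ protocol whose perfect strategies automatically satisfy the stronger 2-oracularizable commutation relations.

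With this hardness result in hand, I would re-run the construction of the dummy game $\mc{G}(B,J,R)$ and the smooth label cover instance $\mathfrak{S}$ from \cref{thm:QSmoothLC} verbatim. For completeness, a perfect 2-oracularizable strategy for the constraint-variable game of $B$ lifts via \cref{lem:dummy-game-value} to a perfect strategy for $\mc{G}(B,J,R)$ whose left-vertex PVMs $Q_v^{\vec\sigma}$ are tensor products of the underlying constraint PVMs of $B$. The key observation is that two left vertices $v_1,v_2$ joined by an edge in the smooth label cover graph differ only at positions where their constraints share a common variable of $B$, so the 2-oracularizable property supplies $[Q_{v_1}^i,Q_{v_2}^j]=0$ for all adjacent $v_1,v_2$ and all labels $i,j$. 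The coarsened PVMs $\tilde Q_v^l$ then also commute along edges, so the smooth label cover assignment produced in the proof of \cref{thm:QSmoothLC} is oracularizable and the same computation yields $\omega_{qo}(\mathfrak{S},\mc{S}')=1$.

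For soundness, given an oracularizable assignment $\mc{L}$ with $\omega_{qo}(\mathfrak{S},\mc{L})>s$, I would form the averaged POVMs $P_u^l=\expect_{v\in N_u}\sum_{i\in\pi_{e,v}^{-1}(l)}Q_v^i$ and Naimark-dilate on an ancilla tensored with the underlying Hilbert space. Because $\mc{L}$ is oracularizable, $[Q_{v_1}^i,Q_{v_2}^j]=0$ for all pairs $v_1,v_2\in N_u$, so the dilated $P_u$ automatically commute with the $Q_v$ for $v\in N_u$, giving an oracularizable strategy for $\mc{G}(B,J,R)$ of value exceeding $s$. A 2-oracularizable analogue of \cref{lem:dummy-game-value}, via Yuen's parallel repetition applied to the 2-oracularizable value of the constraint-variable game of $B$, then forces $s\leq 1/\poly(R)$, and the trivial $\beta$-reductions of \cref{prop:beta_red} push $\RE$-hardness down to any $0<s<1$ by taking $R$ sufficiently large.

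The main obstacle I expect is twofold: first, confirming that Naimark dilation can be performed so as to preserve the needed commutations between the $P_u$ and the family $\{Q_v\}_{v\in N_u}$, which should follow by dilating on an ancilla on which the $Q_v$ act trivially so that all existing commutations transport unchanged; second, verifying that the subdivision transformation of \cite{culfmastel24} preserves the $\RE$-hardness gap when one restricts attention from the synchronous value to the 2-oracularizable value, which amounts to tracking the defect bounds for the added link variables and equality constraints through the weighted constraint-variable algebra. Once both points are pinned down, the rest of the reduction is a direct corollary of the proof of \cref{thm:QSmoothLC}, with every step replaced by its oracularizable analogue.
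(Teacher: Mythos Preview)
Your completeness side is essentially the paper's: both upgrade \cref{thm:3sat-5} to a $\SAT$ variant whose perfect strategies are 2-oracularizable (the paper does this in \cref{lem:3sat5_commutation} via a local-copy/global-variable subdivision; your link-variable sketch is less explicit but points in the same direction), and both observe that the tensor-product strategy for $\mc{G}(B,J,R)$ then has PVMs commuting along edges of the smooth label cover graph, so the assignment $\mc{S}'$ from the proof of \cref{thm:QSmoothLC} is oracularizable.

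The gap is on the soundness side, where you work harder than necessary and lean on tools that are not available. You attempt to stay inside the oracularizable world throughout: from an oracularizable $\mc{L}$ for $\mathfrak{S}$ you want an oracularizable strategy for $\mc{G}(B,J,R)$, and then a ``2-oracularizable analogue'' of \cref{lem:dummy-game-value} via Yuen's parallel repetition. But Yuen's theorem controls the synchronous quantum value, not its restriction to oracularizable correlations, and no polynomial parallel repetition for the oracularizable value is known. Likewise, the subdivision soundness of \cite{culfmastel24} bounds the synchronous defect, not a 2-oracularizable one; your second ``obstacle'' is a genuine hole, not a routine check.

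The paper avoids all of this with a one-line observation already at your disposal: the trivial $\beta$-reduction $(\omega_{qo},\omega_q^s)\to(\omega_{qo},\omega_{qo})$ from \cref{prop:beta_red}, valid because $\omega_{qo}\leq\omega_q^s$. It therefore suffices to prove hardness of the mixed problem $\mathrm{SLC}_{qo,q}(1,s)$ (this is \cref{thm:QQOSmoothLC}), which means the soundness argument of \cref{thm:QSmoothLC} carries over \emph{unchanged}: any synchronous quantum assignment with value $>s$ already contradicts \cref{lem:dummy-game-value}, and there is no need to track oracularizability on that side at all. Only completeness needs the 2-oracularizable upgrade. Your commutation-preserving Naimark dilation, the oracularizable parallel repetition, and the oracularizable subdivision soundness are all unnecessary.
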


%We want to prove this theorem

This result follows directly from the following theorem and the trivial reduction $\mathrm{SLC}_{qo,q}(1,s)$, see \cref{fig:triv_reds}.

\begin{theorem}\label{thm:QQOSmoothLC}
    For any $0<s<1$, there exists large enough alphabets over which $\mathrm{SLC}_{qo,q}(1,s)$ is $\RE$-hard.
\end{theorem}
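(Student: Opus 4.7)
The plan is to follow the same two-step reduction as in the proof of Theorem \ref{thm:QSmoothLC}, but starting from a strengthened $\SAT$-type hardness that already produces 2-oracularizable witnessing strategies. Following the technical overview, I would first establish $\RE$-hardness for a $\SAT10^*$ clause-variable game in which any perfect synchronous quantum strategy must be \emph{2-oracularizable}: the observables $A_{i}$ and $A_{i'}$ for any two clauses $i, i'$ must commute whenever $i$ and $i'$ share a variable. This can be constructed from Theorem \ref{thm:3sat-5} by applying a constraint \emph{subdivision} transformation, splitting each variable and introducing intermediate ``bridge'' clauses that enforce the required pairwise commutations in any near-perfect strategy, while quantum soundness is preserved by the subdivision theorem of \cite{culfmastel24}. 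The degree at most doubles from $5$ to $10$, and the completeness-soundness gap is preserved up to a controlled multiplicative loss.

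Next, I would run the dummy-clause and smooth-label-cover construction from the proof of Theorem \ref{thm:QSmoothLC} verbatim on the resulting $\SAT10$ instance $B$. The key observation is about the edge structure of the constructed graph $G$: two left vertices $\vec{i}, \vec{i}' \in [m]^{(J+1)R}$ are adjacent in $G$ precisely when they share a right neighbour $\vec{j}$ in $\Gamma(\mc{G}(B,J,R))$, which means $\vec{i}$ and $\vec{i}'$ agree outside some set $L \subseteq [(J+1)R]$ of size $R$, and for every $q \in L$ the clauses $i_q$ and $i_q'$ both contain the common variable $j_q$. In the completeness direction, the strategy $Q_{\vec{i}}^{\vec{\sigma}} = \bigotimes_q A_{i_q}^{\sigma_q}$ constructed from the 2-oracularizable $\SAT10^*$ strategy therefore satisfies $[Q_{\vec{i}}^{\vec{\sigma}}, Q_{\vec{i}'}^{\vec{\tau}}] = 0$ on every edge of $G$: for $q \notin L$ the two tensor factors come from the same PVM and commute trivially, and for $q \in L$ they commute by 2-oracularizability. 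This yields a perfect \emph{oracularizable} quantum SLC strategy, giving completeness $\omega_{qo}(\mathfrak{S}) = 1$.

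For the soundness direction, essentially no change is needed. The argument in the proof of Theorem \ref{thm:QSmoothLC} takes an arbitrary synchronous quantum strategy of value $>s$ for the SLC game and produces a quantum strategy for the dummy clause game of value $>s$ (using only Naimark's dilation and the averaging construction, neither of which requires any commutation from the SLC side). By Lemma \ref{lem:dummy-game-value} and the strengthened hardness of $\SAT10^*$, this forces $s \leq 1/\poly(R)$. Since only the quantum value is required of the soundness side, while the completeness side is witnessed by an oracularizable strategy, the reduction establishes $\RE$-hardness of $\mathrm{SLC}_{qo,q}(1,s)$ as claimed, and Corollary \ref{thm:QOSmoothLC} follows from the trivial $\alpha$-reduction $\mathrm{SLC}_{qo,q}(1,s) \to \mathrm{SLC}_{qo}(1,s)$ in Figure \ref{fig:triv_reds}.

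The main obstacle is the first step: designing the $\SAT5 \to \SAT10$ subdivision so that every pair of clauses $(i_q, i_q')$ that can simultaneously appear in two SLC-adjacent left vertices is actually covered by the 2-oracularizability relations produced by the subdivision, and so that the quantitative soundness loss through the subdivision can be absorbed into the final $1/\poly(R)$ bound. This requires careful bookkeeping via the weighted algebra formalism of \cite{MS24, culfmastel24} to track defects under the subdivision, whereas the completeness analysis and the outer reduction are then a direct adaptation of the construction used for Theorem \ref{thm:QSmoothLC}.
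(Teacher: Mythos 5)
Your proposal matches the paper's proof: the paper's \cref{lem:3sat5_commutation} is exactly your first step (splitting each variable into per-clause copies joined by equality ``bridge'' gadgets, with soundness via the subdivision theorem of \cite{culfmastel24}, yielding a $\SAT$-$10$ instance), and your completeness and soundness analyses of the outer dummy-clause/SLC reduction coincide with the paper's. The only caveat is that you ask for the 2-oracularizability to be \emph{enforced} in any near-perfect strategy, which is a rigidity statement stronger than needed and than what the paper proves; for completeness it suffices that \emph{some} perfect strategy of the $\SAT$-$10$ instance is 2-oracularizable, which the equality gadgets give for free by assigning each local copy the same observable as its global variable.
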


%Indeed, given this lemma, Theorem \ref{thm:QSmoothLC} follows immediately by a trivial reduction (see \cite{mousavispirig24} for details on trivial reductions).

To prove this theorem, we start from a slightly altered version of Theorem \ref{thm:3sat-5}. In particular, the following lemma shows that we can alter the $\SAT5$ game so that the resulting optimal strategies for the SLC instance obtained by the reduction outlined in the proof of \cref{thm:QSmoothLC} will be oracularizable.

%\tnote{The statements of the following corollary and remark aren't very elegant. Maybe could first explain before what a quantum satisfying assigment is à la Cleve and Mittal and then state the corollary.}

\begin{lemma}\label{lem:3sat5_commutation}
    There exists a constant $0<s<1$ such that it is $\RE$-hard to distinguish whether the clause-variable game associated with a $\SAT-10$ instance has synchronous quantum value 1 or at most~$s$. Moreover, when the instance has value 1, then there exist binary observables $\{A_x\}_{x\in X}$ corresponding to each variable $x\in X$ in the instance, which satisfy the following properties:

    \begin{enumerate}
        \item the observables satisfy the constraints
        \item $[A_x,A_y]=0$ if there exists $i\in[m]$ such that $x,y\in V_i$,
        \item $[A_x,A_y]=0$ if there exist $i,j\in[m]$ and $z\in X$ such that $x,z\in V_i$ and $y,z\in V_j$.
    \end{enumerate}
\end{lemma}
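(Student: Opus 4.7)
My plan is to establish this by a reduction from Theorem~\ref{thm:3sat-5}. Given a $\SAT5$ instance $B = (X,\{(V_i,C_i)\}_{i=1}^m)$, I will construct an augmented instance $B'$ in the class $\SAT10$ together with a natural correspondence between perfect synchronous quantum strategies on $B$ and those on $B'$. The key design principle is that $B'$ should contain, for every pair of variables $(x,y)$ at distance $2$ in the variable graph of $B$, at least one clause of $B'$ simultaneously containing $x$ and $y$; Property~(3) of the lemma will then follow automatically from Property~(2) applied to such a clause, since any two variables co-occurring in a single constraint have commuting observables in any perfect synchronous strategy.

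Concretely, the augmentation I have in mind is as follows. For each pair of clauses $(C_i, C_j)$ of $B$ sharing a variable $z \in V_i \cap V_j$, introduce a bundle of auxiliary $3$-SAT clauses together with a few new auxiliary variables that jointly encode a merged constraint on $V_i \cup V_j$ whose satisfying assignments are those $(\phi_1,\phi_2) \in R_i \times R_j$ with $\phi_1(z)=\phi_2(z)$. The bundle is designed so that each pair $x \in V_i$, $y \in V_j$ appears together in at least one of its $3$-SAT clauses, and so that the auxiliary variables are functionally determined by the original ones (a standard Tseitin-style encoding). Variable degrees are then equalized to exactly $10$ by padding with trivially-satisfiable $3$-SAT clauses. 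The completeness direction is immediate: a perfect synchronous strategy $\{A_x\}_{x \in X}$ for $B$ extends to a perfect synchronous strategy for $B'$ by supplementing with commuting observables for each auxiliary variable using the block structure of the clause PVMs, and properties (1)--(3) hold by construction -- (2) from the $1$-oracularization inside each single-clause PVM, and (3) from the fact that every distance-$2$ pair of original variables lies in one of the added bundle clauses.

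For the soundness direction, I will appeal to the constraint subdivision framework of \cite{culfmastel24}: the augmentation $B \leadsto B'$ can be expressed as an iteration of constraint subdivisions, so any tracial state on $\mc{A}_{c-v}(B')$ with small defect projects to a tracial state on $\mc{A}_{c-v}(B)$ with defect blown up by at most a polynomial factor. This transfers the $\SAT5$ quantum hardness of Theorem~\ref{thm:3sat-5} to the $\SAT10$ instance $B'$ with some (possibly weaker) constant soundness $s<1$. The main technical obstacle, in my view, is the combinatorial bookkeeping required to simultaneously (a) capture every distance-$2$ pair of original variables in at least one common added clause, (b) arrange every variable to have degree exactly $10$ after padding, and (c) verify that the overall transformation fits literally into the subdivision framework of \cite{culfmastel24} without weakening its quantitative soundness guarantee. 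Once this combinatorial design is in place, the rest of the proof reduces cleanly to invoking the subdivision theorem together with Theorem~\ref{thm:3sat-5}, and reading off the observables $\{A_x\}_{x \in X}$ as the variable marginals of the clause PVMs of any perfect synchronous strategy on $B'$.
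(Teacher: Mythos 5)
There is a genuine gap, and it lies exactly where you flag the ``main technical obstacle'': your core design principle --- that every pair of variables at distance $2$ should co-occur in some added clause --- is quantitatively incompatible with the degree-$10$ target. In a $\SAT5$ instance a variable $x$ lies in $5$ clauses, each clause-mate $z$ of $x$ lies in up to $5$ clauses, and each of those contains $2$ further variables, so $x$ has on the order of a hundred distance-$2$ neighbours $y$. A $3$-SAT clause containing $x$ can cover at most two such pairs $(x,y)$, so $x$ would have to appear in dozens of added clauses; padding with trivial clauses can only \emph{increase} degree, never repair this. Moreover, the lemma's property (3) must hold for all variables of the final instance $B'$, and every clause you add (including the Tseitin auxiliaries) creates fresh distance-$2$ pairs, so the construction as described is not closed under its own requirement. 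Finally, the soundness step is not automatic either: gluing two clauses into a merged constraint on $V_i\cup V_j$ is a classical reduction, not literally a subdivision in the sense of \cite{culfmastel24}, so the quoted theorem does not apply off the shelf.

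The paper avoids all of this by \emph{not} making distance-$2$ pairs co-occur. Instead it creates, for each clause $u$ and each $x\in V_u$, a local copy $x_u$, replaces the original clause by copies acting only on local variables, and ties each $x_u$ to the global $x$ by a two-clause equality gadget ($x\lor x\lor\lnot x_u$ and $x_u\lor x_u\lor\lnot x$); degrees come out to exactly $10$. The point is that in the resulting instance every distance-$2$ path from $x$ through $z$ to $y$ must use at least one equality constraint involving a global variable, and a perfect strategy assigns \emph{literally equal} observables along equality edges, so $A_x=A_z$ and then $[A_x,A_y]=[A_z,A_y]=0$ follows from property (2) alone. This mechanism needs no co-occurrence of $x$ and $y$, is closed under the construction, and is genuinely a subdivision, so \cite[Theorem 5.2]{culfmastel24} applies directly for soundness. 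If you want to salvage your approach, you would need to replace ``co-occurrence'' by some propagation mechanism of this kind; as written, the combinatorial bookkeeping you defer cannot be made to work.
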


%\cnote{2-oracularizable is needed because of the length 2-paths in the reduction to smooth label cover. We note that Cleve and Mittal showed in \cite{cleve_characterization_of_bcs_games} that if a clause-variable game associated to a BCS has quantum value 1 then there exists a so-called quantum satisfying assignment, consisting of operators $A_i$ for every variable $x_i$ in the instance which satisfy the first and second properties in Lemma \ref{cor:3sat5_commutation}.}

\begin{proof}
The only non-trivial part of the statement is the third property. Let $S$ be a $\SAT5$ instance and $\Gamma=\Gamma(S)$ be its constraint graph with constraint (left) vertices $u\in U$ and (right) variables vertices $x\in X$. Starting from $\Gamma$ consider the following graph $\Gamma'$ which is constructed as follows: Let $N(u)\subset X$ be the neighbourhood of $u$. These represent the variables in the corresponding constraint. For each $u\in U$, add a ``local'' copy $x_u$ to $X$ for each ``global'' vertex $x\in N(u)$ (i.e.~5 copies for each variable vertex, as $S$ is a $\SAT5$ instance). For each copy of $N(u)$ added to $X$, we remove $u$ and add 7 copies of the vertex $u$ to $U$ and, for each new copy $u'$, an edge from $u'$ to each vertex $x_u$ in the copy of $N(u)$. Now, for each added variable vertex $x_u$, add 2 vertices $w_{x_u}$ and $v_{x_u}$ to the constraint vertices $U$, and edges connecting both $x_u$ and $x$ to $w_{x_u}$ and $v_{x_u}$. We now observe that each vertex in $X$ has degree 10. For convenience later, we let $U'\subset U$ denote the vertices in $\Gamma'$ incident to only ``local'' variable vertices. Likewise, we let $V\subset U$ and $W\subset U$ denote the subsets of vertices $w_{x_u}$ and $v_{x_u}$, respectively. Notably, $U=U'\sqcup V \sqcup W$ are the constraint vertices of $\Gamma'$.

Now, consider the CS $S$ and the connectivity graph $\Gamma'$ built from $\Gamma$ as above. From $S$ and $\Gamma'$ we construct a new CS $S'$ as follows. For each $u'\in U'$, there is a $\SAT$ clause $u$ from $S$ that was copied to create $u$. Add to $S'$ a copy of that constraint with $x_u$ taking the role of $x$ for each $x \in N(u)$. Next, for the vertices $v_{x_u}\in V$, add the constraint $x\lor x\lor \lnot x_{u}$ to $S'$. Finally, for the vertices $w_{x_u}\in W$, add the constraint $x_{u}\lor x_{u}\lor \lnot x$ to $S'$.

We now observe that for the constructed CS $S'$ we have that $\Gamma(S')=\Gamma'$. Next, suppose there exists a quantum perfect strategy for the constraint-variable game corresponding to $S$. Denote $A_x$ for the observable corresponding to $x\in X$ in this perfect strategy. Now, extend this quantum assignment to the variables of $S'$ by taking the observables for $x_u$ to be $A_{x_u}=A_x$. By construction, this satisfies both the 3CNF constraints on the labelled variables and the new equality constraints. Hence, it gives a quantum satisfying assignment to $S'$ --- hence the reduction is complete. Moreover, we note that this assignment also satisfies conditions (2) and (3) of the statement. First, if $x$ and $y$ belong to the same context of $S'$, then either $x=a_u$ and $y=b_u$ for some $a,b$ belonging to the same context of $S$ and hence $[A_x,A_y]=[A_a,A_b]=0$; or $y=x_u$ (or vice-versa) so $A_x=A_y$ and hence they commute. Second, suppose $x$ and $y$ are variables of $S'$ that each share a constraint with a variable $z$. By construction, either the constraint containing $x$ and $z$ or the constraint containing $y$ and $z$ must contain a ``global'' variable $w$. Without loss of generality, let us assume we are in the first case. That implies that $x=w$ or $x=w_u$, and $z=w$ or $z=w_u$, so $A_x=A_w=A_z$. As $z$ is in the same constraint as $y$, this means $[A_x,A_y]=[A_z,A_y]=0$.

For soundness, one can observe that the transformation between $S$ to $S'$ is an instance of subdivision, and therefore we can appeal to \cite[Theorem 5.2]{culfmastel24}. In particular, this implies that if the constraint-variables CS games corresponding to $S$ has quantum synchronous value less than $s$, then the constraint variable game corresponding to $S'$ has quantum synchronous value less than $C_0s$, for some universal constant $C_0>1$.

% Consider a constraint vertex $u$ with neighbourhood $N(u)$  
% \begin{enumerate}
%     \item Replace each constraint with its constraint-labelled version. This is the same exact constraint, only with variables labelled by their constraint in addition to their original label.
%     \item Add equality constraint between the original (global) variables and the new (local) constraint version. This can be achieved through 2 3SAT constraints involving the global variable and the local variable.
%     \item Note that the global variables now appear in 10 constraints, and the local variables appear in 3. If one desired, they could add 7 more "local versions" of the constraint to ensure that each local-constraint variable appears in exactly 10 constraints, resulting in a 3SAT-10 instance.
%     \item This reduction preserves completeness. To show that there is a constant drop-off in soundness, we observe the reduction is an instance of subdivision. All that is needed is to explain why this is a subdivision and then we can cite results...
% \end{enumerate}
\end{proof}

Now we can establish \cref{thm:QQOSmoothLC}. Essentially, the idea is that the $2$-oracularizability condition in \cref{lem:3sat5_commutation} will ensure that the resulting quantum synchronous strategy for the smooth label cover instance will be oracularizable.

\begin{proof}[Proof of \cref{thm:QQOSmoothLC}]
    This proof is similar to the proof of \cref{thm:QSmoothLC}. The only difference is we start the reduction from the instances in \cref{lem:3sat5_commutation} rather than \cref{thm:3sat-5}. The proof of soundness is exactly the same. We only need to show that in the completeness argument that the assignment we construct for the smooth label cover instance is a quantum oracularizable assignment (meaning that it satisfies the desired commutation relations). 

    %To finish the proof, all that remains is to show that the quantum strategy $\mc{S}'=(Q_v)$ is oracularizable. Except now, we start from the 2-oracularizable strategy in the $\mc{G}(B)$ game where coming from the $B=\SAT-10$ which is $2$-oracularizable.

    Consider an instance $B$ of $\SAT-10$ from \Cref{lem:3sat5_commutation} which has quantum value 1. Let $A_x$ denote the operators as in the statement of \Cref{lem:3sat5_commutation}, and let $\{A_x^b\}_{b\in\{0,1\}}$ denote the corresponding PVMs. Then, by the construction in the proof of \cref{lem:dummy-game-value}, Alice's operators in the corresponding perfect strategy for $\mc{G}(B,J,R)$ are
    $$P_{\vec{i}}^{\vec{\sigma}}=\bigotimes_{q=1}^{(J+1)R}\prod_{x\in V_{i_q}}A_x^{\sigma_q(x)}.$$
    We have that, for questions $\vec{i},\vec{i}'$ of $\mc{G}(B,J,R)$ such that there exists $\vec{j}$ with $\pi(\vec{i},\vec{j}),\pi(\vec{i}',\vec{j})>0$, $[P_{\vec{i}}^{\vec{\sigma}},P_{\vec{i}'}^{\vec{\sigma}'}]=0$. This is because, for all $q\in[(J+1)R]$, either $j_q=i_q=i_q'$ or $j_q\in V_{i_q}\cap V_{i_q'}$, and therefore for each $x\in V_{i_q}$ and $y\in V_{i_q'}$ either $x$ and $y$ are in the same context, or there exists $z$ that shares a context with both $x$ and $y$. Hence, by properties (2) and (3) of the strategy induced by the $A_x$ from \cref{lem:3sat5_commutation}, $[A_x^{\sigma_q(x)},A_y^{\sigma_q'(y)}]=0$. Then, using the construction of \cref{thm:QSmoothLC}, the strategy $\mc{S}'$ for $\mathfrak{S}$ is given by $Q_v^i=P_{\vec{i}}^{\vec{\sigma}}$, where $v\in V$ corresponding to $\vec{i}$ and $i\in[n]$ corresponding to $\vec{\sigma}$. It follows from the above commutation relation on the $P_{\vec{i}}^{\vec{\sigma}}$ that this strategy is oracularizable. This is a perfect strategy for $\mathfrak{S}$ by the same argument as in the proof of \cref{thm:QSmoothLC}. 
    
    %We now show that a $(J,R)$-dummy clause-variable game associated to this instance has quantum value 1 and moreover the operators associated to the clauses (i.e. the operators used by the player who receives the questions corresponding to the clauses $W''$) satisfy certain commutation relations. This follows similarly to the proof of \Cref{cor:dummysat}. Following \cite{cleve_characterization_of_bcs_games}, we construct operators $B_j$ for every clause $C_j$ using the operators $A_i$ from Corollary \ref{cor:3sat5_commutation}. By the third property of the $A_i$ operators in Corollary \ref{lem:3sat5_commutation}, we know that two operators $B_j$ and $B_k$ will commmute with one another if $C_j$ and $C_k$ share a common variable. Now using the operators $A_i$ and $B_j$, we can construct operators for sets of clauses $W''$ and sets $U''$ as in the proof of Corollary \ref{cor:dummysat}. Let $O_i$ be the operator associated to a set of clauses $W''_i$. Then $O_1$ and $O_2$ commute with one another if $W_1''$ and $W_2''$ both share a set $U''$ in common (or in other words are connected to the left vertex $U''$ when considering the graph $G''$ as described at the beginning of the second step of Section \ref{sec:NCsmoothLC}). Finally, we can use the PVM that these observables define as the operators $Q_v$ in the completeness proof of Theorem \ref{thm:NCSmoothLC} which clearly satisfy the desired commutation relations. This completes the proof. 
\end{proof}

\section{RE-hardness of 3SAT* with fixed degree}\label{sec:3SAT5}

In this section, we give a proof of \cref{thm:3sat-5} inspired by the classical approach in \cite{Feige98threshold}. Our starting point is the following result.

\begin{theorem}[\cite{culfmastel24}]\label{thm:3sat-cm24}
    There exists a constant $0<s<1$ such that there is a polynomial-time reduction from the halting problem to $\mathrm{SuccinctCSP}_{c-v}(\mathrm{3SAT})_{1,s}^\ast$.
\end{theorem}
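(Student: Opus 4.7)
The plan is to establish the theorem by composing polynomial-time reductions, starting from the halting problem and ending at $\SuccinctCSP_{c\text{-}v}(\SAT)^*_{1,s}$. The backbone is the $\MIP^*=\RE$ theorem of \cite{ji_mip_re}, which already provides a polynomial-time reduction from the halting problem to deciding the quantum value of a succinctly-presented two-prover one-round protocol with constant completeness–soundness gap. What remains is to massage the game output by $\MIP^*=\RE$ into a succinctly-presented constraint-variable $\SAT$ game.

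First, I would reduce the generic succinct $\MIP^*$ problem to a succinct $\BCSMIP^*$ problem. The question/answer structure of the game output by \cite{ji_mip_re} is not yet in constraint-variable form, so I would apply the answer-reduction machinery from that line of work, as refined in \cite{dong2023computational}, to produce a game whose questions are either a ``constraint'' index or a ``variable'' index of some Boolean CS of bounded arity $k$, and whose predicate is exactly the consistency check of the constraint-variable game. The CS is presented succinctly by a polynomial-time Turing machine sampling the constraints, and the reductions preserve constant completeness and soundness up to computable constants.

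Second, I would encode the resulting BCS as a $\SAT$ instance via a Tseitin-style gadget. For each constraint $\mathcal{C}_i = (\mathcal{U}_i, \mathcal{R}_i)$ of arity $k$, express the indicator of $\mathcal{R}_i$ as a Boolean circuit and introduce auxiliary variables for the internal gates, each of which is enforced by $O(1)$ three-literal clauses. Since $k$ is bounded, each $\mathcal{C}_i$ is replaced by $O(2^k)$ 3-clauses with a bounded number of auxiliaries, and the succinct sampler is modified to first sample a constraint $\mathcal{C}_i$ and then sample uniformly one of its encoding clauses. Classical and quantum completeness follow by extending any (quantum) assignment to the original variables to one for the auxiliaries using the homomorphism into the free product algebra that sends each auxiliary observable to the corresponding polynomial in the operators for the variables in $\mathcal{U}_i$.

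The main obstacle is quantum soundness of the gadget step: in the quantum setting the operators assigned to auxiliary variables need not behave like gate computations on the originals. To handle this I would work in the weighted algebra formalism of \cite{MS24, culfmastel24}, showing that a low-defect tracial state on the weighted constraint-variable algebra of the $\SAT$ instance restricts, along the $\ast$-algebra inclusion of the ``original'' variable subalgebra, to a low-defect tracial state on the weighted constraint-variable algebra of the BCS. Because each BCS constraint is encoded by a constant number of 3-clauses with constant-arity auxiliaries, the defect inflates by only a multiplicative $O_k(1)$ factor, yielding a constant completeness–soundness gap. Composing this with the outer reductions, and tracking that every transformation is computable in polynomial time in the Turing-machine description of the instance (so that succinctness is preserved), gives the desired polynomial-time mapping reduction from the halting problem to $\SuccinctCSP_{c\text{-}v}(\SAT)^*_{1,s}$ for some constant $s<1$.
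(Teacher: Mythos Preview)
The paper does not prove this theorem; it is quoted from \cite{culfmastel24} and used as a black-box starting point for the reductions in \cref{sec:3SAT5}. Your sketch is exactly the strategy carried out in \cite{culfmastel24}: start from the $\MIP^*=\RE$ protocol of \cite{ji_mip_re}, apply the answer-reduction of \cite{dong2023computational} to obtain a succinct $\BCSMIP^*$ protocol with constant answer length, and then convert each bounded-arity constraint to a constant-size block of $3$CNF clauses via a classical-homomorphism/Tseitin gadget, with quantum soundness controlled by the weighted-algebra defect bounds of \cite{MS24,culfmastel24}. At this level of detail your outline is correct, and there is no proof in the present paper to compare it against.
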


First, we want to show hardness of $\SAT$ constraint-variable $\BCSMIP^\ast$ protocols with uniform question distributions. However, due to a later reduction in this section (\cref{thm:replacement-soundness}), we cannot work with the succinct presentation of nonlocal games, but rather the direct presentation. To do this, we move the exponential-time reduction to the level of RE, where the computation time is immaterial, by making use of a variant of the halting problem.

\begin{definition}
    Fix a universal Turing machine $M$. The \emph{halting problem} is the language $\mathrm{HALTING}$ of all strings $x\in\{0,1\}^\ast$ such that $M$ halts on input $x$. For a function $f:\N\rightarrow\N$ such that $f(n)\geq n$, the \emph{$f$-padded halting problem} is the language $\mathrm{PADDEDHALTING}_f=\set{x0^{f(|x|)-|x|}}{x\in\mathrm{HALTING}}$. 
\end{definition}

There is a $O(f)$-time reduction from the halting problem to the padded halting problem by padding with $0$, and a polynomial-time reduction from the padded-halting problem to the halting problem by truncating. Therefore $\mathrm{PADDEDHALTING}_f$ is $RE$-complete.

Now, we show that constraint-variable BCS nonlocal games with uniform distribution are $\RE$-complete in the following way.

\begin{lemma}\label{lem:padded-to-3sat}
    There exists $s\in(0,1)$ such that there is a polynomial-time reduction from the $\exp$-padded halting problem to $\mathrm{CSP}_{c-v}(\mathrm{3SAT})^\ast_{1,s}$.
\end{lemma}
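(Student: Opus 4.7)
The plan is to unfold the succinct $\SAT$ CSP obtained from \cref{thm:3sat-cm24} into a directly-presented one by enumerating all random choices of its sampler, using the $\exp$-padding to absorb the exponential blow-up in instance size. Given a padded halting instance $y = x 0^{\exp(|x|)-|x|}$ of length $N = \exp(n)$ with $n = |x|$, I would first strip the padding to recover $x$ in time $O(N)$, then apply the polynomial-time reduction of \cref{thm:3sat-cm24} to produce a probabilistic Turing machine $M$ presenting a succinct $\SAT$ constraint system with some distribution $\pi$ over its constraints. Since this reduction runs in time $T(n) = \poly(n)$, the machine $M$ uses at most $T(n)$ random bits, and after padding every random tape to length $T(n)$ we may assume $\pi(\mc{C}_i) = k_i/2^{T(n)}$ for nonnegative integers $k_i$.

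Next, I would enumerate all $2^{T(n)}$ random tapes $r\in\{0,1\}^{T(n)}$ and simulate $M$ on each to obtain the sampled constraint $\mc{C}_r$; the output of the reduction is the direct $\SAT$ constraint system $S' = (\mc{X}, \{\mc{C}_r\}_{r \in \{0,1\}^{T(n)}})$, which is a multiset in which each distinct constraint $\mc{C}_i$ from the support of $\pi$ appears exactly $k_i$ times. This list contains $2^{T(n)} = 2^{\poly(n)}$ constraints, which fits into $\poly(N)$ provided that $\exp(n)$ dominates $2^{\poly(n)}$; the standard convention for the $\exp$-padding is chosen precisely so that this holds. Hence the whole reduction runs in time $\poly(N)$.

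For correctness, the key observation is that the uniform distribution over the multiset $\{\mc{C}_r\}_r$ coincides exactly with $\pi$ on the distinct constraints in its support, so the constraint-variable game associated to the direct CSP $S'$ (with uniform weighting over its listed constraints) is the same nonlocal game as the one specified by the succinct pair $(M,\pi)$. Their quantum values are therefore equal, and completeness and soundness transfer directly from \cref{thm:3sat-cm24}: $x \in \mathrm{HALTING}$ implies the quantum value is $1$, while $x \notin \mathrm{HALTING}$ implies the quantum value is strictly below the constant $s$ given by that theorem.

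The main obstacle is the exponential blow-up caused by enumerating all random tapes of the succinct sampler, which is unavoidable in general since a polynomial-time sampler can toss polynomially many coins and the support of $\pi$ can have size up to $2^{\poly(n)}$. This is precisely why one cannot work directly from the halting problem and must instead pass through the $\exp$-padded variant, whose padding is engineered so that exponential blow-up in $n$ is only polynomial in $N$. A minor bookkeeping issue -- making sure the denominators of $\pi$ are uniform so that the multiplicities are well-defined integers -- is handled by padding every random tape of $M$ to the maximum runtime $T(n)$ before enumerating.
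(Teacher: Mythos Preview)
Your approach is correct and, in fact, cleaner than the paper's. Both proofs start from \cref{thm:3sat-cm24}, strip the padding to recover $x$, and convert the succinct $\SAT$ instance into a directly presented one whose uniform constraint distribution mimics $\pi$. The difference lies in how that conversion is done: the paper estimates $\pi$ by repeatedly sampling the succinct sampler $\poly(|y|)$ times and then lists each constraint with multiplicity $\lceil |y|^2\pi(i)\rceil$, which forces it to carry an approximation error through a defect calculation (this is where the soundness parameter degrades from $s$ to $(1+s)/2$). You instead enumerate all $2^{T(n)}$ random tapes of the sampler exactly, so the uniform distribution over your multiset reproduces $\pi$ on the nose and the two games are literally identical; completeness and soundness then transfer with no loss. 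What your route buys is that the entire soundness argument becomes a one-line equality of game values, with no weighted-algebra estimate needed.

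Two small points are worth tightening. First, a reduction must be a total computable function, so you need to say what to do when $|y|$ is not in the range of $\exp$ or when the trailing bits of $y$ are not all zero; the paper handles this by mapping such $y$ to a fixed unsatisfiable $\SAT$ instance. Second, be careful not to conflate the running time of the reduction (which outputs the description of the sampler $M$) with the running time of $M$ itself: the bound you need is that $M$, as the question sampler in an $\MIP^\ast$ protocol, runs in time $\poly(n)$ on input $x$, hence uses at most $\poly(n)$ random bits. That is indeed the case here, and it is what makes $2^{T(n)}\leq \poly(N)$ once the padding function is chosen so that $\exp(n)\geq 2^{T(n)}$; the paper fixes this by taking the padding function to be $2^{p(n)}$ where $p$ is the specific polynomial bounding the question length.
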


In fact, this step can still be shown at the level of succinctly-presented games, \textit{i.e.} there is a polynomial-time reduction from the halting problem to $\BCSMIP^\ast_{unif}(\polylog(n),O(1))_{1,s}$. We show this in \cref{sec:uniformmarginals}.

\begin{proof}
    Due to \cref{thm:3sat-cm24}, there exists a polynomially-bounded function $p$ such that the reduction to succinct entangled 3SAT maps from a string $x$ --- seen as input to a universal Turing machine $M$ --- to Turing machines that sample a probability distribution on constraints from 3SAT instance $S_x=(X,\{(V_i,C_i)\}_{i=1}^{2^{p(|x|)}}))$. Now consider the reduction from the $2^p$-padded halting problem defined as follows. For any string $y$, truncate it to the first $p^{-1}(\log|y|)$ bits and call that substring $x$. If $y\neq x0^{p(|x|)-|x|}$, then map $y$ to a fixed unsatisfiable constraint system $S_y'=S_0$. Else, since $x$ is an instance of the halting problem, the reduction of \cref{thm:3sat-cm24} maps it to a probability distribution $\pi$ on constraints of a 3SAT instance $S_x$ with $2^{p(|x|)}=|y|$ constraints. Now, by sampling from $\pi$ $\poly(|y|)$ times, we learn the probability distribution up to negligibly small error in $|y|$. Now, let $S_y'=(X,\{(V_{i,j},C_i)\}_{i,j}))$ be the $3$SAT instance defined from $S_x$ by repeating each constraint $i$ $\ceil{|y|^2\pi(i)}$ times. $S'_y$ has $m_y'=\sum_i\floor{|y|^2\pi(i)}$ constraints, which is bounded as $m_y'\leq\sum_i(|y|^2\pi(i)+1)=|y|^2+|y|$ and $m_y'\geq\sum_i|y|^2\pi(i)=|y|^2$.

    We claim that the map $y\mapsto S_y'$ is a polynomial-time reduction from the $2^p$-padded halting problem to $\mathrm{CSP}_{c-v}(\mathrm{3SAT})_{1,(1+s)/2}^\ast$. It is clear that the reduction is polynomial-time in $|y|$. If $y$ is a yes instance of the $2^p$-padded halting problem, then $x$ is a yes instance of the halting problem, so $S_x$ has a perfect quantum satisfying assignment. Since $S_y'$ has the same constraints as $S_x$, it is also perfectly quantum satisfiable. Now, if $y$ is a no instance of the halting problem, either $y\neq x0^{2^{p(|x|)}-|x|}$ or $x$ is a no instance of the halting problem. In the former case, $S_y'=S_0$ can be chosen to have large enough defect. In the latter case, $\defect(\tau)\geq 1-s$ for any finite-dimensional tracial state on $\mc{A}_{c-v}(S_x,\pi)$. Now, let $\tau'$ be a finite-dimensional tracial state on $\mc{A}_{c-v}(S_y',\mathbbm{u}_{m_y'})$, and let $\varphi':\mc{A}_{c-v}(S_y')\rightarrow \mc{M}$ be the GNS representation, with tracial state $\rho$. Now, let $\varphi$ be the representation of $\mc{A}_{c-v}(S_x)$ defined as $\varphi(\Pi_{b}(\sigma'(a)))=\varphi'(\Pi_{b}(\sigma'(a)))$ and $\{\varphi(\Phi_{V_{i},\phi})\}_{\phi\in C_i}$ as the PVM in $\mc{M}$ that minimises $\sum_{\phi\in C_i}\sum_{a\in V_i}\rho(\varphi(\Phi_{V_{i},\phi})(1-\varphi(\Pi_{\phi(a)}(\sigma'(a)))))$; let $\tau=\rho\circ\varphi$. $\tau$ is a tracial state on $\mc{A}_{c-v}(S_x,\pi)$, and by extremality of the PVMs within the POVMs, its defect can be upper bounded as
    \begin{align*}
        \defect(\tau)&=\sum_{i=1}^{|y|}\frac{\pi(i)}{|V_i|}\sum_{\varphi\in C_i}\sum_{a\in V_i}\rho(\varphi(\Phi_{V_{i},\phi})(1-\varphi(\Pi_{\phi(a)}(\sigma'(a)))))\\
        &\leq\sum_{i=1}^{|y|}\frac{\pi(i)}{|V_i|}\sum_{\varphi\in C_i}\sum_{a\in V_i}\rho\parens[\Big]{\frac{1}{\ceil{|y|^2\pi(i)}}\sum_{j=1}^{\ceil{|y|^2\pi(i)}}\varphi'(\Phi_{V_{i,j},\phi})(1-\varphi'(\Pi_{\phi(a)}(\sigma'(a))))}\\
        &\leq\sum_{i,j}\frac{\pi(i)}{\ceil{|y|^2\pi(i)}}\frac{1}{|V_{i,j}|}\sum_{\varphi\in C_i}\sum_{a\in V_i}\tau'(\Phi_{V_{i,j},\phi}(1-\Pi_{\phi(a)}(\sigma'(a)))).
    \end{align*}
    Now, $\ceil{|y|^2\pi(i)}\geq |y|^2\pi(i)$, so $$\frac{\pi(i)}{\ceil{|y|^2\pi(i)}}\leq\frac{1}{|y|^2}=\frac{m_y'}{|y|^2}\mathbbm{u}_{m_y'}(i,j)\leq\parens*{1+\frac{1}{|y|^2}}\mathbbm{u}_{m_y'}(i,j).$$
    Hence, $\defect(\tau)\leq\parens*{1+\frac{1}{|y|^2}}\defect(\tau')\leq 2\defect(\tau')$. Hence, for every $\tau'$, $\defect(\tau')\geq\frac{1-s}{2}$, giving the wanted gap.
\end{proof}

Now, we show that the $\BCSMIP^\ast$ protocol can be reduced to an instance of entangled $\SAT5$, which denotes a $\SAT$ CSP where every variable appears exactly in $5$ clauses, \emph{i.e.} the degree of every left vertex in its connectivity graph (see \cref{def:connectivity-graph}) is $5$. In other words, we show that we can fix the degree of $\SAT^*$ while preserving a constant soundness gap. Although our approach is inspired by the classical proof of \cite{Feige98threshold}, establishing the proof in the quantum case requires some additional technical results, in particular a novel use of expander graphs, which is not needed to establish the classical analogue.

%\begin{remark}
%    The noncommutative value of a 2-CSP is equivalent to the synchronous quantum value of its corresponding two-player game (see Section 10.2 of \cite{CMS23}). In the terminology of \cite{culfmastel24}, the quantum synchronous value of a 2-CSP corresponds to approximate representations of its associated assignment algebra. \enote{I think we need the constraint-variable algebra here and not the assignment algebra}
%\end{remark}

\begin{definition}
    A graph $G=(V,E)$ is an \emph{$(n,d,\lambda)$-spectral expander} if $|V|=n$, the degree of $G$ is $d$, $G$ is connected, and the second-largest eigenvalue of the adjacency matrix $A(G)$ is at most $d(1-\lambda)$.

    A \emph{$(d,\lambda)$-expander family} is a sequence of graphs $(G_n)$ such that $G_n$ is a $(n,d,\lambda)$-spectral expander for each $n$, and a description of $G_n$ can be generated in time polynomial in $n$.
\end{definition}

Expander graphs have been a very fruitful line of work, and many examples of expander families with constant spectral gap are known.

\begin{theorem}[\emph{e.g.} \cite{Mar73,RVW00}]\label{thm:expander-existence}
    There are constants $d\in\N$ and $\lambda>0$ such that there exists a $(d,\lambda)$-expander family.
\end{theorem}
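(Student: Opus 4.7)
The plan is to invoke one of the standard explicit constructions of constant-degree spectral expander families; the result is cited from the classical works \cite{Mar73, RVW00}, and I would ultimately treat it as a black-box citation. For completeness, I would sketch the combinatorial construction of Reingold, Vadhan, and Wigderson via the zig-zag product, as its analysis is the most self-contained. An alternative would be Margulis's algebraic construction using Cayley graphs of $\mathrm{SL}_2(\Z/p\Z)$ with a fixed generating set, whose spectral gap ultimately rests on Selberg's $3/16$ theorem; this approach is conceptually cleaner but requires importing nontrivial number-theoretic input.

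For the zig-zag route I would first fix a constant-size ``seed'' graph $H$ on $D$ vertices with degree $d$ and normalized second eigenvalue $\lambda_0 \leq 1/5$. Such an $H$ exists by a counting argument on random $d$-regular graphs for $d$ sufficiently large, and since $H$ has constant size, a suitable $H$ can be located once and for all by brute-force search and hard-coded into the Turing machine. I would then define a sequence $(G_n)_{n \geq 1}$ recursively by $G_1 = H^2$ and $G_{n+1} = G_n^2 \boxtimes H$, where $\boxtimes$ denotes the zig-zag product and $G^2$ is the graph square. The main lemma of \cite{RVW00} asserts that each $G_n$ has degree $d^2$ (constant in $n$), has $D^n$ vertices, and satisfies the spectral inequality
\[
  \lambda(G_{n+1}) \leq \lambda(G_n)^2 + \lambda_0 + \lambda_0^2.
\]
A routine induction then gives $\lambda(G_n) \leq \lambda^{\ast}$ for an absolute constant $\lambda^{\ast} < 1$, so that the $G_n$ form a $(d^2, 1 - \lambda^{\ast})$-expander family in the sense of the definition.

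The hard part of the plan is the spectral inequality for the zig-zag product, but this is the central theorem of \cite{RVW00} and I would invoke it without reproof. Two minor issues remain. First, the recursive construction yields graphs whose sizes are powers $D^n$ rather than every integer $n$; this can be handled by standard padding arguments that interpolate between consecutive sizes via local modifications (e.g.\ splitting or merging a constant number of vertices along a short cycle), which perturb the spectral gap by a vanishing amount and can therefore be absorbed into the constant $\lambda$. Second, one must verify that $G_n$ can be described in time polynomial in $|V(G_n)|$, which follows inductively since both the zig-zag product and the graph square are polynomial-time operations on the adjacency structure of the input graphs.
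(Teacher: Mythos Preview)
The paper does not prove this theorem at all: it is stated with the attribution ``e.g.\ \cite{Mar73,RVW00}'' and used as a black box, with no argument given. Your proposal therefore goes well beyond the paper's own treatment, and your sketch of the Reingold--Vadhan--Wigderson zig-zag construction is a correct outline of one of the cited sources.

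One point worth flagging: the paper's definition of a $(d,\lambda)$-expander family requires a graph $G_n$ on \emph{exactly} $n$ vertices for every $n\in\N$, computable in time $\poly(n)$, and this is genuinely used downstream (in \cref{thm:replacement-soundness} the expander $G_{n_x}$ is instantiated for the degree $n_x$ of each variable, which is arbitrary). Your padding remark acknowledges this, but ``splitting or merging a constant number of vertices along a short cycle'' is a bit loose: the gap between consecutive zig-zag sizes $D^k$ and $D^{k+1}$ is a constant \emph{factor}, not a constant additive amount, so one needs a slightly more careful interpolation (e.g.\ take a bounded number of disjoint copies of $G_{D^k}$ and add a perfect matching, or use the replacement product with a small fixed graph). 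These are indeed standard and preserve a constant spectral gap, so the plan is sound, but the sentence as written undersells the work.
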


It will also be useful to know the expansion of the cycle graph, which is a very weak expander, but has minimal degree and nonzero expansion.

\begin{lemma}\label{lem:cycle-expansion}
    Let $C_d$ be the cycle on $d$ vertices. $C_d$ is a $(d,2,\tfrac{8}{d^2})$-spectral expander.
\end{lemma}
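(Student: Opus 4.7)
The plan is to compute the spectrum of the cycle graph directly and then bound the second-largest eigenvalue using an elementary trigonometric inequality. The cycle graph $C_d$ is visibly connected and $2$-regular, so the only content of the statement is the spectral gap estimate: I need to show that the second-largest eigenvalue of the adjacency matrix $A(C_d)$ is at most $2(1-\tfrac{8}{d^2})$.

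The adjacency matrix of $C_d$ is a circulant matrix, and its eigenvalues are well known to be
\begin{equation*}
    \lambda_k = 2\cos\!\parens*{\tfrac{2\pi k}{d}}, \qquad k = 0, 1, \dots, d-1,
\end{equation*}
obtained by diagonalising via the discrete Fourier transform. The largest eigenvalue is $\lambda_0 = 2$ (corresponding to the constant eigenvector, in agreement with $2$-regularity), and the second largest is $\lambda_1 = 2\cos(2\pi/d)$. So the task reduces to proving the clean inequality
\begin{equation*}
    \cos\!\parens*{\tfrac{2\pi}{d}} \leq 1 - \tfrac{8}{d^2}.
\end{equation*}

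For this, I would rewrite $1 - \cos(2\pi/d) = 2\sin^2(\pi/d)$ and apply Jordan's inequality $\sin(y) \geq 2y/\pi$, valid for $y \in [0,\pi/2]$. Plugging in $y = \pi/d$ (legitimate for $d \geq 2$, and the lemma is only interesting for $d \geq 3$) gives $\sin(\pi/d) \geq 2/d$, hence $2\sin^2(\pi/d) \geq 8/d^2$, which is exactly the bound required. Combined with the observations that $C_d$ is connected and $2$-regular, this verifies all three conditions in the definition of an $(n,d,\lambda)$-spectral expander with $n = d$, $d = 2$, and $\lambda = 8/d^2$.

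I do not expect any real obstacle in this proof; the eigenvalues of the cycle are a standard computation via circulant diagonalisation, and Jordan's inequality is elementary. The only thing to be careful about is making sure the Jordan bound is applied in the correct range (which it is, since $\pi/d \leq \pi/2$ for $d \geq 2$), and noting that the same argument shows that no other eigenvalue $2\cos(2\pi k/d)$ with $k \neq 0$ exceeds $\lambda_1$ in absolute value beyond what the spectral-gap condition demands (indeed, the definition only requires the second-largest eigenvalue, not the one with largest absolute value, to be bounded).
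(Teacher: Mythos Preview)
Your proposal is correct and follows essentially the same approach as the paper: compute the cycle eigenvalues $2\cos(2\pi k/d)$, identify the second largest as $2\cos(2\pi/d)$, rewrite $1-\cos(2\pi/d)=2\sin^2(\pi/d)$, and bound $\sin(\pi/d)\geq 2/d$. The paper states the last inequality without naming it, whereas you invoke Jordan's inequality explicitly, but the argument is the same.
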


\begin{proof}
    Let $\omega$ be a primitive $d$-th root of unity. It is easy to see that for any $i=0,\ldots,d-1$, the vector $(1,\omega^{i},\omega^{2i},\ldots,\omega^{(d-1)i})$ is an eigenvector of the adjacency matrix $A(C_d)$ with eigenvalue $2\cos\parens*{\frac{2\pi i}{d}}$, and that these eigenvectors provide an eigenbasis. As such, the largest eigenvalue is the degree $2$ and the second-largest eigenvalue is $2\cos\parens*{\frac{2\pi}{d}}$. Thus, the expansion coefficient is
    \begin{align*}
        \lambda&=\frac{2-2\cos\parens*{\frac{2\pi}{d}}}{2}=2\sin^2\parens*{\frac{\pi}{d}}\geq\frac{8}{d^2}.\qedhere
    \end{align*}
\end{proof}

The following very nice lemma was shown in \cite{Ji2021quantum}.

\begin{lemma}[Lemma A.2 \cite{Ji2021quantum}]\label{lem:key_expand}
    Let $G=(V,E)$ be an $(n,d,\lambda)$-spectral expander, let $\mc{A}$ be a $\ast$-algebra, let $\tau$ be a state on $\mc{A}$, and let $\{a_u\}_{u\in V}$ be a collection of operators in $\mc{A}$. Then,
    \begin{align*}
        \frac{1}{n^2}\sum_{u,v}\norm{a_u-a_v}_\tau^2\leq\frac{1}{\lambda}\frac{2}{dn}\sum_{\{u,v\}\in E}\norm{a_u-a_v}_\tau^2
    \end{align*}
\end{lemma}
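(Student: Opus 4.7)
The plan is to recognize the claim as a Poincar\'e inequality for the expander $G$, applied at the level of the semi-inner product $\langle x, y\rangle_\tau := \tau(x^* y)$ on $\mc{A}$, and to reduce it to a completely standard scalar spectral-gap argument.

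First I would reduce from operators in $\mc{A}$ to real-valued functions on $V$. The form $\langle\cdot,\cdot\rangle_\tau$ is positive semidefinite, and its null space $N := \{x \in \mc{A} : \tau(x^* x) = 0\}$ is a subspace, so $\langle\cdot,\cdot\rangle_\tau$ descends to an honest inner product on $\mc{A}/N$ (equivalently, pass to the GNS Hilbert space of $\tau$). The images of the finite collection $\{a_u\}_{u \in V}$ span a finite-dimensional subspace, so I may choose a $\langle\cdot,\cdot\rangle_\tau$-orthonormal basis $\{e_1, \dots, e_k\}$ of this span and write $a_u = \sum_i f_i(u) e_i$ for some $f_i(u) \in \C$. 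Then $\|a_u - a_v\|_\tau^2 = \sum_i |f_i(u) - f_i(v)|^2$, and both sides of the target inequality split as sums of $2k$ analogous inequalities for the real-valued functions $\latRe f_i, \latIm f_i : V \to \R$. It therefore suffices to prove the scalar version: for every $f : V \to \R$,
\[\frac{1}{n^2}\sum_{u,v \in V}(f(u) - f(v))^2 \leq \frac{1}{\lambda}\cdot\frac{2}{dn}\sum_{\{u,v\} \in E}(f(u) - f(v))^2.\]

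For this scalar Poincar\'e inequality, let $L := dI - A(G)$ be the graph Laplacian. Two elementary identities give
\[\langle f, Lf\rangle = \sum_{\{u,v\} \in E}(f(u) - f(v))^2 \quad\text{and}\quad \sum_{u,v \in V}(f(u) - f(v))^2 = 2n \|f - \bar f\|_2^2,\]
where $\bar f := \tfrac{1}{n}\sum_u f(u)$ and norms on the right are standard Euclidean. The spectral expander hypothesis $\lambda_2(A(G)) \leq d(1-\lambda)$ is equivalent to the smallest nonzero eigenvalue of $L$ being at least $d\lambda$. Since $f - \bar f$ is orthogonal to the all-ones vector, which spans $\ker L$, the Rayleigh--Ritz principle gives $\langle f - \bar f, L(f - \bar f)\rangle \geq d\lambda \|f - \bar f\|_2^2$; as $L f = L(f - \bar f)$, combining the two identities yields
\[\sum_{\{u,v\} \in E}(f(u) - f(v))^2 \geq \frac{d\lambda}{2n}\sum_{u,v \in V}(f(u) - f(v))^2,\]
which after dividing by $n^2$ is precisely the claim.

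The argument is essentially routine. The only mild conceptual wrinkle is that $\tau$ is not assumed to be faithful, so $\|\cdot\|_\tau$ is merely a seminorm on $\mc{A}$; this is resolved cleanly by the quotient/GNS step. The scalar spectral-gap step is entirely standard and presents no obstacle.
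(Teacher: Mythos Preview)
Your proof is correct. Both your argument and the paper's hinge on the same underlying fact---the spectral gap of the graph Laplacian---but the route is genuinely different. You reduce to the scalar case: pass to the GNS quotient, expand the $a_u$ in a finite $\tau$-orthonormal basis, and then invoke the classical Poincar\'e inequality coordinate-by-coordinate. The paper instead works directly at the operator level: it takes the GNS representation $\varphi:\mc{A}\to\mc{B}(H)$, forms the operator $W=\sum_u\ket{u}\otimes\varphi(a_u):H\to\C^V\otimes H$, and sandwiches the operator inequality $L\otimes I\geq\lambda(I-\ketbra{\psi_0})\otimes I$ (with $L=I-\tfrac{1}{d}A(G)$) between $W^\ast$ and $W$, then applies the state.

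Your approach is more elementary in that it needs no tensor products or operator inequalities---just the textbook scalar spectral-gap bound---at the cost of the non-canonical step of choosing a basis. The paper's approach is slicker and basis-free, and generalizes more transparently to settings where one might want to keep track of the operator structure (e.g.\ if one later needed an operator inequality rather than a scalar one). For the purposes of this lemma, both are equally short and both give exactly the same constant.
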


\begin{proof}
    Let $\tau=\rho\circ\varphi$ be the GNS representation of $\tau$, where $\varphi:\mc{A}\rightarrow\mc{B}(H)$ is a $\ast$-homomorphism and $\rho$ is a state on $\mc{B}(H)$, for some Hilbert space $H$. Let $A_u=\varphi(a_u)$. We see the adjacency matrix $A(G)$ as an operator on $\C^V$, and consider the Laplacian matrix
    $$L=I-\frac{1}{d}A(G)=\frac{1}{d}\sum_{\{u,v\}\in E}\parens*{\ket{u}-\ket{v}}\parens*{\bra{u}-\bra{v}}.$$
    Note that the $0$-eigenspace of $L$ is spanned by $\ket{\psi_0}=\frac{1}{\sqrt{n}}\sum_u\ket{u}$, and by spectral expansion $L\geq\lambda(I-\ketbra{\psi_0})$. Let $V:H\rightarrow\C^V\otimes H$ be the operator $V=\sum_{u\in V}\ket{u}\otimes A_u$. First,
    \begin{align*}
        V^\ast(L\otimes I)V=\frac{1}{d}\sum_{\{u,v\}\in E}\parens*{A_u-A_v}^\ast\parens*{A_u-A_v}.
    \end{align*}
    On the other hand,
    \begin{align*}
        V^\ast(L\otimes I)V&\geq\lambda V^\ast((I-\ketbra{\psi})\otimes I)V\\
        &=\lambda\parens*{\sum_u A_u^\ast A_u-\frac{1}{n}\sum_{u,v}A_u^\ast A_v}=\frac{\lambda}{n}\sum_{u,v}\parens*{A_u^\ast A_u-A_u^\ast A_v}\\
        &=\frac{\lambda}{2n}\sum_{u,v}\parens*{A_u-A_v}^\ast\parens*{A_u-A_v}.
    \end{align*}
    Acting by the state $\rho$ on both sides, we get
    \begin{align*}
        \frac{\lambda}{2n}\sum_{u,v}\norm{a_u-a_v}_\tau^2\leq\frac{1}{d}\sum_{\{u,v\}\in E}\norm{a_u-a_v}_\tau^2,
    \end{align*}
    giving the result.
\end{proof}

\begin{corollary}\label{cor:round-to-pvm}
    For each $u\in V$, suppose $\{A^u_i\}_{i=1}^k$ is a PVM in some finite-dimensional von Neumann algebra $\mc{M}$ such that $\frac{2}{nd}\sum_{\{u,v\}\in E}\sum_i\norm{A^u_i-A^v_i}_\tau^2\leq\varepsilon$ for some tracial state $\tau$ on $\mc{M}$. Then, there exists a PVM $\{P_i\}_{i=1}^k$ such that $\frac{1}{n}\sum_{u,i}\norm{A^u_i-P_i}_\tau^2\leq\varepsilon/\lambda$.
\end{corollary}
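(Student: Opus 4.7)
The plan is to apply \cref{lem:key_expand} componentwise and then extract a good vertex by averaging. Specifically, for each fixed index $i\in\{1,\ldots,k\}$, apply \cref{lem:key_expand} to the collection of operators $\{A^u_i\}_{u\in V}$ in the algebra $\mc{M}$ with the given tracial state $\tau$. This yields
\begin{equation*}
    \frac{1}{n^2}\sum_{u,v}\norm{A^u_i-A^v_i}_\tau^2 \leq \frac{1}{\lambda}\cdot\frac{2}{dn}\sum_{\{u,v\}\in E}\norm{A^u_i-A^v_i}_\tau^2.
\end{equation*}
Summing over $i=1,\ldots,k$ and using the hypothesis of the corollary gives
\begin{equation*}
    \frac{1}{n^2}\sum_{u,v}\sum_{i}\norm{A^u_i-A^v_i}_\tau^2 \leq \frac{1}{\lambda}\cdot\frac{2}{dn}\sum_{\{u,v\}\in E}\sum_{i}\norm{A^u_i-A^v_i}_\tau^2 \leq \frac{\varepsilon}{\lambda}.
\end{equation*}

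Rewriting the left-hand side as $\frac{1}{n}\sum_v\left[\frac{1}{n}\sum_u\sum_i\norm{A^u_i-A^v_i}_\tau^2\right]$ shows that the average over $v\in V$ of the bracketed quantity is at most $\varepsilon/\lambda$. Hence there exists some specific vertex $v^*\in V$ such that
\begin{equation*}
    \frac{1}{n}\sum_u\sum_i\norm{A^u_i-A^{v^*}_i}_\tau^2\leq\frac{\varepsilon}{\lambda}.
\end{equation*}
Finally, set $P_i := A^{v^*}_i$. Since $\{A^{v^*}_i\}_{i=1}^k$ is a PVM by hypothesis, $\{P_i\}_{i=1}^k$ is a PVM, and the inequality above is exactly the desired bound. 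There is no real obstacle here: the main work has been carried out in \cref{lem:key_expand}, and this corollary is essentially an averaging (probabilistic existence) argument, with the only mildly nontrivial observation being that choosing $P_i$ to come from a single vertex $v^*$ automatically preserves the PVM property, so no additional rounding step is required.
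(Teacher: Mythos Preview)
Your proof is correct and in fact slightly more elementary than the paper's. Both arguments begin identically, applying \cref{lem:key_expand} and summing over $i$ to obtain
\[
\frac{1}{n^2}\sum_{u,v}\sum_i\norm{A^u_i-A^v_i}_\tau^2\leq\frac{\varepsilon}{\lambda}.
\]
From here you diverge: you simply observe that this is an average over $v$ and pick a single good vertex $v^*$, so that $P_i=A^{v^*}_i$ is automatically a PVM. The paper instead rewrites the left side as $2-\frac{2}{n^2}\sum_{u,v,i}\tau(A^u_iA^v_i)$, forms the averaged POVM $A'_i=\frac{1}{n}\sum_u A^u_i$, and then invokes the fact that PVMs are extreme points of the (compact, convex) set of POVMs in a finite-dimensional von Neumann algebra to replace $\{A'_i\}$ by a PVM $\{P_i\}$ without decreasing the linear functional $\sum_{u,i}\tau(A^u_iP_i)$. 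Both routes land on the same bound $\varepsilon/\lambda$. Your argument avoids the extremality fact entirely and shows the mildly stronger statement that $\{P_i\}$ may be taken to be one of the given PVMs $\{A^{v^*}_i\}$; the paper's route, on the other hand, is the one that generalises if one ever needed to average over PVMs not indexed by vertices of the graph.
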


\begin{proof}
    By \cref{lem:key_expand} we know that $\frac{1}{n^2}\sum_{u,v}\sum_i\norm{A^u_i-A^v_i}_\tau^2\leq\frac{\varepsilon}{\lambda}$. Then, note that
    $$\sum_i\norm{A^u_i-A^v_i}_\tau^2=\sum_i\tau\parens*{A^u_i+A^v_i-2A^u_iA^v_i}=2-2\sum_i\tau(A^u_iA^v_i),$$
    and hence $\frac{1}{n^2}\sum_{u,v}\sum_i\tau(A^u_iA^v_i)\geq1-\frac{\varepsilon}{2\lambda}$. Writing the POVM $A'_i=\frac{1}{n}\sum_u A^u_i$, we have $\frac{1}{n}\sum_{u,i}\tau(A^u_iA'_i)\geq 1-\frac{\varepsilon}{2\lambda}$. As the PVMs are extremal in the convex set of POVMs in $\mc{M}$, there exists a PVM $\{P_i\}$ such that 
    $$\frac{1}{n}\sum_{u,i}\tau(A^u_iP_i)\geq\frac{1}{n}\sum_{u,i}\tau(A^u_iA'_i)\geq 1-\frac{\varepsilon}{2\lambda}.$$
    To finish the proof, note that as $\{P_i\}$ is a PVM, we have again $\sum_i\norm{A^u_i-P_i}_\tau^2=2-2\sum_i\tau(A^u_iP_i)$.
\end{proof}

To reduce the number of constraints that each variable appears in, we modify the constraint system by labelling each variable by the constraint it appears in and adding equality constraints between these labelled variables, one for each edge of an expander graph. We use the properties of the expanders to ensure equality while maintaining the low degree. We formalize this in the following definition.

\begin{definition}
    Let $G=(G_n)$ be a sequence of graphs such that $G_n=([n],E_n)$, and let $S=(X,\{(V_i,C_i)\}_{i=1}^m)$ be a $k$-ary CS. For each variable $x\in X$, let $n_x=\abs*{\set*{i}{x\in V_i}}$ and fix a bijection $r_x:\set*{i}{x\in V_i}\mapsto [n_x]$. We call the \emph{$G$-replacement of $S$} the $k$-ary CS $S|_G$ with variables $X|_G=\set*{x_i}{i\in[m],x\in V_i}$ and two types of constraints: for each $i\in[m]$, the constraint $(V_i',C_i)$ where $V_i'=\set*{x_i}{x\in V_i}$; and for each $x\in X$ and edge $\{u,v\}\in E_{n_x}$, the constraint $(V_{x,r_x^{-1}(u),r_x^{-1}(v)},C_=)$, where $V_{x,i,j}=\{x_{i},x_{j}\}$ and $C_==\set*{(a,a)}{a\in\Z_k}$ is the $k$-ary equality constraint. Given a probability distribution $\pi$ on $[m]$, define the probability distribution $\pi|_G$ via $\pi|_G(i)=\pi(i)/2$ and $\pi|_G(x,i,j)=\frac{\pi(x)}{2dn_x}$ where $\pi(x)=\sum_{i.\;x\in V_i}\frac{\pi(i)}{|V_i|}$.
\end{definition}

\begin{remark}
    If $G$ is a $(d,\lambda)$-expander family, then the left degree of the connectivity graph of $\Gamma(S|_G)$ is $d+1$; the right degree is simply the largest number of variables in a constraint, that is $\max\{2,\max_i|V_i|\}$. In particular, if $S$ is a 3SAT instance, we have that the left vertices all have degree $d+1$ and the right vertices have degree $2$ or $3$. Also, the equality constraint between two boolean variables $x,y$ can be represented by the gadget $(\lnot x\lor y\lor y)\land(x\lor\lnot y\lor \lnot y)$. If $B$ is a BCS which is a 3SAT instance, we call $B|_{G,3SAT}$ the 3SAT instance constructed by replacing all the equality constraints in $B|_G$ by the two constraints given in the above gadget. Then, the left vertices of $\Gamma(B|_{G,3SAT})$ all have degree $2d+1$ and the right vertices all have degree $3$.
\end{remark}

\begin{lemma}\label{lem:c-v-inter}
    Let $S=(X,\{(V_i,C_i)\}_{i=1}^m)$ be a CS, let $\pi$ be a probability distribution on $[m]$, and let $\tau$ be a tracial state on $\mc{A}_{c-v}(S,\pi)$. Then,
    \begin{align*}
        \defect(\tau)=\sum_{i,x\in V_i,a\in\Z_k}\frac{\pi(i)}{2|V_i|}\norm{\Pi_a(\sigma_i(x))-\Pi_a(\sigma'(x))}_\tau^2.
    \end{align*}
\end{lemma}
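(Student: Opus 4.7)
The plan is to directly unpack the definition of the defect with respect to the weight function $\mu_{c-v,\pi}$ on $\mc{A}_{c-v}(S)$ and simplify the norm-squared of each generator, term by term. Write $M_a^{i,x} := \Pi_a(\sigma_i(x))$ and $N_a^x := \Pi_a(\sigma'(x))$; by definition of the defect,
\begin{equation*}
    \defect(\tau) = \sum_{i=1}^{m}\sum_{x\in V_i}\sum_{\phi\in C_i}\frac{\pi(i)}{|V_i|}\bigl\|\Phi_{V_i,\phi}(1 - N_{\phi(x)}^x)\bigr\|_\tau^2.
\end{equation*}

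The first step is to evaluate each summand. Since $\Phi_{V_i,\phi}=\prod_{y\in V_i}M_{\phi(y)}^{i,y}$ is a product of pairwise commuting projections inside $\sigma_i(\mc{A}(V_i,C_i))$, it is itself a self-adjoint projection, and $(1-N_{\phi(x)}^x)$ is a projection as well. Using the defining property of the tracial state $\tau$,
\begin{equation*}
    \bigl\|\Phi_{V_i,\phi}(1-N_{\phi(x)}^x)\bigr\|_\tau^2
    = \tau\bigl((1-N_{\phi(x)}^x)\Phi_{V_i,\phi}(1-N_{\phi(x)}^x)\bigr)
    = \tau\bigl(\Phi_{V_i,\phi}(1-N_{\phi(x)}^x)\bigr).
\end{equation*}

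Next I would sum over $\phi\in C_i$ for fixed $i$ and $x$. Two algebraic identities in $\mc{A}_{c-v}(S)$ drive the calculation. First, the quotient defining $\mc{A}(V_i,C_i)$ forces $\sum_{\phi\in C_i}\Phi_{V_i,\phi}=1$. Second, since the projections $\{M_a^{i,y}\}_{a,y\in V_i}$ pairwise commute inside $\sigma_i(\mc{A}(V_i,C_i))$, one has $M_a^{i,x}\Phi_{V_i,\phi}=\Phi_{V_i,\phi}$ when $\phi(x)=a$ and $0$ otherwise, so that
\begin{equation*}
    \sum_{\phi\in C_i:\,\phi(x)=a}\Phi_{V_i,\phi}
    = M_a^{i,x}\sum_{\phi\in C_i}\Phi_{V_i,\phi}
    = M_a^{i,x}.
\end{equation*}
Combining,
\begin{equation*}
    \sum_{\phi\in C_i}\bigl\|\Phi_{V_i,\phi}(1-N_{\phi(x)}^x)\bigr\|_\tau^2
    = \tau(1) - \sum_{a\in\Z_k}\tau\bigl(M_a^{i,x} N_a^x\bigr)
    = 1 - \sum_{a\in\Z_k}\tau\bigl(M_a^{i,x} N_a^x\bigr).
\end{equation*}

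On the other hand, a direct expansion using that both $M_a^{i,x}$ and $N_a^x$ are projections and that $\tau$ is tracial gives
\begin{equation*}
    \sum_{a\in\Z_k}\bigl\|M_a^{i,x}-N_a^x\bigr\|_\tau^2
    = \sum_{a\in\Z_k}\bigl(\tau(M_a^{i,x})+\tau(N_a^x)-2\tau(M_a^{i,x}N_a^x)\bigr)
    = 2 - 2\sum_{a\in\Z_k}\tau\bigl(M_a^{i,x} N_a^x\bigr),
\end{equation*}
using $\sum_a M_a^{i,x} = \sum_a N_a^x = 1$. Comparing the last two displays yields the equality
\begin{equation*}
    \sum_{\phi\in C_i}\bigl\|\Phi_{V_i,\phi}(1-N_{\phi(x)}^x)\bigr\|_\tau^2
    = \tfrac{1}{2}\sum_{a\in\Z_k}\bigl\|M_a^{i,x}-N_a^x\bigr\|_\tau^2,
\end{equation*}
and substituting into the expression for $\defect(\tau)$ gives the claim.

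The only subtle point is the algebraic identity $\sum_{\phi\in C_i:\phi(x)=a}\Phi_{V_i,\phi}=M_a^{i,x}$; everything else is routine bookkeeping with tracial states and projections. No step should pose a real obstacle, provided one is careful to distinguish the ``constraint'' projections $M_a^{i,x}$ living in $\sigma_i(\mc{A}(V_i,C_i))$ from the ``assignment'' projections $N_a^x$ living in $\sigma'(\C\Z_k^{\ast\mc{X}})$, which in general do not commute.
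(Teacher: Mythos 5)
Your proposal is correct and follows essentially the same route as the paper: unpack the defect, use $\|\Phi_{V_i,\phi}(1-N_{\phi(x)}^x)\|_\tau^2=\tau(\Phi_{V_i,\phi}(1-N_{\phi(x)}^x))$ via traciality, collapse the sum over $\phi$ using the resolution $\sum_{\phi\in C_i}\Phi_{V_i,\phi}=1$ (the paper does this by summing over all assignments $a_y$ and telescoping, which is your identity $\sum_{\phi:\phi(x)=a}\Phi_{V_i,\phi}=M_a^{i,x}$ in a different guise), and finish with the projection identity relating $\tau(P(1-Q))$ to $\tfrac12\|P-Q\|_\tau^2$. No gaps.
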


\begin{proof}
    In fact,
    \begin{align*}
        \defect(\tau)&=\sum_{i,x\in V_i,\phi\in C_i}\frac{\pi(i)}{|V_i|}\tau\parens*{\Phi_{V_i,\phi}(1-\Pi_{\phi(x)}(\sigma'(x)))}\\
        &=\sum_{i,x\in V_i}\frac{\pi(i)}{|V_i|}\sum_{a_y\in\Z_k.\;y\in V_i}\tau\parens[\Big]{\prod_{y\in V_i}\Pi_{a_y}(\sigma_i(y))(1-\Pi_{a_x}(\sigma'(x)))}\\
        &=\sum_{i,x\in V_i}\frac{\pi(i)}{|V_i|}\sum_{a\in\Z_k}\tau\parens[\Big]{\Pi_{a}(\sigma_i(x))(1-\Pi_{a}(\sigma'(x)))}.
    \end{align*}
    To finish the proof, note that for any projectors $P,Q$, we have that
    \begin{align*}
        \norm*{P-Q}_\tau^2=\tau(P+Q-2PQ)=\tau(P-Q)+2\tau(P(1-Q)),
    \end{align*}
    and therefore
    \begin{align*}
        \sum_{a\in\Z_k}&\frac{1}{2}\norm{\Pi_a(\sigma_i(x))-\Pi_a(\sigma'(x))}_\tau^2\\
        &=\sum_{a\in\Z_k}\frac{1}{2}\tau\parens*{\Pi_a(\sigma_i(x))-\Pi_a(\sigma'(x))}+\tau\parens*{\Pi_a(\sigma_i(x))(1-\Pi_a(\sigma'(x)))}\\
        &=\frac{1}{2}\tau(1-1)+\sum_{a\in\Z_k}\tau\parens*{\Pi_a(\sigma_i(x))(1-\Pi_a(\sigma'(x)))}\\
        &=\sum_{a\in\Z_k}\tau\parens*{\Pi_a(\sigma_i(x))(1-\Pi_a(\sigma'(x)))}.\qedhere
    \end{align*}
\end{proof}

\begin{lemma}\label{lem:eq-to-3sat}
    Consider the BCS $B_{x=y}=(\{x,y\},\{(V_1,C_1),(V_2,C_2)\})$, where $V_1=V_2=\{x,y,z\}$, and $C_1$ and $C_2$ the 3CNF constraints given by $\lnot x\lor y\lor y$ and $x\lor\lnot y\lor \lnot y$, respectively. Then, for a state $\tau$ on $\mc{A}_{c-v}(B_{x=y},\mbbm{u}_3)$, $\norm{\sigma'(x)-\sigma'(y)}_\tau^2\leq64\defect(\tau)$.
\end{lemma}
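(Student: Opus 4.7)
The plan is to bound separately the two ``bad'' cross terms $\tau\bigl(\Pi_1(\sigma'(x))\Pi_0(\sigma'(y))\bigr)$ and $\tau\bigl(\Pi_0(\sigma'(x))\Pi_1(\sigma'(y))\bigr)$, using one of the two gadget clauses for each. Using $\sigma'(x) = 1 - 2\Pi_1(\sigma'(x))$ and similarly for $y$, I first expand
\begin{equation*}
\|\sigma'(x)-\sigma'(y)\|_\tau^2 = 4\|\Pi_1(\sigma'(x))-\Pi_1(\sigma'(y))\|_\tau^2 = 4\bigl(\tau(\Pi_1(\sigma'(x))\Pi_0(\sigma'(y))) + \tau(\Pi_0(\sigma'(x))\Pi_1(\sigma'(y)))\bigr)
\end{equation*}
using tracial cyclicity. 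It therefore suffices to bound each cross term by a constant multiple of $\defect(\tau)$.

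The crucial observation is that for two projections $P,Q$ in a $\ast$-algebra with tracial state $\tau$, one has $\tau(PQ) = \|PQ\|_\tau^2$ (since $\tau(PQ) = \tau(QPQ)$ by $Q^2=Q$ and cyclicity, and this equals $\tau((PQ)^\ast(PQ))$). Hence each cross term is itself a squared $\tau$-norm, and I can estimate it by first bounding the $\tau$-norm linearly in $\|\cdot\|_\tau$-differences and then squaring, yielding a bound quadratic in those differences. This is essential: a direct linear bound on $\tau(PQ)$ would only yield a $\sqrt{\defect(\tau)}$ estimate, insufficient for the lemma. Concretely, for the term $\tau(\Pi_0(\sigma'(x))\Pi_1(\sigma'(y))) = \|\Pi_0(\sigma'(x))\Pi_1(\sigma'(y))\|_\tau^2$, I use $C_2$: since $x\lor\lnot y\lor\lnot y$ forbids $x=0, y=1$, we have $\Pi_0(\sigma_2(x))\Pi_1(\sigma_2(y)) = 0$ in $\mc{A}(V_2, C_2)$. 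Combining with the triangle inequality and the submultiplicative bound $\|AB\|_\tau \leq \|A\|_\tau$ valid when $B$ is a projection gives
\begin{equation*}
\|\Pi_0(\sigma'(x))\Pi_1(\sigma'(y))\|_\tau \leq \|\Pi_0(\sigma'(x)) - \Pi_0(\sigma_2(x))\|_\tau + \|\Pi_1(\sigma'(y)) - \Pi_1(\sigma_2(y))\|_\tau.
\end{equation*}

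Squaring using $(a+b)^2 \leq 2(a^2+b^2)$ and invoking \cref{lem:c-v-inter} bounds this cross term linearly in $\defect(\tau)$. The other cross term $\tau(\Pi_1(\sigma'(x))\Pi_0(\sigma'(y)))$ is handled symmetrically using $C_1 = \lnot x\lor y\lor y$, which gives $\Pi_1(\sigma_1(x))\Pi_0(\sigma_1(y)) = 0$ in $\mc{A}(V_1, C_1)$. Summing the two resulting bounds, multiplying by $4$, and tracking the constants from \cref{lem:c-v-inter} yields the claimed inequality; the computation in fact gives a somewhat sharper constant, and the factor $64$ provides comfortable slack. The main conceptual hurdle is recognizing that the identity $\tau(PQ) = \|PQ\|_\tau^2$ is needed to obtain a quadratic rather than linear estimate; once that is in place, the argument reduces to a routine application of triangle inequalities in the $\tau$-norm together with the two constraint relations.
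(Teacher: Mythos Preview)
Your argument is correct and follows essentially the same approach as the paper: both proofs use that $\Pi_1(\sigma_1(x))\Pi_0(\sigma_1(y))=0$ and $\Pi_0(\sigma_2(x))\Pi_1(\sigma_2(y))=0$ in the respective constraint algebras, then control the difference between the $\sigma'$ and $\sigma_i$ projectors via \cref{lem:c-v-inter}. The paper keeps the two cross terms together as $\norm{x_{-1}y_1+x_1y_{-1}}_\tau^2$ and bounds the resulting four-term sum with the inequality $\norm{\sum_{i=1}^4 a_i}_\tau^2\le 4\sum_i\norm{a_i}_\tau^2$, whereas you split the cross terms from the start via the tracial identity $\tau(PQ)=\norm{PQ}_\tau^2$ and apply $(a+b)^2\le 2(a^2+b^2)$ separately; your route in fact yields the constant $32$ rather than $64$, so your remark about slack is justified. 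One small point: the bound $\norm{AB}_\tau\le\norm{A}_\tau$ you invoke is used once with the projection on the right and once on the left (in the term $P'(Q-Q')$); both are valid but you only stated the former.
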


\begin{proof}
    For a variable $v$, write $v^i=\sigma_i(v)$, $v^i_a=\sigma_i(\Pi_a(v))$, $v=\sigma'(v)$, and $v_a=\sigma'(\Pi_a(v))$. Due to the constraints, we have $x^1_{-1}y^1_1=0$ and $x^2_1y^2_{-1}=0$. Note that $\norm{v^i-v}_\tau^2=4\norm{v^i_a-v_a}_\tau^2$ and hence using \cref{lem:c-v-inter},
    \begin{align*}
        \defect(\tau)&=\sum_{i,v\in V_i}\frac{1}{8|V_i|}\norm{v^i-v}_\tau^2\\
        &=\frac{1}{16}\Big(\norm{x^1-x}_\tau^2+\norm{x^2-x}_\tau^2+\norm{y^1-y}_\tau^2+\norm{y^2-y}_\tau^2\Big).
    \end{align*}
    Next,
    \begin{align*}
        \norm{x-y}_\tau^2&=\norm{1-xy}_\tau^2=4\norm{x_{-1}y_{1}+x_{1}y_{-1}}_\tau^2\\
        &=4\norm{x^1_{-1}y^1_1+x^1_{-1}(y_1-y^1_1)+(x_{-1}-x^1_{-1})y_1+x^2_{1}y^2_{-1}+x^2_{1}(y_{-1}-y^2_{-1})+(x_{1}-x^2_{1})y_{-1}}_\tau^2\\
        &=4\norm{x^1_{-1}(y_1-y^1_1)+(x_{-1}-x^1_{-1})y_1+x^2_{1}(y_{-1}-y^2_{-1})+(x_{1}-x^2_{1})y_{-1}}_\tau^2\\
        &\leq 4\Big(\norm{y-y^1}_\tau^2+\norm{y-y^2}_\tau^2+\norm{x-x^1}_\tau^2+\norm{x-x^2}_\tau^2\Big)\\
        &\leq64\defect(\tau),
    \end{align*}
    where the first inequality is  $\norm{\sum_{i=1}^ka_i}_\tau^2\leq2^{\ceil{\log k}}\sum_{i=1}^k\norm{a_i}_\tau^2$~\cite[Lemma 7.2]{MS24}.
    %\tnote{In the before-to-last line, are you using the triangle inequality? If yes isn't it that $\norm{a+b+c}^2\leq 2\norm{a}^2+2\norm{b+c}^2 \leq 2\norm{a}^2+4\norm{b}^2+4\norm{c}^2$, i.e. the coefficient grows the more you use that trick?}
\end{proof}

\begin{theorem}\label{thm:replacement-soundness}
    Let $S=(X,\{(V_i,C_i)\}_{i=1}^m)$ be a $k$-ary CS, and $G=(G_n)$ be a $(d,\lambda)$-expander family. For every tracial state $\tau$ on $\mc{A}_{c-v}(S|_G,\mbbm{u}_m|_G)$, there exists a tracial state $\tau'$ on $\mc{A}_{c-v}(S,\mbbm{u}_m)$ such that $\defect(\tau')\leq\frac{16L}{\lambda}\defect(\tau)$, where $L=\max_i|V_i|$. 
\end{theorem}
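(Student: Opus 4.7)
The plan is to construct $\tau'$ from $\tau$ by consolidating, for each variable $x\in X$, the $n_x$ PVMs $\{\Pi_a(\sigma'(x_i))\}_{a\in\Z_k}$ for the constraint-indexed copies $x_i$ into a single variable-level PVM $\{P_a(x)\}_{a\in\Z_k}$, exploiting the expander-based equality constraints of $S|_G$. Let $(\mc{M},\varphi,\rho)$ be the GNS triple of $\tau$, where $\varphi:\mc{A}_{c-v}(S|_G)\to\mc{M}$ is a $*$-representation into a finite-dimensional von Neumann algebra and $\rho$ is a tracial state on $\mc{M}$. I define $\varphi':\mc{A}_{c-v}(S)\to\mc{M}$ factor-by-factor: on each constraint factor $\mc{A}(V_i,C_i)$, send $\Phi_{V_i,\phi}\mapsto\varphi(\Phi_{V_i',\phi})$, which is well-defined since $S|_G$ contains the identically labelled constraint $(V_i',C_i)$; on $\C\Z_k^{*X}$, send $\Pi_a(\sigma'(x))\mapsto P_a(x)$. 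By universality of the free product, these extend to a $*$-representation, and $\tau'=\rho\circ\varphi'$ is the desired tracial state on $\mc{A}_{c-v}(S,\mbbm{u}_m)$.

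The unified PVM $\{P_a(x)\}_a$ is produced by \cref{cor:round-to-pvm}. By \cref{lem:c-v-inter} applied to the equality constraint on $\{x_i,x_j\}$ coming from edge $\{u,v\}\in E_{n_x}$ (with $i=r_x^{-1}(u)$, $j=r_x^{-1}(v)$), the equality part of $\defect(\tau)$ involves a common projector that must be close to both $\Pi_a(\sigma'(x_i))$ and $\Pi_a(\sigma'(x_j))$. The squared triangle inequality $\|A-B\|_\tau^2\le 2(\|A-C\|_\tau^2+\|C-B\|_\tau^2)$ eliminates this intermediate and yields
\begin{equation*}
\sum_x\frac{\pi(x)}{16dn_x}\mc{E}_x\leq\defect(\tau),\quad\text{where }\mc{E}_x=\sum_{\{u,v\}\in E_{n_x},a}\|\Pi_a(\sigma'(x_{r_x^{-1}(u)}))-\Pi_a(\sigma'(x_{r_x^{-1}(v)}))\|_\tau^2.
\end{equation*}
Applying \cref{cor:round-to-pvm} (which relies on the expander lemma \cref{lem:key_expand}) to each $(d,\lambda)$-expander $G_{n_x}$ then produces a PVM $\{P_a(x)\}_a$ in $\mc{M}$ satisfying $\sum_{i\in[n_x],a}\|\varphi(\Pi_a(\sigma'(x_i)))-P_a(x)\|_\rho^2\le\frac{2\mc{E}_x}{\lambda d}$.

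To bound $\defect(\tau')$, apply \cref{lem:c-v-inter} once more and insert $\varphi(\Pi_a(\sigma'(x_i)))$ between $\varphi(\Pi_a(\sigma_i(x_i)))$ and $P_a(x)$. The squared triangle inequality splits $\defect(\tau')$ into two pieces. The first piece, $\sum_{i,x\in V_i,a}\tfrac{1}{m|V_i|}\|\Pi_a(\sigma_i(x_i))-\Pi_a(\sigma'(x_i))\|_\tau^2$, is four times the original-constraint part of $\defect(\tau)$ (whose weight is $\tfrac{1}{4m|V_i|}$ since $\pi|_G(i)=\tfrac{1}{2m}$), and hence is $\le 4\defect(\tau)$. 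The second piece, after reorganising by $x$ and bounding $\tfrac{1}{|V_i|}\le 1$, is at most $\tfrac{2}{m\lambda d}\sum_x\mc{E}_x$. Using $\pi(x)=\sum_{i:x\in V_i}\tfrac{1}{m|V_i|}\ge\tfrac{n_x}{mL}$ to convert the inequality of the previous paragraph into $\sum_x\mc{E}_x\le 16dmL\defect(\tau)$, this second piece is $\le\tfrac{32L}{\lambda}\defect(\tau)$. Combining yields $\defect(\tau')=O(L/\lambda)\defect(\tau)$.

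The main obstacle is tracking the constants to hit the claimed $\tfrac{16L}{\lambda}$. The $L$ factor is essentially forced by this strategy: the weights $\tfrac{1}{m|V_i|}$ in $\defect(\tau')$ are heterogeneous in $|V_i|$, while \cref{cor:round-to-pvm} returns a bound uniform in the vertex $i\in[n_x]$, so the loss $\pi(x)\ge n_x/(mL)$ must be absorbed. Reaching the sharp constant $\tfrac{16L}{\lambda}$, rather than the $\approx\tfrac{36L}{\lambda}$ produced by the naive accounting above, likely requires a single well-chosen triangle-inequality split (or a tighter rounding that is aware of the arity weights) so that neither the original-constraint piece nor the rounding piece loses an extra factor of~$2$.
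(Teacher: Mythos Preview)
Your approach is essentially identical to the paper's: build the new tracial state via a $*$-homomorphism that keeps the constraint-algebra factors intact and replaces the free variable factor by the rounded PVMs produced from \cref{cor:round-to-pvm}, then bound $\defect(\tau')$ using \cref{lem:c-v-inter} and a single triangle-inequality split through $\varphi(\Pi_a(\sigma'(x_i)))$. The only difference is bookkeeping. To reach the constant $\tfrac{16L}{\lambda}$, the paper (i) works throughout with the per-variable quantities $\varepsilon_x$ (so that the rounding bound reads $\tfrac{1}{n_x}\sum_{i,a}\|\cdot\|_\rho^2\le\tfrac{4}{\lambda}\varepsilon_x$ directly) rather than passing through your $\mathcal{E}_x$, which avoids one extra factor of $2$; and (ii) does \emph{not} bound the two pieces separately against $\defect(\tau)$ but keeps them as $4A+\tfrac{16L}{\lambda}B$, where $A=\sum_i\tfrac{\pi(i)}{4}\varepsilon_i$ and $B=\sum_x\tfrac{\pi(x)}{4}\varepsilon_x$ are the original-constraint and equality parts of $\defect(\tau)$, and then uses $4\le\tfrac{16L}{\lambda}$ to get $4A+\tfrac{16L}{\lambda}B\le\tfrac{16L}{\lambda}(A+B)=\tfrac{16L}{\lambda}\defect(\tau)$.
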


%Via Lemma 3.6 from \cite{culfmastel24}, any $\BCSMIP^\ast_{unif}$ protocol can be transformed to the corresponding $c-v$ game, which corresponds to an algebra of the form $\mc{A}_{c-v}(S,\mbbm{u}_m)$.

\begin{proof}
    Write $\pi=\mbbm{u}_m$. Write $\varepsilon_i=\frac{1}{|V_i'|}\sum_{x_i\in V_i',a}\norm{\Pi_a(\sigma_i(x_i))-\Pi_a(\sigma'(x_i))}_\tau^2$ and $$\varepsilon_x=\frac{1}{2d n_x}\sum_{\substack{i,j.\;x\in V_i\cap V_j,\\\{r_x(i),r_x(j)\}\in E_{n_x}}}\sum_{a}\norm{\Pi_a(\sigma_{x,i,j}(x_i))-\Pi_a(\sigma'(x_i))}_\tau^2+\norm{\Pi_a(\sigma_{x,i,j}(x_j))-\Pi_a(\sigma'(x_j))}_\tau^2.$$ We have $\defect(\tau)=\sum_i\frac{\pi(i)}{4}\varepsilon_i+\sum_x\frac{\pi(x)}{4}\varepsilon_x$. Since $C_{x,i,j}=C_=$ is the equality constraint between $x_i$ and $x_j$, $\sigma_{x,i,j}(x_i)=\sigma_{x,i,j}(x_j)$ 
    %\tnote{This is probably stupid but I don't understand why this is true. Aren't now $x_i$, $x_j$ distinct elements of the algebra $\mathcal{A}(\{x_i,x_j\}, C_{=})$ and $\sigma_{x,i,j}$ just the inclusion map into the algebra $\mathcal{A}_{c-v}(S\lvert_G)$ so that they should stay distinct inside $\mathcal{A}_{c-v}(S\lvert_G)$ as well?}
    so that
    $$\frac{1}{d n_x}\sum_{\substack{i,j.\;x\in V_i\cap V_j,\\\{r_x(i),r_x(j)\}\in E_{n_x}}}\sum_{a}\norm{\Pi_a(\sigma'(x_i))-\Pi_a(\sigma'(x_j))}_\tau^2\leq 4\varepsilon_x.$$
    Let $\varphi:\mc{A}_{c-v}(S|_G,\mbbm{u}_{m}|_G)\rightarrow\mc{M}$ be a $\ast$-representation and $\rho:\mc{M}\rightarrow\C$ be a state such that $\tau=\rho\circ\varphi$ is the GNS representation of $\tau$. Using \Cref{cor:round-to-pvm}, there exists an order-$k$ unitary $T_x\in\mc{M}$ such that
    $$\frac{1}{n_x}\sum_{i.x\in V_i}\sum_a\norm{\varphi(\Pi_a(\sigma'(x_i)))-\Pi_a(T_x)}_\rho^2\leq\frac{4}{\lambda}\varepsilon_x.$$
    Take $\chi:\mc{A}_{c-v}(S,\mbbm{u}_m)\rightarrow\mc{M}$ to be the $\ast$-homomorphism defined via $\chi(\sigma_i(x))=\varphi(\sigma_i(x_i))$ and $\chi(\sigma'(x))=T_x$. Take $\tau'=\rho\circ\chi$. Using the fact that $\pi$ is uniform, note that $\pi(i)=\frac{1}{m}$ and $\frac{n_x}{Lm}\leq\pi(x)\leq\frac{n_x}{m}$. Then, the defect
    \begin{align*}
        \defect(\tau')&=\sum_{i}\frac{\pi(i)}{2|V_i|}\sum_{x\in V_i,a}\norm{\Pi_a(\sigma_i(x))-\Pi_a(\sigma'(x))}_{\tau'}^2\\
        &=\sum_{i}\frac{\pi(i)}{2|V_i|}\sum_{x\in V_i,a}\norm{\varphi(\Pi_a(\sigma_i(x_i)))-\Pi_a(T_x)}_\rho^2\\
        &\leq\sum_{i}\frac{\pi(i)}{|V_i|}\sum_{x\in V_i,a}\norm{\varphi(\Pi_a(\sigma_i(x_i)))-\varphi(\Pi_a(\sigma'(x_i)))}_\rho^2+\norm{\varphi(\Pi_a(\sigma'(x_i)))-\Pi_a(T_x)}_\rho^2\\
        &=\sum_i\pi(i)\varepsilon_i+\sum_x\sum_{i.\;x\in V_i}\frac{\pi(i)}{|V_i|}\sum_a\norm{\varphi(\Pi_a(\sigma'(x_i)))-\Pi_a(T_x)}_\rho^2\\
        &\leq\sum_i\pi(i)\varepsilon_i+\sum_x\frac{1}{m}\sum_{i.\;x\in V_i}\sum_a\norm{\varphi(\Pi_a(\sigma'(x_i)))-\Pi_a(T_x)}_\rho^2\\
        &\leq\sum_i\pi(i)\varepsilon_i+\frac{4}{\lambda}\sum_x\frac{n_x}{m}\varepsilon_x\\
        &\leq\sum_i\pi(i)\varepsilon_i+\frac{4L}{\lambda}\sum_x\pi(x)\varepsilon_x\leq\frac{16L}{\lambda}\defect(\tau).\qedhere
    \end{align*}
\end{proof}
%\tnote{Shouldn't it be 16 at the very end? $\frac{16L}{\lambda}\defect(\tau)=\frac{16L}{\lambda}\sum_i\frac{\pi(i)}{2}\varepsilon_i+\frac{8L}{\lambda}\sum_x\pi(x)\varepsilon_x$. Why do we actually have $\frac{\pi(i)}{2}$ and not just $\pi(i)$ in the defect? This is related to my comment in proof of Lemma 10.}

\begin{proof}[Proof of \cref{thm:3sat-5}]
    %From \cref{thm:uniformmarginals} we have that\\ $\BCSMIP^\ast_{unif}(\poly(n),O(1))_{1,O(1)}=\RE$. By the classical homomorphism and subdivision transformations in~\cite{MS24,culfmastel24}, there is an $s>0$, such that $\threeSAT^\ast_{unif}(\poly(n),O(1))_{1,s}=\RE$. Now, let $B=(X,\{(V_i,C_i)\}_{i=1}^{m})$ be a 3SAT instance and let $\pi$ be a probability distribution on $[m]\times[m]$ with uniform marginals. Then, by \cite[Lemma 3.7]{culfmastel24}, there is a $6$-homomorphism $\mc{A}_{c-c}(B,\pi)\rightarrow\mc{A}_{c-v}(B,\mbbm{u}_m)$. 

    Due to \cref{lem:padded-to-3sat}, there is a polynomial-time reduction from the $\exp$-padded halting problem, an $\RE$-complete problem, to $\mathrm{CSP}_{c-v}(\mathrm{3SAT})_{1,s}^\ast$ for some $s\in(0,1)$. Let $B=(X,\{(V_i,C_i)\}_{i=1}^{m})$ be a 3SAT instance and let $G=(G_n=([n],E_n))_n$ be a $(d,\lambda)$-expander family. Next, using \cref{thm:replacement-soundness}, for every trace $\tau$ on $\mc{A}_{c-v}(B|_G,\mbbm{u}_m|_G)$, there exists a trace $\tau'$ on $\mc{A}_{c-v}(B,\mbbm{u}_m)$ with defect $\defect(\tau')\leq\frac{48}{\lambda}\defect(\tau)$. We have $\mbbm{u}_m|_G(i)=\frac{1}{2m}$ and $\frac{1}{6dm}\leq\mbbm{u}_m|_G(x,i,j)\leq\frac{1}{2dm}$, hence we can replace $\mbbm{u}_m|_G$ by the uniform distribution on the $m'$ constraints of $B|_G$ while incurring only a constant penalty. Explicitly, there exists a constant $C$ such that for every trace $\tau$ on $\mc{A}_{c-v}(B|_G,\mbbm{u}_{m'})$, there exists a trace $\tau'$ on $\mc{A}_{c-v}(B,\mbbm{u}_m)$ with defect $\defect(\tau')\leq C\defect(\tau)$. Consider the family of graphs $H=(C_n)_n$ where $C_n$ is the cycle on $n$ vertices. By \cref{lem:cycle-expansion}, $C_n$ is a $(n,2,\tfrac{8}{n^2})$ expander, so $H$ is not an expander family. However, since each variable in $B|_G$ is contained in at most $d+1$ constraints, $H$ behaves like an expander family with $\lambda=\frac{8}{(d+1)^2}$ with respect to $B|_G$. As such, using \cref{thm:replacement-soundness} again, for every trace $\tau$ on $\mc{A}_{c-v}(B|_G|_H,\mbbm{u}_{m'}|_H)$, there exists a trace $\tau'$ on $\mc{A}_{c-v}(B,\mbbm{u}_m)$ with defect $\defect(\tau')\leq6(d+1)^2C\defect(\tau)$. To finish, note that $B|_G|_H$ has 3CNF or equality constraints. Using \cref{lem:eq-to-3sat}, we can replace all the equality constraints by gadgets built out of 3CNF constraints. As such, there exists a 3SAT-5 instance $B'$, a probability distribution and a constant $C'$ such that for every trace $\tau$ on $\mc{A}_{c-v}(B',\pi')$, there exists a trace $\tau'$ on $\mc{A}_{c-v}(B,\pi)$ such that $\defect(\tau')\leq C'\defect(\tau)$. 
    %Hence we can relate synchronous values of arbitrary $\threeSAT^\ast_{unif}$ instances to 3SAT-5 instances, giving the result.
\end{proof}

\bibliographystyle{alpha}
\bibliography{bib}
\begin{appendices}

\appendix

\section{Proof of weak expansion}\label{sec:weak-expansion}

\begin{lemma}[Weak-Expansion \cite{guruswami-bypassing}]\label{lem:weak-expansion}
    Let $\Gamma$ be a left- and right-regular bipartite graph. Define $G=(V,E)$ to be the graph on the right vertices of $\Gamma$ with an edge between $x$ and $y$ for each left vertex they are both connected to in $\Gamma$. Then, for any $\delta > 0$ and vertex subset $V'\subseteq V$ such that $|V'| = \delta\cdot |V|$, the number of edges between the vertices in $V'$ is at least $\delta^2 |E|$.
\end{lemma}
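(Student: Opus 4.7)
The plan is a short double-counting plus Cauchy--Schwarz argument. First I would fix notation: let $U$ denote the left vertex set of $\Gamma$, let $d_L$ be the (common) left degree, and let $d_R$ be the (common) right degree, so that counting the edges of $\Gamma$ two ways gives the identity $|U|d_L = |V|d_R$. By the definition of $G$, each left vertex $u\in U$ contributes one edge of $G$ for every (ordered) pair of its neighbors in $\Gamma$, i.e.\ $d_L^2$ edges (treating $G$ as a multigraph, which does not affect the inequality since it is what defines $|E|$). Hence $|E|=|U|d_L^2$, and the number of edges of $G$ having both endpoints in $V'$ is exactly $\sum_{u\in U}|N(u)\cap V'|^2$, where $N(u)\subseteq V$ denotes the neighborhood of $u$ in $\Gamma$.

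Next I would compute the first moment of $|N(u)\cap V'|$ by double counting the edges of $\Gamma$ with right endpoint in $V'$: on one hand this count equals $\sum_{u\in U}|N(u)\cap V'|$, and on the other hand it equals $\sum_{x\in V'}\deg_\Gamma(x) = |V'|d_R = \delta|V|d_R = \delta|U|d_L$, where the last equality uses the bi-regularity identity $|V|d_R = |U|d_L$.

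Finally, I would apply the Cauchy--Schwarz inequality in the form $\sum_{u\in U}|N(u)\cap V'|^2 \;\geq\; \tfrac{1}{|U|}\bigl(\sum_{u\in U}|N(u)\cap V'|\bigr)^2$. Substituting the first-moment computation yields $\sum_{u\in U}|N(u)\cap V'|^2 \geq \tfrac{1}{|U|}(\delta|U|d_L)^2 = \delta^2 |U|d_L^2 = \delta^2|E|$, which is exactly the stated bound.

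The argument is elementary and I do not foresee any substantive obstacle. The only mild care needed is to be explicit that each common left neighbor of $x,y\in V$ contributes a separate edge of $G$ (so that $|E|=|U|d_L^2$ and edges within $V'$ count as $\sum_u|N(u)\cap V'|^2$ without combinatorial corrections); once this bookkeeping is fixed, bi-regularity supplies the first moment and Cauchy--Schwarz supplies the second-moment lower bound in one line each.
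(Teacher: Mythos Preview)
Your proposal is correct and matches the paper's argument essentially line for line: the paper also computes $|E|=|U|d_L^2$ and $|E'|=\sum_u|N(u)\cap V'|^2$, obtains the first moment $\delta$ by the same double count, and then invokes Jensen's inequality for $x\mapsto x^2$ (equivalently, your Cauchy--Schwarz step) to conclude $|E'|/|E|\geq\delta^2$. The only cosmetic difference is that the paper normalizes by writing $p_u=|N(u)\cap V'|/d_L$ before applying Jensen.
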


\begin{proof}
    To simplify notation, let us denote the set of left vertices $U$, the left degree as $d_U$, the right degree as $d_V$, and the neighbourhood of $u\in U$ in $\Gamma$ as $N_u\subseteq V$. For some $u\in U$, let $p_u$ be the fraction of neighbours of $U$ that are in $V'$, \emph{i.e.} 
    \begin{align*}
        p_u=\frac{\abs{N_u \cap V'}}{\abs{N_u}} = \frac{\abs{N_u \cap V'}}{d_U}.
    \end{align*}
    Now the number of edges from $V'$ to $U$ is given by $\abs{V'} d_V$. On the other hand, the number of edges from $V'$ to $U$ can also be written as $d_U \sum_{u\in U} p_u$. Thus we have $\sum_{u\in U} p_u =\abs{V'} d_V / d_U$. Therefore we have
    \begin{align*}
        \expect_u p_u = \frac{1}{\abs{U}} \sum_u p_u = \frac{\abs{V'} d_V }{\abs{U} d_U } = \frac{\delta \abs{V} d_V }{\abs{U} d_U } =\delta,
    \end{align*}
    where we used the assumption that $|V'| = \delta\cdot |V|$ and that $\abs{V} d_V=\abs{U} d_U$.

    We note that by construction the total number of edges in $G$ is given by $\abs{E}= \sum_u \abs{N_u}^2 = \abs{U} d_U^2$ as we put an edge between the neighbours of every $u\in U$, including self-loops. Similarly the number of edges between vertices in $V'$ is given by $\abs{E'}=\sum_u \abs{N_u \cap V'}^2$. Then we have
    \begin{align*}
        \frac{\abs{E'}}{\abs{E}} = \frac{\sum_u \abs{N_u \cap V'}^2}{\abs{U} d_U^2} = \expect_u p_u^2 \geq (\expect_u p_u)^2 =\delta^2,
    \end{align*}
    where we used Jensen's inequality.
\end{proof}

\section{Conditional Linear Distributions and Uniform Marginals}\label{sec:uniformmarginals}

In this section, we prove that there is a reduction from the halting problem to a BCS-$\MIP^*$ protocol with \emph{uniform marginals}. The uniform marginals are another way to guarantee a reduction from the halting problem to \cref{thm:replacement-soundness}. Before we begin, we require several definitions from \cite{Ji2021quantum}.

\begin{definition}[\cite{ji_mip_re}]
    Let $\mbb{F}$ be a finite field and $V$ be a vector space over $\mbb{F}$. We define conditional linear functions recursively as follows. A \emph{level $1$ conditional linear function} on $V$ is a linear map $L:V\rightarrow V$. Now let $k>1$ and suppose $V$ has the direct sum decomposition $V=V_1\oplus V_2\oplus\ldots\oplus V_k$. A \emph{level $k$ conditional linear function} is a map $L:V\rightarrow V$ such that $$L(x_1\oplus\ldots\oplus x_k)=L_1(x_1)\oplus L_{>1,L_1(x_1)}(x_2\oplus\ldots\oplus x_k),$$ where $L_1:V_1\rightarrow V_1$ is a linear map and for each $u\in V_1$, $L_{>1,u}:V_2\oplus\ldots\oplus V_k\rightarrow V_2\oplus\ldots\oplus V_k$ is a level $k-1$ conditional linear map.
\end{definition}

A conditional linear map $L$ can be fully described by a direct sum decomposition $V=V_1\oplus\ldots\oplus V_k$ and a set of linear maps $L_1,L_{2,u_1},L_{3,u_1,u_2},\ldots$, where
$$L(x_1\oplus\ldots\oplus x_k)=L_1(x_1)\oplus L_{2,L_1(x_1)}(x_2)\oplus L_{3,L_1(x_1),L_{2,L_1(x_1)}(x_2)}(x_3)\oplus\ldots.$$

\begin{definition}[\cite{ji_mip_re}]
    Let $V$ be a vector space over a finite field $\mbb{F}$. A \emph{conditional linear distribution} on $V\times V$ is a probability distribution $\pi$ on $V\times V$ such that there exist conditionally linear functions $L^A,L^B:V\rightarrow V$ such that $$\pi(x,y)=\frac{1}{|V|}\abs*{\!\set*{z\in V}{(x,y)=(L^A(z),L^B(z))}}.$$
    To specify the conditional linear functions, we write $\pi=\pi_{L^A,L^B}$
\end{definition}

We say that a conditional linear distribution $\pi_{L^A,L^B}$ is \emph{efficiently sampleable} if $L^A$ and $L^B$ are efficiently computable.

\begin{lemma}\label{lem:computable}
    Let $\pi=\pi_{L^A,L^B}$ be an efficiently sampleable conditional linear distribution. Then, $\pi(x,y)$ is efficiently computable, on input $(x,y)\in V\times V$. Similarly, on input $x\in V$, $\pi_A(x)=\sum_y\pi(x,y)$ and $\pi_B(x)=\sum_y\pi(y,x)$ are efficiently computable.
\end{lemma}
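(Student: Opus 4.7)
The plan is to show that for every fixed $x \in V$, the preimage $L^{A,-1}(x) \subseteq V$ is an affine subspace whose defining equations can be written down in time polynomial in $\log|V|$ using the efficient computability of $L^A$. Once this is done for both $L^A$ and $L^B$, the three quantities of interest reduce to counting points in affine subspaces over $\mathbb{F}$: one has $\pi_A(x) = |L^{A,-1}(x)|/|V|$, $\pi_B(x) = |L^{B,-1}(x)|/|V|$, and $\pi(x,y) = |L^{A,-1}(x) \cap L^{B,-1}(y)|/|V|$, and the cardinality of an affine subspace (and of an intersection of two affine subspaces) given by linear equations is computed by standard Gaussian elimination.

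The main step is the recursive extraction of $L^{A,-1}(x)$. Writing $V = V_1 \oplus \cdots \oplus V_k$ for the decomposition associated with $L^A$ and $x = x_1 \oplus \cdots \oplus x_k$, the condition $L^A(z) = x$ unfolds into the system $L^A_1(z_1) = x_1$ and $L^A_{i,\,x_1,\ldots,x_{i-1}}(z_i) = x_i$ for $i = 2, \ldots, k$. The key observation is that once $z_1, \ldots, z_{i-1}$ are constrained to have images $x_1, \ldots, x_{i-1}$ under the lower levels, the conditioning parameter in the $i$-th sub-map is \emph{forced} to be $(x_1, \ldots, x_{i-1})$, so the $i$-th equation becomes an honest linear equation in $z_i$ alone. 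Consequently,
\[
  L^{A,-1}(x) = \prod_{i=1}^k \bigl(L^A_{i,\,x_1,\ldots,x_{i-1}}\bigr)^{-1}(x_i),
\]
a direct product of affine subspaces of the $V_i$. To realize each linear map $L^A_{i,\,x_1,\ldots,x_{i-1}}$ concretely as a matrix, one proceeds iteratively: after fixing particular solutions $\bar z_j \in V_j$ with $L^A_j(\bar z_j) = x_j$ for $j < i$, evaluate $L^A$ on inputs of the form $(\bar z_1, \ldots, \bar z_{i-1}, e, 0, \ldots, 0)$ for $e$ ranging over a basis of $V_i$ and read off the $V_i$-component of the output; this recovers the matrix of $L^A_{i,\,x_1,\ldots,x_{i-1}}$. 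Then use Gaussian elimination to test whether $x_i$ lies in its image (returning $0$ if not), extract a particular solution $\bar z_i$ and the kernel of the matrix otherwise. Each of the $k$ levels costs a polynomial number of oracle calls to $L^A$ plus polynomial-time linear algebra over $\mathbb{F}$.

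Applying the same procedure to $L^B$ yields an affine description of $L^{B,-1}(y)$, so $|L^{A,-1}(x) \cap L^{B,-1}(y)|$ is read off as the size of the solution space of the combined linear system. The dimensions of $L^{A,-1}(x)$ and $L^{B,-1}(x)$ directly give $\pi_A(x)$ and $\pi_B(x)$, completing the computation. The main conceptual point, and the only one that requires care, is that the definition of a conditionally linear function only guarantees each $L^A_{i,u_1,\ldots,u_{i-1}}$ is linear for a \emph{fixed} choice of $(u_1,\ldots,u_{i-1})$; the extraction step above shows that efficient black-box access to $L^A$ already suffices to recover the relevant sub-map along the unique chain $(x_1, x_2, \ldots)$ demanded by the input, so no separate assumption of an efficient description of the full family of sub-maps is needed.
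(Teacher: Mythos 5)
Your proposal is correct and follows essentially the same route as the paper: unfold the conditional linear structure level by level to describe $(L^A)^{-1}(x)$ and $(L^B)^{-1}(y)$ as affine subspaces, then count points in them and in their intersection by linear algebra over $\F$. The only difference is that you spell out how to recover the matrices of the sub-maps $L^A_{i,x_1,\ldots,x_{i-1}}$ from black-box evaluations of $L^A$, a detail the paper leaves implicit in the phrase ``efficient to compute and describe.''
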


\begin{proof}
    Let $V=V_1\oplus\ldots\oplus V_k$ and $V=W_1\oplus\ldots\oplus W_l$ be the direct sum decompositions corresponding to $L^A$ and $L^B$, respectively, and $L^A_1,L^A_{2,u_1},\ldots$ and $L^B_1,L^B_{2,u_1},\ldots$ be the linear functions. First, we show how to compute the pre-image $(L^A)^{-1}(x)$. Decompose $x=x_1\oplus\ldots\oplus x_k$. Then, for $z=z_1\oplus\ldots\oplus z_k$, if $x=L^A(z)$, then $L^A_1(z_1)=x_1$, $L^A_{2,x_1}(z_2)=x_2$, $L^A_{3,x_1,x_2}(z_3)=x_3$, and so on. For $i=1,\ldots,k$, let $U_{i}^A=\set*{z_{i}\in V_i}{L^A_{i,x_1,x_2,\ldots,x_{i-1}}(z_{i})=x_i}$. These are affine linear subspaces, so efficient to compute and describe. Hence $(L^A)^{-1}(x)=U^A_1\oplus\ldots\oplus U^A_k$ is efficiently computable. Similarly,  $(L^B)^{-1}(x)=U^B_1\oplus\ldots\oplus U^B_l$, where $U_{i}^B=\set*{z_{i}\in W_i}{L^B_{i,y_1,y_2,\ldots,y_{i-1}}(z_{i})=y_i}$. As such, $$\pi(x,y)=\frac{1}{|V|}\abs*{(U^A_1\oplus\ldots\oplus U^A_k)\cap(U^B_1\oplus\ldots\oplus U^B_l)}.$$
    $(U^A_1\oplus\ldots\oplus U^A_k)\cap(U^B_1\oplus\ldots\oplus U^B_l)$ is the intersection of two affine subspaces, so it can be described as the solution space of a linear system. As such, it can be efficiently computed, so the number of elements can be computed efficiently.

    Further, since $\pi_A(x)=\frac{1}{|V|}\abs*{(L^A)^{-1}(x)}=\frac{1}{|V|}\abs*{U^A_1\oplus\ldots\oplus U^A_k}$ and $\pi_B(y)=\frac{1}{|V|}\abs*{U^B_1\oplus\ldots\oplus U^B_l}$, these are also efficiently computable.
\end{proof}

\begin{definition}[\cite{ji_mip_re}]\label{def:t_type}
    Let $V$ be a vector space over a finite field $\mbb{F}$ and $T$ be a finite set (called the set of types). A \emph{$T$-typed conditional linear distribution} on $V\times V$ is a probability distribution $\pi$ on $(T\times V)\times(T\times V)$ such that there exists a graph $G=(T,E)$ and conditional linear functions $L^A_t,L^B_t:V\rightarrow V$ for all $t\in T$ such that
    \begin{align*}
        \pi((t,x),(u,y))=\begin{cases}\frac{1}{|E_\rightarrow||V|}\abs*{\!\set*{z\in V}{(x,y)=(L^A_t(z),L^B_u(z))}}&\{t,u\}\in E\\0&\text{ otherwise}\end{cases},
    \end{align*}
    where $E_\rightarrow=\set*{(t,u)}{\{t,u\}\in E}$ is the induced direct edge set. To specify the graph and conditional linear functions, we write $\pi=\pi_{G,\{L^A_t\},\{L^B_t\}}$.
\end{definition}

We remark that graphs with self-loops are allowed in \cref{def:t_type}.

\begin{lemma}\label{lem:typed-computable}
    Let $\pi=\pi_{G,\{L^A_t\},\{L^B_t\}}$ be an efficiently sampleable $T$-typed conditional linear distribution (where $T$ is constant size). Then, on input $((t,x),(u,y))\in(T\times V)^2$, $\pi((t,x),(u,y))$, $\pi_A(t,x)$, and $\pi_B(u,y)$ are efficiently computable. 
\end{lemma}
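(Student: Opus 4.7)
The plan is to reduce the computations for the $T$-typed case to the untyped case handled in \cref{lem:computable}, treating the type information as a small amount of classical bookkeeping that is manageable because $T$ is of constant size.

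First I would handle $\pi((t,x),(u,y))$ directly. On input $((t,x),(u,y))$, I check whether $\{t,u\} \in E$ (possible in constant time since $T$, and hence $G$, are fixed-size). If not, output $0$. If so, invoke the pre-image computation from \cref{lem:computable} applied to the fixed conditional linear functions $L^A_t$ and $L^B_u$: compute the affine subspaces $(L^A_t)^{-1}(x)$ and $(L^B_u)^{-1}(y)$ layer-by-layer using the decompositions and linear pieces of $L^A_t$ and $L^B_u$, intersect them by solving the resulting linear system, and count the solutions. Dividing by $|E_\rightarrow||V|$ (both of which are efficiently known since $|E_\rightarrow|\leq |T|^2$) yields $\pi((t,x),(u,y))$. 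The step amounts to rerunning \cref{lem:computable} after selecting the correct pair of conditional linear functions keyed by $(t,u)$.

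For the marginals, the key observation is that summing over the ``other'' variable $y$ collapses the counting of joint preimages into a single preimage count, yielding a closed form. Explicitly,
\begin{align*}
    \pi_A(t,x) &= \sum_{u,y} \pi((t,x),(u,y)) \\
    &= \sum_{u : (t,u) \in E_\rightarrow} \frac{1}{|E_\rightarrow||V|}\sum_{y\in V} \bigabs{\set{z \in V}{L^A_t(z)=x,\, L^B_u(z)=y}} \\
    &= \frac{d_G(t)}{|E_\rightarrow||V|}\,\bigabs{(L^A_t)^{-1}(x)},
\end{align*}
where $d_G(t)=|\set{u}{(t,u)\in E_\rightarrow}|$. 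The final quantity is efficiently computable: $d_G(t)$ and $|E_\rightarrow|$ come from the constant-size graph $G$, and $|(L^A_t)^{-1}(x)|$ is efficiently computable by \cref{lem:computable} applied to $L^A_t$. The computation for $\pi_B(u,y)$ is symmetric, using instead $L^B_u$ and the in-degree of $u$ in $G$.

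I do not expect any genuine obstacle here: the only non-trivial arithmetic is the preimage-and-intersection counting, which is exactly what \cref{lem:computable} already handles, and all the additional work introduced by the typed structure (checking edges of $G$, summing over the constant-size neighborhood of $t$ or $u$, dividing by $|E_\rightarrow|$) is bounded by $\poly(|T|)$ and therefore constant. The main thing to be careful about is the bookkeeping around directed edges and self-loops in the definition of $E_\rightarrow$, so that $d_G(t)$ is counted consistently with how $|E_\rightarrow|$ normalizes $\pi$; once this is set up, the marginal formula is a one-line calculation.
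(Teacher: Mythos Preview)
Your proposal is correct and follows essentially the same approach as the paper: the paper's proof is a single line invoking \cref{lem:computable} applied to the (untyped) conditional linear distribution $\pi_{L^A_t,L^B_u}$, and your argument simply unpacks this reduction, including the explicit marginal formula $\pi_A(t,x)=\frac{d_G(t)}{|E_\rightarrow||V|}\,|(L^A_t)^{-1}(x)|$ that the paper records (with $N_G(t)$ in place of your $d_G(t)$) immediately after the lemma.
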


\begin{proof}
    The lemma follows by applying \cref{lem:computable} to the conditional linear distribution $\pi_{L^A_t,L^B_u}$.
\end{proof}

\begin{definition}
    Let $\pi=\pi_{L^A,L^B}$ be a conditional linear distribution. We define the \emph{oracularization} of $\pi$ as the $T$-typed conditional linear distribution $\pi_{G,\{L^A_t\},\{L^B_t\}}$ where $T=\{A,B,O\}$, $G=(T,E)$ is the graph with $E=\set*{\{O,O\},\{A,O\},\{B,O\}}$, and $L^A_A=L^B_A=L^A$, $L^A_B=L^B_B=L^B$, and $L^A_O=L^B_O=\Id$.
\end{definition}

This definition combines the notions of oracularization from \cite{ji_mip_re} and \cite{MS24}. From the former, it follows the model of conditional linear distributions, while the question distribution is the same as the latter, with slightly different weighting. In particular, synchronous games give rise to BCS games under this oracularization.

% \begin{notation}
%     For a graph $G=(T,E)$ and $t\in T$, we write $N_G(t)=\set*{u}{\{u,t\}\in E}$ for the neighbourhood of $t$, noting that $t\in N_G(t)$ iff $G$ has a self-loop at $t$.
% \end{notation}

For a $T$-typed conditional linear distribution $\pi=\pi_{G,\{L^A_t\},\{L^B_t\}}$, note that
\begin{align*}
    \pi_A((t,x))=\frac{N_G(t)}{|E_\rightarrow||V|}|(L_t^A)^{-1}(x)|,
\end{align*}
and identically $\pi_B((t,x))=\frac{N_G(t)}{|E_\rightarrow||V|}|(L_t^B)^{-1}(x)|$

\begin{definition}
    Let $\pi=\pi_{G,\{L^A_t\},\{L^B_t\}}$ be a $T$-typed conditional linear distribution. For $(t,x)\in T\times V$, define $N^A_{t,x}=N_G(t)|(L_t^A)^{-1}(x)|$ and $N^B_{t,x}=N_G(t)|(L_t^B)^{-1}(x)|$. The \emph{padding} of $\pi$ is the distribution $\pi^{pad}$ on $X\times Y$, where $X=\bigcup_{t\in T,x\in V}\{(t,x)\}\times[N^A_{t,x}]$ and $Y=\bigcup_{t\in T,x\in V}\{(t,x)\}\times[N^B_{t,x}]$ such that $\pi^{pad}((t,x,m),(u,y,n))=\frac{\pi((t,x),(u,y))}{N^A_{t,x}N^B_{u,y}}$
\end{definition}

The important property of the padded distribution $\pi^{pad}$ is that it has uniform marginals. In fact,
\begin{align*}
    \pi^{pad}_A((t,x,n))=\sum_{n=1}^{N^B_{u,y}}\frac{\pi_A((t,x))}{N^A_{t,x}N^B_{u,y}}=\frac{1}{|E_\rightarrow||V|},
\end{align*}
and identically $\pi^{pad}_B((u,y,n))=\frac{1}{|E_\rightarrow||V|}$.

\begin{lemma}\label{lem:padded-efficient}
    Suppose that $\pi=\pi_{G,\{L^A_t\},\{L^B_t\}}$ is an efficiently sampleable $T$-typed conditional linear distribution. Then, $\pi^{pad}$ is efficiently sampleable.
\end{lemma}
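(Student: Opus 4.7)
The plan is to sample from $\pi^{pad}$ in two stages: first sample a pair $((t,x),(u,y))$ from $\pi$ using its assumed efficient sampler, and then independently sample $m$ uniformly from $[N^A_{t,x}]$ and $n$ uniformly from $[N^B_{u,y}]$ to output $((t,x,m),(u,y,n))$. A direct computation shows this procedure reproduces $\pi^{pad}$: the probability of outputting $((t,x,m),(u,y,n))$ is $\pi((t,x),(u,y))\cdot \frac{1}{N^A_{t,x}}\cdot\frac{1}{N^B_{u,y}}$, which is exactly $\pi^{pad}((t,x,m),(u,y,n))$ by definition.

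The only nontrivial ingredient is that the integers $N^A_{t,x}=N_G(t)\,|(L_t^A)^{-1}(x)|$ and $N^B_{u,y}=N_G(u)\,|(L_u^B)^{-1}(y)|$ can be computed efficiently from their labels. The degree $N_G(t)$ of a vertex in the finite type graph $G$ is a constant that can be read off directly. The preimage sizes $|(L_t^A)^{-1}(x)|$ and $|(L_u^B)^{-1}(y)|$ can be handled exactly as in \cref{lem:computable}: decomposing $x=x_1\oplus\cdots\oplus x_k$ with respect to the direct sum decomposition associated to $L^A_t$, the preimage $(L^A_t)^{-1}(x)$ equals the direct sum $U_1\oplus\cdots\oplus U_k$ of affine subspaces $U_i=\set*{z_i\in V_i}{L^A_{t,i,x_1,\ldots,x_{i-1}}(z_i)=x_i}$. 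Each $U_i$ is the solution set of an affine linear system over the finite field $\mbb{F}$, so its dimension, and therefore its cardinality $|\mbb{F}|^{\dim U_i}$, can be computed by Gaussian elimination in polynomial time. Multiplying these cardinalities over $i$ yields $|(L^A_t)^{-1}(x)|$; the same argument handles $|(L^B_u)^{-1}(y)|$.

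Once $N^A_{t,x}$ and $N^B_{u,y}$ are in hand, uniform sampling from $[N^A_{t,x}]$ and $[N^B_{u,y}]$ is standard (each sample has bit length polynomial in $\log|V|$ since these cardinalities are bounded by $|V|\cdot|T|$). Composing these efficient subroutines in the natural two-stage manner produces an efficient sampler for $\pi^{pad}$.

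The only potential obstacle is the computation of the preimage cardinalities, but this reduces entirely to linear algebra over $\mbb{F}$ using the recursive structure of conditionally linear maps, so the argument goes through without further ingredients.
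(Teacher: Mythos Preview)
Your proposal is correct and follows essentially the same two-stage sampling approach as the paper: sample $((t,x),(u,y))$ from $\pi$, compute $N^A_{t,x}$ and $N^B_{u,y}$, then sample $m,n$ uniformly. The only cosmetic difference is that the paper obtains $N^A_{t,x}$ via the identity $N^A_{t,x}=|E_\rightarrow||V|\,\pi_A((t,x))$ and appeals to \cref{lem:typed-computable}, whereas you unpack that lemma and compute $|(L^A_t)^{-1}(x)|$ directly by Gaussian elimination; these amount to the same argument.
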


\begin{proof}
    To show this, we demonstrate how to sample from $\pi^{pad}$. Since $\pi$ is efficiently sampleable, first sample $((t,x),(u,y))$ according to $\pi$. Now, due to \cref{lem:typed-computable}, we can efficiently compute $\pi_A((t,x))$ and $\pi_B((u,y))$. But, $N_{t,x}^A=|E_\rightarrow||V|\pi_A((t,x))$ and $N_{u,y}^B=|E_\rightarrow||V|\pi_B((u,y))$. Then, sample $m$ uniformly from $[N_{t,x}^A]$ and $n$ uniformly from $[N_{u,y}^B]$, and output $((t,x,m),(u,y,n))$. By construction, this is distributed according to $\pi^{pad}$.
\end{proof}

\begin{lemma}\label{lem:padded-value}
    Let $(\pi,V)$ be a nonlocal game where $\pi=\pi_{G,\{L^A_t\},\{L^B_t\}}$ is a $T$-typed conditional linear distribution. Define the padded verifier $V^{pad}(a,b|(t,x,n),(u,y,m))=V(a,b|(t,x),(u,y))$. Then, the values of $(\pi,V)$ and $(\pi^{pad},V^{pad})$ are equal over any convex set of correlations.
\end{lemma}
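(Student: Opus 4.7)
The plan is to establish value equality by exhibiting value-preserving maps in both directions between correlations in the convex class $C$ for the two games. One direction will just copy a correlation across padding indices (since the padded verifier ignores them); the other will average out the padding indices, and this averaging is precisely where convexity of $C$ will enter.

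For the lift $p \mapsto p^{pad}$, I will set $p^{pad}(a,b|(t,x,m),(u,y,n)) := p(a,b|(t,x),(u,y))$ for every $m,n$. This correlation is automatically in $C$: Alice's and Bob's padded measurements are defined to ignore the new index, so they are the same measurements as the ones realizing $p$. The value equality $\omega(\pi^{pad},V^{pad};p^{pad}) = \omega(\pi,V;p)$ will follow from the two identities $V^{pad}(a,b|(t,x,m),(u,y,n)) = V(a,b|(t,x),(u,y))$ and $\sum_{m,n} \pi^{pad}((t,x,m),(u,y,n)) = \pi((t,x),(u,y))$, the latter being immediate from the definition of $\pi^{pad}$.

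For the descent $p^{pad} \mapsto p$, I will define
\[
p(a,b|(t,x),(u,y)) \;:=\; \frac{1}{N^A_{t,x} N^B_{u,y}} \sum_{m,n} p^{pad}(a,b|(t,x,m),(u,y,n)).
\]
To see that $p \in C$, I will write it as a convex combination of correlations $q_{\mu,\nu}$ indexed by choice functions $\mu: T\times V \to \N$ and $\nu: T\times V \to \N$ picking padding indices $\mu(t,x)\in[N^A_{t,x}]$ and $\nu(u,y)\in[N^B_{u,y}]$, setting $q_{\mu,\nu}(a,b|(t,x),(u,y)) := p^{pad}(a,b|(t,x,\mu(t,x)),(u,y,\nu(u,y)))$. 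Each $q_{\mu,\nu}$ arises from the very same strategy underlying $p^{pad}$, by pulling back Alice's and Bob's measurements along the injections $(t,x)\mapsto(t,x,\mu(t,x))$ and $(u,y)\mapsto(u,y,\nu(u,y))$, so each $q_{\mu,\nu}\in C$. Averaging uniformly over $\mu,\nu$ with product weights $\prod_{(t,x)} 1/N^A_{t,x}$ and $\prod_{(u,y)} 1/N^B_{u,y}$ gives exactly $p$, and convexity of $C$ will then place $p\in C$. The equality of values will be a short direct computation using $\pi^{pad}((t,x,m),(u,y,n)) = \pi((t,x),(u,y))/(N^A_{t,x} N^B_{u,y})$ together with $V^{pad} = V$ on padded indices.

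There is no genuine technical obstacle; the work is index bookkeeping. The one interpretive point I will need to address is what it means for $q_{\mu,\nu}$ to lie in $C$ when $C$ is presented abstractly as a convex set of correlations. For all the standard classes ($C_c$, $C_q$, $C_{qc}$, $C_q^s$, $C_{qo}$) this is automatic, since restricting a strategy to a subset of questions preserves the class, so the convex combination argument will go through uniformly.
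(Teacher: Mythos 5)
Your proposal is correct and follows essentially the same route as the paper: the lift copies the correlation across padding indices, and the descent averages over them with weights $1/(N^A_{t,x}N^B_{u,y})$, with the value equalities following from $\sum_{m,n}\pi^{pad}((t,x,m),(u,y,n))=\pi((t,x),(u,y))$. Your extra step expressing the averaged correlation as a convex combination of restrictions $q_{\mu,\nu}$ is a welcome addition — the paper's proof performs only the value computation and leaves the membership-in-$C$ argument (which is exactly where convexity is used) implicit.
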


\begin{proof}
    Let $p=\{p(a,b|(t,x),(t,y))\}$ be a correlation for $(\pi,V)$. Define the correlation $q(a,b|(t,x,m),(u,y,n))=p(a,b|(t,x),(t,y))$ for $(\pi^{pad},V^{pad})$. Then, the value
    \begin{align*}
        w(q)&=\sum_{a,b,t,x,m,u,y,n}\pi^{pad}((t,x,m),(u,y,n))V^{pad}(a,b|(t,x,m),(u,y,n))q(a,b|(t,x,m),(u,y,n))\\
        &=\sum_{a,b,t,x,m,u,y,n}\frac{\pi((t,x),(u,y))}{|N^A_{t,x}||N^B_{u,y}|}V(a,b|(t,x),(u,y))p(a,b|(t,x),(u,y))\\
        &=\sum_{a,b,t,x,u,y}\pi((t,x),(u,y))V(a,b|(t,x),(u,y))p(a,b|(t,x),(u,y))=w(p).
    \end{align*}
    For the converse, let $q=\{q(a,b|(t,x,m),(u,y,n))\}$ be a correlation for $(\pi^{pad},V^{pad})$. Define the correlation $$p(a,b|(t,x),(u,y))=\frac{1}{N^A_{t,x}N^B_{u,y}}\sum_{m,n}q((a,b|(t,x,m),(u,y,n))$$ for $(\pi,V)$. Then,
    \begin{align*}
        w(p)&=\sum_{a,b,t,x,u,y}\pi((t,x),(u,y))V(a,b|(t,x),(u,y))p(a,b|(t,x),(u,y))\\
        &=\sum_{a,b,t,x,m,u,y,n}\frac{\pi((t,x),(u,y))}{|N^A_{t,x}||N^B_{u,y}|}V(a,b|(t,x),(u,y))q(a,b|(t,x,m),(u,y,n))\\
        &=\sum_{a,b,t,x,m,u,y,n}\pi^{pad}((t,x,m),(u,y,n))V^{pad}(a,b|(t,x,m),(u,y,n))q(a,b|(t,x,m),(u,y,n))\\
        &=w(q).\qedhere
    \end{align*}
\end{proof}

\begin{definition}
    Write $\BCSMIP^\ast_{unif}(q,a)_{c,s}$ for the class of languages decidable by one-round constraint-constraint $\BCSMIP^\ast$ protocols with question length $q(n)$, answer length $a(n)$, completeness $c$, soundness $s$, such that the question distribution is symmetric with uniform marginals.
\end{definition}

Remark that a constraint-constraint game with symmetric question distribution $\pi(x,y)$ that has uniform marginals induces a constraint-variable game with uniform constraint distribution $\pi'(x)=\sum_y\pi(x,y)$. Further, the gap $1-s$ is only reduced under this transformation by a factor that is linear in the answer length (constant for the protocols we consider),

 \begin{theorem}\label{thm:uniformmarginals}
     There exists $s\in(0,1)$ such that there is a polynomial-time reduction from the halting problem to $\BCSMIP^\ast_{unif}(\polylog(n),O(1))_{1,s}$.
 \end{theorem}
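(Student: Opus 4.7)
The plan is to start from a known $\BCSMIP^\ast$ reduction for the halting problem and push its question distribution through the two transformations developed in the excerpt — oracularization and padding — to produce a symmetric constraint-constraint BCS protocol with uniform marginals, preserving polynomial question length, constant answer length, and a constant completeness-soundness gap. The starting point is \cref{thm:3sat-cm24}, tracked through to confirm that the reduction actually outputs a conditional linear question distribution $\pi_{L^A,L^B}$ on $V \times V$ over some finite-field vector space. This structural property is built into the $\MIP^\ast = \RE$ line of work (both Ji-Natarajan-Vidick-Wright-Yuen and Culf-Mastel present the sampler as a conditional linear map), and it is the reason the sampler's output can be written as an $(x,y,z)$ tuple witnessing $(x,y) = (L^A(z), L^B(z))$.

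Next I would apply the oracularization construction, producing the $\{A,B,O\}$-typed conditional linear distribution $\pi_{G, \{L^A_t\}, \{L^B_t\}}$ with edge set $\{\{O,O\},\{A,O\},\{B,O\}\}$, as defined in the excerpt. This yields a symmetric constraint-constraint BCS game in which both players share the question space $T \times V$: type $A$ and $B$ questions correspond to constraint questions, while type $O$ questions implement variable consistency checks. The soundness gap is preserved up to a constant factor by the standard analysis of oracularization from \cite{ji_mip_re, MS24}, and completeness remains $1$ since any perfect synchronous strategy for the constraint-variable game lifts to a perfect strategy for the oracularized constraint-constraint game in the usual way.

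Then I would apply padding directly. By \cref{lem:padded-efficient}, the padded distribution $\pi^{pad}$ is efficiently sampleable: one draws $((t,x),(u,y)) \sim \pi$ and then independent uniform $m \in [N^A_{t,x}]$, $n \in [N^B_{u,y}]$, all in polynomial time. By \cref{lem:padded-value}, the quantum value of the padded game equals that of the unpadded one over any convex class of correlations, so both completeness ($=1$) and soundness ($\leq s$) transfer unchanged. The padded distribution has uniform marginals $\frac{1}{|E_\rightarrow| |V|}$ on each side by the computation in the excerpt immediately following the definition of padding, and symmetry of the distribution is preserved since $G$ is symmetric and $L^A_t$ and $L^B_t$ are paired symmetrically across the types. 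For parameters: the extra index $m$ costs $O(\log(|T|\cdot|V|))$ bits, so question length stays polynomial in $n$; answer length is untouched at $O(1)$ since the padded predicate $V^{pad}$ just forwards $(a,b)$ to $V$.

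The main obstacle is Step 1 — formally verifying that the Culf-Mastel reduction produces a question distribution in the conditional linear form required by \cref{def:t_type}. This is not proved in \cite{culfmastel24} in exactly these words, so one either inspects the sampler of that construction directly or observes that it inherits the conditional linear structure from the underlying $\MIP^\ast = \RE$ protocol it reduces to. A secondary technical point is to pin down the constant factor by which the soundness gap shrinks under oracularization in this specific typed-conditional-linear model, but this follows by the same counting argument as in \cite{MS24}, since oracularization on types $A, B, O$ simulates each game round with probability $\Omega(1)$ among the three consistency checks and therefore converts a gap $1-s$ into a gap $\Omega(1-s)$.
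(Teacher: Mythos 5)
Your oracularize-then-pad pipeline is exactly the paper's argument, and Steps 2 and 3 (the typed conditional linear oracularization, then \cref{lem:padded-efficient} and \cref{lem:padded-value} to get uniform marginals while preserving efficiency and value) match the paper's proof of \cref{thm:uniformmarginals} essentially verbatim. The divergence is entirely in Step 1, and the obstacle you flag there is real, not merely a formality. The question distribution output by the reduction of \cref{thm:3sat-cm24} is obtained from the underlying $\MIP^*$ protocol by a sequence of BCS transformations (homomorphisms, subdivisions, the conversion to 3CNF gadgets), and these reweight and restructure the distribution over constraints in ways that have no reason to remain of the form $\pi_{L^A,L^B}$ required by the padding machinery. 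So "inspecting the sampler" of the 3SAT reduction is unlikely to succeed as stated.

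The paper resolves this by simply not starting there: it takes the synchronous $\MIP^*(\poly(n),O(1))_{1,1/2}$ protocol of \cite{ji_mip_re,dong2023computational}, whose question distribution \emph{is} an efficiently sampleable conditional linear distribution by construction, oracularizes that into a BCS protocol with a $T$-typed conditional linear distribution, and then pads. This suffices because \cref{thm:uniformmarginals} only claims a $\BCSMIP^*_{unif}$ protocol, not a 3SAT one; the conversion to 3SAT with uniform distribution is deferred to the subsequent corollary, where the homomorphism and subdivision transformations of \cite{MS24,culfmastel24} are applied \emph{after} uniformity has been secured. Your acknowledged fallback ("it inherits the conditional linear structure from the underlying $\MIP^*=\RE$ protocol") is the right instinct, but carried out properly it means abandoning the 3SAT starting point altogether and reordering the reduction as the paper does; as written, your Step 1 leaves an unproved structural claim on which the rest of the argument depends.
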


\begin{proof}[Proof of \cref{thm:uniformmarginals}]
    By \cite{ji_mip_re,dong2023computational,fu2025succinct}, there exists a synchronous\\ $\MIP^\ast(\polylog(n),O(1))_{1,1/2}$ protocol for the halting problem where the question distribution is an efficiently sampleable conditional linear distribution. By oracularizing, this gives rise to a $\BCSMIP^\ast(\polylog(n),O(1))_{1,s}$ protocol for the halting problem where the question distribution is an efficiently sampleable $T$-typed conditional linear distribution, for some $s\in(0,1)$. Now, using \cref{lem:padded-efficient,lem:padded-value} we can replace the question distribution by its padded distribution, while preserving efficient sampling and quantum game value. But the padded distribution has uniform marginals, and hence this gives a $\BCSMIP^\ast_{unif}(\polylog(n),O(1))_{1,s}$ protocol for the halting problem.
\end{proof}

\begin{corollary}
    There exists $s\in(0,1)$, such that there is a polynomial-time reduction from the halting problem to the problem of deciding whether the winning probability of the constraint-variable game of a succinctly-presented $\SAT$ instance with uniform probability distribution on the constraints is $1$ or $<s$.
\end{corollary}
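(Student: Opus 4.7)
The plan is to combine Theorem~\ref{thm:uniformmarginals} with two more succinctness-preserving reductions: first from constraint-constraint to constraint-variable BCS games, and then from BCS to $\SAT$. Starting from Theorem~\ref{thm:uniformmarginals}, we have a polynomial-time reduction from the halting problem to $\BCSMIP^\ast_{unif}(\poly(n),O(1))_{1,s}$ for some $s\in(0,1)$. The remark immediately preceding that theorem then converts this constraint-constraint protocol into a constraint-variable protocol: since the symmetric question distribution has uniform marginals, the induced distribution on constraints in the constraint-variable game is exactly uniform, and the completeness-soundness gap only shrinks by a factor linear in the answer length, which is $O(1)$ here. The succinct presentation passes through without issue because each step of the conversion is a local operation on the Turing machine sampling questions.

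Next I would convert the BCS to a $\SAT$ instance while preserving both uniformity of the constraint distribution and the succinct presentation. Because the answer length is $O(1)$, every BCS constraint has bounded arity, and hence can be replaced by a fixed-size Tseitin-style gadget of $3$CNF clauses using $O(1)$ fresh auxiliary variables per constraint. To keep the induced distribution uniform on the resulting $3$CNF clauses, I would pad every gadget with duplicated clauses so that all original constraints expand into the same number of output clauses (a constant depending only on the maximum arity). Since the original sampling Turing machine already runs in polynomial time and each gadget is computable in constant time from the sampled constraint, the resulting sampler for the $\SAT$ instance is again polynomial-time, giving the desired succinct presentation with uniform distribution on clauses.

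For the gap, completeness is immediate: a perfect quantum strategy for the BCS extends to a perfect quantum strategy for the transformed $\SAT$ instance by defining observables for auxiliary variables from the Tseitin encoding. For soundness, the point is that each gadget is a bounded-size local replacement, and this is exactly the kind of transformation handled by the subdivision/replacement machinery of~\cite{culfmastel24} (see e.g.\ Theorem~5.2 there, which was already invoked in the proof of \cref{lem:3sat5_commutation}, and the replacement argument of \cref{thm:replacement-soundness}). Translating the weighted-algebra defect of the $\SAT$ instance back to a defect on the original BCS loses only a multiplicative constant that depends on the gadget size and the maximum arity, so the soundness parameter degrades from $s$ to some $s'\in(0,1)$.

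The main potential obstacle is checking quantum soundness of the BCS$\to\SAT$ gadget with only constant loss: one must argue at the level of weighted $\ast$-algebras that an $\varepsilon$-perfect tracial state on the $\SAT$ constraint-variable algebra rounds to an $O(\varepsilon)$-perfect tracial state on the BCS constraint-variable algebra, handling the fresh auxiliary variables of each gadget. This should be routine given the tools of~\cite{culfmastel24,MS24}: one applies \cref{cor:round-to-pvm}-style rounding, locally on each gadget, to extract a consistent assignment to the original boolean variables from the $\SAT$ strategy, and bounds the induced defect by a sum of gadget-local defects. Once this is in place, chaining the three reductions (Theorem~\ref{thm:uniformmarginals}, constraint-constraint to constraint-variable, BCS to $\SAT$) yields the polynomial-time reduction from the halting problem claimed by the corollary.
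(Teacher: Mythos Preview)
Your proposal is correct and lands on essentially the same ingredients as the paper, but you order the two reductions differently. The paper first converts BCS to $\SAT$ at the constraint-constraint level, using the \emph{classical homomorphism} and \emph{subdivision} transformations of \cite{MS24,culfmastel24} (which are exactly the Tseitin-style gadget expansion you describe, packaged with an off-the-shelf quantum soundness bound), and only afterwards passes from constraint-constraint to constraint-variable via the $12$-homomorphism of \cite[Lemma~3.7]{culfmastel24}, which directly hands over the uniform constraint distribution. You instead go constraint-constraint $\to$ constraint-variable first (using the remark preceding \cref{thm:uniformmarginals}) and then do the gadget expansion at the constraint-variable level, with a padding trick to keep the clause distribution uniform. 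Both orderings work; the paper's order lets one quote named results with no extra argument, while yours is more self-contained but obliges you to redo the gadget soundness analysis.

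One small imprecision: the tool you want for the BCS$\to\SAT$ soundness step is not really ``\cref{cor:round-to-pvm}-style rounding'' (that corollary is about expander-based averaging of many PVMs into one), but rather the classical-homomorphism soundness bound of \cite{MS24,culfmastel24}, which says that a bounded-size local replacement of each constraint costs only a constant factor in defect. Once you invoke that, the ``main potential obstacle'' you flag disappears without any per-gadget rounding argument.
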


The proof follows by using the classical homomorphism and subdivision transformations of~\cite{MS24,culfmastel24} to reduce an arbitrary $\BCSMIP_{unif}^\ast$ instance to a succinct $\SAT$ instance presented as a constraint-constraint game with uniform marginals; and then using the $12$-homomorphism to the corresponding constraint-variable game~\cite[Lemma 3.7]{culfmastel24} to get the wanted uniform distribution.
\end{appendices}
\end{document}